\title{Strong Faithfulness for \ELH Ontology Embeddings} 
\author{Victor {Lacerda}}{University of Bergen, Norway}{victor.botelho@uib.no}{https://orcid.org/0000-0002-1317-040X}{Lacerda is supported by the NFR project ``Learning Description Logic
Ontologies'', grant number 316022, led by Ozaki.
}
\author{Ana {Ozaki}}{University of Oslo, Norway \and University of Bergen, Norway}{anaoz@uio.no}{0000-0002-3889-6207}{Ozaki is  supported by the NFR project ``Learning Description Logic
Ontologies'', grant number 316022.}
\author{Ricardo {Guimarães}}{Zivid AS, Norway}{rfguimaraes0@gmail.com}{0000-0002-9622-4142}{}
\authorrunning{V. Lacerda, A. Ozaki, and R. Guimarães} 
\keywords{Knowledge Graph Embeddings, Ontologies, Description Logic} 
\begin{document}

\maketitle

\begin{abstract}
Ontology embedding methods are powerful approaches to represent and reason over structured knowledge in various domains. One advantage of ontology embeddings over knowledge graph embeddings is their ability to capture and impose an underlying schema to which the model must conform. Despite advances, most current approaches do not guarantee that the resulting embedding respects the axioms the ontology entails. In this work, we formally prove that normalized $\ELH$ has the strong faithfulness property on convex geometric models, which means that there is an embedding that precisely captures the original ontology. We present a region-based geometric model for embedding normalized $\ELH$ ontologies into a continuous vector space. To prove strong faithfulness, our construction takes advantage of the fact that normalized $\ELH$ has a finite canonical model. We first prove the statement assuming (possibly) non-convex regions, allowing us to keep the required dimensions low. Then, we impose convexity on the regions and show the property still holds. Finally, we consider reasoning tasks on geometric models and analyze the complexity in the class of convex geometric models used for proving strong faithfulness.
\end{abstract}

\section{Introduction}
\label{sec1}

Knowledge Graphs (KGs) are a popular method for representing knowledge using triples of the form (subject, predicate, object), called \emph{facts}. 

Although public KGs, such as Wikidata \cite{Wikidata},
contain a large number of facts, they are incomplete. 
This has sparked interest in using machine learning methods to suggest plausible facts to add to the KG based on patterns found in the data. Such methods are based on knowledge graph embedding (KGE) techniques, which aim to create representations of KGs in vector spaces. By representing individuals in a vector space, these individuals can be ranked by how similar they are to each other, based on  
a similarity metric.

Their proximity in a vector space may be indicative of semantic similarity, which can be leveraged to discover new facts: if two individuals are close to each other in the embedding space, it is likely that they share a pattern of relations to other individuals. These patterns of relations can indicate of assertions not explicitly stated in the source knowledge graph.

Many attempts have been made to learn representations of knowledge graphs for use in downstream tasks \cite{Dai2020}. These methods have traditionally focused only on embedding triples (facts), ignoring the conceptual knowledge about the domain expressed using logical operators.
The former corresponds to the ``\emph{Assertion Box}''(\emph{ABox}) of the ontology, while the latter corresponds to the ``\emph{Terminological Box}'' (\emph{TBox}) part of a knowledge base, with both being quite established notions in the fields of Description Logic and Semantic Web \cite{DlIntro, SemWeb}. Embeddings that consider both types of logically expressed knowledge are a more recent phenomenon (see \cref{sec:related}), and we refer to them as \emph{ontology embeddings}, where the \emph{ontology} can have both an ABox and a TBox.
Ontology embeddings offer advantages over traditional KGEs as they exploit the semantic relationships between concepts and roles. 
This enables ontology embeddings to better capture rich and nuanced relationships between concepts, making them good candidates for tasks requiring fine-grained reasoning, such as hierarchical reasoning and logical inference.

One question that arises in the study of ontology embeddings is the following: how similar to the source ontology are the generated embeddings? Being more strict, if we fix a semantics in order to interpret the generated embeddings, are they \emph{guaranteed} to precisely represent the  meaning of the source ontology and its entailments (of particular interest, the TBox entailments)? This property is called the \emph{strong} faithfulness property \cite{Oezcep2020} and, so far, no previous work for $\mathcal{EL}$ ontology embeddings    has attempted to prove the property holds for their embedding method. Moreover, the existence of embedding models satisfying this property for the $\mathcal{ELH}$  language has not been formally proven. Given that ontologies languages in the $\mathcal{EL} $ family have received  most of the attention by the existing literature on ontology embeddings \cite{ElEM, ELBE, BoxE, BoxEL,Box2EL}, this is a significant gap which we investigate in this work.

\paragraph*{Contribution}
We investigate whether \ELH has the strong faithfulness property over convex geometric models. We first prove the statement for embeddings in low dimensions, considering a region-based representation for (possibly) non-convex regions (\cref{sec:nonconvex}). Also, we  prove that the same property does not hold when we consider convex regions and only $1$ dimension. We then investigate strong faithfulness on convex geometric models with more dimensions (\cref{sec:convex}). 
{This result contributes to  the landscape of properties for embedding methods based on geometric models~\cite[Proposition~11]{Bourgaux_Guimarães_Koudijs_Lacerda_Ozaki_2024} and it provides the foundation of the implementation of FaithEL \cite{FaithEL}.}
We do so including embeddings for role inclusions, a problem that has not been well studied in the \ELH ontology embedding literature. We also consider model checking in convex geometric models, a topic that has not been covered in previous works~(\cref{sec:modelcheck}).

\section{Ontology Embeddings}\label{sec:related}

Various methods for embedding ontologies have been proposed, with ontologies in the $\mathcal{EL}$ family being their primary targets. $\mathcal{EL} $ is a simple yet powerful language.

These embedding methods are 
\emph{region-based}, that is, they map concepts to regions and entities to vectors (in some cases, entities are transformed into nominals and also embedded as regions), and represent roles using   translations or regions within the vector space.

The precise shape of the embedding regions varies depending on the method. 
In \emph{EmEL} \cite{EmEL} and \emph{ELem} \cite{Kulmanov2019}, the embeddings map concepts to $n$-dimensional \emph{balls}. One disadvantage of this approach is that the intersection between two balls is not itself a ball. Newer approaches addressing this issue such as \emph{BoxEL}, \emph{Box$^2$EL}, and \emph{ELBE} \cite{BoxEL, Box2EL, ELBE}, starting with \emph{BoxE} \cite{BoxE}, represent concepts as $n$-dimensional \emph{boxes}. \emph{BoxE} introduced the use of so-called ``translational bumps'' to capture relations between entities, an idea followed by \emph{Box$^2$EL}.
Another language, \ALC, has been studied under a \emph{cone semantics} \cite{Oezcep2020}, which uses \emph{axis-aligned cones} as its geometric interpretation. In the context of KGEs, $n$-dimensional \emph{parallelograms} have also been used in \emph{ExpressivE} \cite{ExpressivE}.

Other approaches for accommodating TBox axioms in the embeddings have also been considered. Approaching the problem from a different direction, \emph{OWL2Vec*} \cite{OWL2Vec} targets the DL language $\mathcal{SROIQ}$ and does not rely on regions, but uses the  NLP algorithm $\emph{word2vec}$ to include lexical information (such as annotations) along with the graph structure of an OWL ontology. Another framework, \emph{TransOWL} \cite{TransOWL}, uses background knowledge injection to improve link prediction for models such as \emph{TransE} and $\emph{TransR}$. Additionally, there has been an increased interest in \emph{querying} KGEs, with strategies utilizing query rewriting techniques being put in place to achieve better results \cite{Imenes}.

Although expressively powerful and well performing in tasks such as subsumption checking and link prediction, the generated embeddings often lack formal guarantees with respect to the source ontology. In the KGE literature, it is a well known that, e.g., \emph{TransE} \cite{TransE} is unable to model one-to-many relations (a difficulty present even in recent ontology embedding methods such as \emph{BoxEL}) or symmetric relations. This has spurted a quest for more expressive models, with the intention of capturing an increasing list of relation types and properties such as composition, intersection, hierarchy of relations, among others \cite{TransR, DistMult, ComplEx, ExpressivE}. 

Expressivity is a key notion in ontology embedding methods, which often also feature these relation types and potentially other forms of constraints. For example, in \emph{Box$^2$EL}, \emph{ELem}, and \emph{ELBE} \cite{Box2EL, Kulmanov2019, ELBE}, axioms of the form $\exists r.C \sqsubseteq \bot$ are only approximated by $\exists r.\top \sqsubseteq \bot$. This means that strong TBox faithfulness is not respected. Moreover, only \emph{EmEL} and \emph{Box$^2$EL} \cite{EmEL, Box2EL} include embeddings for role inclusions. In the case of \emph{EmEL}, the axiom $r\sqsubseteq s$ also enforces $s\sqsubseteq r$, which means it is not strongly faithful, while Box$^2$EL has also been shown to not be strongly faithful \cite{Bourgaux_Guimarães_Koudijs_Lacerda_Ozaki_2024}.

\section{Basic Notions}\label{sec:basic}
\subsection{The Description Logic \ELH}
Let $N_C$, $N_R$, and $N_I$ be countably infinite and pairwise disjoint sets of \emph{concept names}, \emph{role names}, and \emph{individual names}, respectively. \emph{$\mathcal{ELH}$ concepts $C,D$} are built according to the syntax rule

\begin{align*}
C,D ::= \top \:|\: \bot \:|\: A \: 
        \:|\: (C \sqcap D) \:|\: \exists r.C
\end{align*}
        
where $A\in N_C$ and $r\in N_R$.
\EL \emph{concept inclusions} (CIs) are of the form $C \sqsubseteq D$,  \emph{role inclusions} (RIs) are of the form $r \sqsubseteq s$, \EL \emph{concept assertions} are of the form $A(a)$  and \emph{role assertions} are of the form $r(a,b)$, where $A \in N_C$, $a,b \in N_I$, $r,s \in N_R$, and $C$, $D$ range over $\ELH$ concepts. 
\emph{Instance queries} (IQs) are   role assertions    or of the form $C(a)$, with $C$ being an arbitrary \ELH concept. 
An \EL \emph{axiom} is an \EL CI, an  RI, or an  IQ.
    A \emph{normalized} \EL TBox is one that only contains CIs  of the following forms:
    \begin{align*} 
    A_1 \sqcap A_2 \sqsubseteq B, \ \exists r.A \sqsubseteq B\text{, and }A \sqsubseteq \exists r.B
    \end{align*}
    \smallskip
    where $A_1,A_2,A,B\in N_C $
    and $r\in N_R$. We say that an \ELH concept is in \emph{normal form} if
    it is of the form $A$, $\exists r.A$, or $A\sqcap B$, with $A,B\in N_C$ and $r\in N_R$. 
    Similarly, an \ELH ontology is in \emph{normal form} if
its TBox part is a normalized \ELH TBox. An 
    IQ is in \emph{normal form}
    if it is a role assertion 
    or of the form $C(a)$ with 
    $C$ being a concept in normal form.
The semantics of $\ELH$ is defined classically by means of \emph{interpretations} $\mathcal{I} = (\Delta^{\mathcal{I}}, \cdot^{\mathcal{I}})$, where $\Delta^{\mathcal{I}}$ is a non-empty countable set called the \emph{interpretation domain}, and $\cdot^{\mathcal{I}}$ is an \emph{interpretation function} mapping each concept name $A$ in $N_C$ to a subset $A^\mathcal{I}$ of $\Delta^{\mathcal{I}}$, each role name $r$ in $N_R$ to a binary relation $r^{\mathcal{I}} \subseteq \Delta^{\mathcal{I}} \times \Delta^{\mathcal{I}}$, and each individual name $a$ in $N_I$ to an element $a^{\mathcal{I}} \in \Delta^{\mathcal{I}}$. We extend the function $\cdot^{\mathcal{I}}$ inductively to arbitrary concepts by setting $\top^{\Imc}  := \Delta^{\mathcal{I}}$, $\bot^{\Imc} := \emptyset$, and
    \begin{align*}
             (C \sqcap D)^{\mathcal{I}} & := C^{\mathcal{I}} \cap D^{\mathcal{I}}, \text{ and }\\
             (\exists r.C)^{\mathcal{I}} & := \{d \in \Delta^{\mathcal{I}} \mid  \exists e \in C^{\mathcal{I}}\text{ such that } (d,e) \in r^{\mathcal{I}}\}.
        \end{align*}  
An interpretation $\mathcal{I}$ \emph{satisfies}: (1) $C \sqsubseteq D$ iff $C^{\mathcal{I}} \subseteq D^{\mathcal{I}}$; (2) $r \sqsubseteq s$ iff $r^\Imc \subseteq s^\Imc$, (3) $C(a)$ iff $a^{\mathcal{I}}$ $\in {C^{\mathcal{I}}}$; (4) $r(a,b)$ iff $(a^{\mathcal{I}}, b^{\mathcal{I}}) \in r^{\mathcal{I}}$.

An \emph{$\ELH$ TBox} \Tmc (Terminological Box) is a finite number of $\ELH$ concept and role inclusions. An \emph{$\ELH$ ABox} \Amc (Assertion Box) is a finite number of $\ELH$  concept and role assertions. The union of a TBox and an ABox forms an $\ELH$ ontology. An \ELH ontology \Omc \emph{entails} an \ELH axiom $\alpha$, in symbols $\Omc\models \alpha$ if for every interpretation \Imc, we have that $\Imc\models \Omc$ implies $\Imc\models \alpha$ (we may write  similarly for the CI and RI entailments of a TBox).
We denote by $N_C(\Omc),N_R(\Omc),N_I(\Omc)$ the set of concept names, role names, and individual names occurring in an ontology \Omc.
We may also write  $N_I(\Amc)$ for the set of individual names occurring in an ABox \Amc. The \emph{signature} of  an ontology \Omc, denoted ${\sf sig}(\Omc)$, is the union of $N_C(\Omc),N_R(\Omc),$ and $N_I(\Omc)$.

\subsection{Geometric models}
\label{subsec:geo}
We  go from the traditional model-theoretic interpretation of the $\ELH$ language to geometric interpretations, using definitions from previous works by \cite{DBLP:conf/kr/Gutierrez-Basulto18} and \cite{DBLP:conf/dlog/BourgauxOP21}.
Let $m$ be a natural number and $f \colon \mathbb{R}^m \times \mathbb{R}^m \mapsto \mathbb{R}^{2\cdot m}$   a fixed but arbitrary linear map satisfying the following:
\begin{enumerate}
    \item  the restriction of $f$ to $\mathbb{R}^m \times \{0\}^{m}$ is injective;
    \item  the restriction of $f$ to $\{0\}^{m} \times \mathbb{R}^m$ is injective;
    \item  $f(\mathbb{R}^m \times \{0\}^{m}) \cap f(\{0\}^m \times \mathbb{R}^m) = \{0^{2\cdot m}\}$;
\end{enumerate}
where $0^m$ denotes the vector $(0,...,0)$ with $m$ zeros.  We
say that a linear map that satisfies Points 1, 2, and 3 is an \emph{isomorphism preserving linear map}.
\begin{example}
The concatenation function is a linear map that satisfies Points 1, 2, and 3.
E.g., if we have vectors $v_1=(n_1,n_2,n_3)$ and $v_2=(m_1,m_2,m_3)$ then for $f$ being the concatenation function we would have  $f(v_1,v_2)=(n_1,n_2,n_3,m_1,m_2,m_3)$. Other linear maps that satisfy Points 1, 2, and 3 can be created with permutations. E.g., 
defining the function $f$ such that $f(v_1,v_2)=(n_1,m_1,n_2,m_2,n_3,m_3)$.
\end{example}
%
%
\begin{definition}[Geometric Interpretation]\label{Definition: Geometric Interpretation}
Let $f$ be an isomorphism preserving linear map and $m$ a natural number. 
An $m$-dimensional $f$-geometric interpretation $\eta$ of $(N_C, N_R, N_I)$ assigns to each
\begin{itemize}
\item $A \in N_C$ a region $\eta (A) \subseteq \mathbb{R}^m$
\item $r \in N_R$ a region $\eta(r) \subseteq \mathbb{R}^{2 \cdot m}$, and
\item $a \in N_I$ a vector $\eta(a) \in \mathbb{R}^m.$
\end{itemize}     
We now extend the definition for arbitrary \ELH concepts:
\begin{align*}
    \eta(\bot) & :=\emptyset \\
    \eta(\top) & :=\mathbb{R}^m, \\ 
    \eta(C \sqcap D) & :=\eta(C) \cap \eta(D)\text{, and } \\
    \eta(\exists r.C)& :=\{v \in \mathbb{R}^m \mid \exists u\in  \eta(C) \text{ with }f(v,u) \in \eta(r) \}.
\end{align*}
Intuitively, 
the function $f$ combines two  vectors that represent a pair of elements in a classical interpretation relation. 
An $m$-dimensional $f$-geometric interpretation $\eta$ \emph{satisfies} 
\begin{itemize}
    \item an \ELH concept assertion $A(a)$, 
    if $\eta(a) \in \eta(A)$,
    \item a role assertion $r(a,b)$, 
    if $f(\eta(a), \eta(b)) \in \eta(r)$,
        \item an \ELH IQ $C(a)$, 
    if $\eta(a) \in \eta(C)$,
    \item an \ELH CI $C \sqsubseteq D$, 
    if $\eta(C) \subseteq \eta(D)$, and 
    \item an RI $r \sqsubseteq s$, 
    if $\eta(r) \subseteq \eta(s)$.
\end{itemize}    
We write $\eta \models \alpha$ if $\eta$ satisfies an \ELH axiom $\alpha$. When speaking of $m$-dimensional $f$-geometric interpretations, we may omit $m$-dimensional and $f$-, as well as  use the term ``model'' instead of ``interpretation''. A geometric interpretation \emph{satisfies} an ontology \Omc, in symbols $\eta\models \Omc$, if it satisfies all axioms in \Omc. We say that a geometric interpretation is \emph{finite} if the regions associated with concept and role names have a finite number of vectors and
we only need to consider a finite number of individual names, which is the case when considering the individual names that occur in an ontology.
\end{definition}

Motivated by the theory of conceptual spaces and findings on cognitive science \cite{Gardenfors_2000, Zenker_Gardenfors_2015}, and   by previous work on ontology embeddings for quasi-chained rules \cite{DBLP:conf/kr/Gutierrez-Basulto18}, we consider convexity as an interesting restriction for the regions associated with concepts and relations in a geometric model.

\begin{definition} A geometric interpretation $\eta$ is \emph{convex} if, for every $E \in N_C \cup N_R$, every $v_1, v_2 \in \eta(E)$ and every $\lambda \in [0,1]$, if $v_1, v_2 \in \eta(E)$ then $(1-\lambda)v_1 + \lambda v_2 \in \eta(E)$.
\end{definition}


\begin{definition}
\label{defconvexhull}
Let $S = \{v_1, \ldots, v_m\} \subseteq \mathbb{R}^d$.
     A vector $v$ is in the \emph{convex hull} $S^*$ of   $S$
     iff there exist $v_1, \ldots, v_n \in S$ and scalars $\lambda_1, \lambda_2, ..., \lambda_n \in \mathbb{R}$ such that

    \[v = \sum_{i=1}^{n} \lambda_i v_i = \lambda_1 v_1 + \lambda_2 v_2 + ... + \lambda_n v_n,\]

    where $\lambda_i \ge 0$, for $i = 1, \ldots, n$, and $\sum_{i=1}^{n} \lambda_i = 1$.
    
\end{definition}

Apropos of convexity, we highlight and prove some of its properties used later in our results.

\begin{restatable}{proposition}{subsetinconvexset}
\label{convexsubsetobvious}

For finite $S_1,S_2\subseteq \mathbb{R}^d$, where $d$ is an arbitrary dimension, we have that
     $S_1 \subseteq S_2$ implies $S_1^\ast \subseteq S_2^\ast$.
\end{restatable}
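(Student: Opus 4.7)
The plan is to unfold the definition of the convex hull given in \cref{defconvexhull} and observe that any convex combination witnessing membership in $S_1^\ast$ is automatically a convex combination witnessing membership in $S_2^\ast$. Since the hull is defined purely in terms of convex combinations of points taken from the base set, enlarging the base set can only enlarge (or preserve) the hull.

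More concretely, I would start by picking an arbitrary $v \in S_1^\ast$. By \cref{defconvexhull}, there exist elements $v_1,\ldots,v_n \in S_1$ and non-negative scalars $\lambda_1,\ldots,\lambda_n \in \mathbb{R}$ with $\sum_{i=1}^n \lambda_i = 1$ such that $v = \sum_{i=1}^n \lambda_i v_i$. The assumption $S_1 \subseteq S_2$ immediately gives $v_i \in S_2$ for every $i \in \{1,\ldots,n\}$. Hence the very same non-negative scalars $\lambda_1,\ldots,\lambda_n$ summing to $1$ exhibit $v$ as a convex combination of points from $S_2$, so $v \in S_2^\ast$ by \cref{defconvexhull}. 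Since $v$ was arbitrary, we conclude $S_1^\ast \subseteq S_2^\ast$.

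There is no real obstacle here: the statement is essentially monotonicity of the convex-hull operator, and its proof is a one-line application of the defining condition. The only minor care is to note that the definition allows the witnesses $v_1,\ldots,v_n$ to be chosen as a subset (not necessarily a reindexing) of the base set, so transferring them from $S_1$ to $S_2$ via the inclusion requires no further bookkeeping. Finiteness of $S_1$ and $S_2$ plays no role in the argument and could be dropped; it is presumably assumed only because the surrounding construction in the paper deals with finite geometric models.
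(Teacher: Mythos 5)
Your proof is correct and follows essentially the same route as the paper: unfold the definition of the convex hull, transfer the witnessing points from $S_1$ to $S_2$ via the inclusion, and conclude. The only cosmetic difference is that the paper splits the argument into the cases $v \in S_1$ and $v \in S_1^\ast \setminus S_1$, whereas you handle both uniformly; your remark that finiteness is not needed is also accurate.
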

In the following, whenever we say a vector is \emph{binary}, we mean that 
its values in each dimension can only be $0$ or $1$.
\begin{restatable}{theorem}{ifvinSwithoutSvisnonbinary}
\label{nonbinarydifference}
Let $S\subseteq \{0,1\}^d$ where $d$ is an arbitrary dimension. For any $n \in \mathbb{N}$, for any $v = \sum_{i=1}^{n} \lambda_i v_i$, such that $v_i \in S$, if $v \in S^* \setminus S$ then $v$ is non-binary.
\end{restatable}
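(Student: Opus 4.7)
\textbf{Proof plan for \cref{nonbinarydifference}.} I would argue the contrapositive: assume $v\in S^\ast$ is binary, i.e.\ $v\in\{0,1\}^d$, and show $v\in S$. By \cref{defconvexhull}, write $v=\sum_{i=1}^{n}\lambda_i v_i$ with $v_i\in S$, $\lambda_i\ge 0$, and $\sum_{i=1}^{n}\lambda_i=1$. Without loss of generality, drop every term with $\lambda_i=0$, so I may assume $\lambda_i>0$ for all $i$ in the remaining sum.

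The key observation is a coordinate-wise one. Fix any dimension $j\in\{1,\dots,d\}$ and look at the scalar equation
\begin{equation*}
v_j \;=\; \sum_{i=1}^{n}\lambda_i\,(v_i)_j,
\end{equation*}
where each $(v_i)_j\in\{0,1\}$ because $S\subseteq\{0,1\}^d$. I split into two cases on the binary value $v_j$. If $v_j=0$, then a sum of non-negative terms $\lambda_i(v_i)_j$ vanishes, and since all $\lambda_i>0$ this forces $(v_i)_j=0$ for every $i$. If $v_j=1$, then $\sum_i \lambda_i(v_i)_j = 1 = \sum_i \lambda_i$, so $\sum_i \lambda_i\bigl(1-(v_i)_j\bigr)=0$; again all summands are non-negative and $\lambda_i>0$, forcing $(v_i)_j=1$ for every $i$. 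In either case, $(v_i)_j=v_j$ for every $i$.

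Since this holds in every coordinate $j$, I conclude $v_i=v$ for all $i\in\{1,\dots,n\}$, and in particular $v=v_1\in S$. This contradicts the hypothesis $v\in S^\ast\setminus S$ under the assumption that $v$ is binary, proving the contrapositive and hence the theorem. I do not anticipate a genuine obstacle here: the only subtlety is remembering to prune the zero-weight coefficients before running the coordinate-wise argument, since otherwise the implication ``sum of non-negative terms is zero $\Rightarrow$ each term is zero'' would not immediately yield $(v_i)_j=0$.
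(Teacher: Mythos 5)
Your proof is correct, but it takes a genuinely different route from the paper's. The paper argues directly in the forward direction: it performs a case analysis on the coefficients ($0<\lambda_i<1$ for all $i$, versus some $\lambda_i=0$, the latter reducing to the former) and then runs an induction on $n$, locating a dimension where two distinct generators $v_l, v_m$ disagree and showing the combined coordinate lands strictly in $(0,1)$. You instead prove the contrapositive with a single coordinate-wise observation: after pruning zero weights, a binary value $v_j\in\{0,1\}$ forces $(v_i)_j=v_j$ for every positively-weighted generator (using non-negativity of $\lambda_i(v_i)_j$ when $v_j=0$ and of $\lambda_i(1-(v_i)_j)$ when $v_j=1$), hence all such $v_i$ coincide with $v$ and $v\in S$. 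Your argument needs no induction and no assumption that the $v_i$ are pairwise distinct (an assumption the paper's inductive step quietly relies on, and which is only justified because otherwise $v$ would trivially lie in $S$); it also yields \cref{binarycorollary} as the primary statement rather than as a consequence. What the paper's version buys in exchange is an explicit witness: it exhibits the particular coordinate whose value is strictly fractional. Both are sound; yours is the more economical of the two.
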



\begin{restatable}{corollary}{corolariobinario}
\label{binarycorollary}
    If $v$ is binary and $v \in S^\ast$ then $v \in S$.
\end{restatable}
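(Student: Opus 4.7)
The plan is to derive this statement directly from the preceding \cref{nonbinarydifference} by contraposition, so essentially no new work is required beyond a clean logical reformulation. I would begin by unpacking what $v \in S^\ast$ means via \cref{defconvexhull}: there exist $v_1, \ldots, v_n \in S$ and nonnegative scalars $\lambda_1, \ldots, \lambda_n$ summing to $1$ such that $v = \sum_{i=1}^{n} \lambda_i v_i$. This puts $v$ in exactly the form required by the hypothesis of \cref{nonbinarydifference}.

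Next I would argue by case split on whether $v \in S$. If $v \in S$, there is nothing to prove. Otherwise, $v \in S^\ast \setminus S$, and \cref{nonbinarydifference} immediately concludes that $v$ is non-binary, contradicting the assumption that $v$ is binary. Hence the second case is impossible and $v \in S$ must hold.

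Since this is a pure restatement by contrapositive, there is no real obstacle: the only care needed is to make sure the representation $v = \sum \lambda_i v_i$ with $v_i \in S$ is invoked explicitly, so that \cref{nonbinarydifference} applies on the nose. No use of convexity beyond the definition of $S^\ast$ is required, and no assumption on the dimension $d$ is needed beyond what is already in the theorem.
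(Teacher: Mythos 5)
Your proposal is correct and matches the paper's own derivation, which simply states that the corollary follows directly from \cref{nonbinarydifference}; you have merely made the contrapositive step explicit. No differences of substance.
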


Finally, we define strong faithfulness based on the work by \cite{Oezcep2020}.

\begin{definition}[Strong Faithfulness]
\label{Definition: Faithfulness}
    Let $\mathcal{O}$ be a   satisfiable   ontology (or any other representation allowing the distinction between IQs and TBox axioms). Given an $m$-dimensional $f$-geometric interpretation $\eta$, we say that:
    \begin{itemize}
        \item $\eta$ is a \emph{strongly concept-faithful model} of $\mathcal{O}$ iff, for every  concept $C$   and individual name $b$, 
        if $\eta (b) \in \eta (C)$ then $\mathcal{O} \models C(b)$;
        \item $\eta$ is a \emph{strongly   IQ faithful model} of $\mathcal{O}$ iff it is strongly 
        concept-faithful  and for each role $r$ and  individual names $a,b$: if $f(\eta(a), \eta(b)) \in \eta(r)$, then $\mathcal{O} \models r(a,b)$; 
        \item $\eta$ is a \emph{strongly 
        TBox-faithful model} of \Omc iff for all TBox axioms $\tau$: if $\eta \models \tau$, then $\mathcal{O} \models \tau$. 
        
    \end{itemize}

\begin{example}
\label{ex: faithfulness}
    Let \Omc be an ontology given by $\Tmc \cup \Amc$ with $\Tmc = \{A \sqsubseteq B\}$ and $\Amc = \{A(a), B(b)\}$. Let $\geointerp$ be a (non-convex) geometric interpretation of \Omc in $\mathbb{R}$, where $\geointerp(A) = \{0, 1, 2\}$, $\geointerp(B) = \{0, 1, 2, 3\}$, $\geointerp(a) = 2$, and $\geointerp(b) = 3$. 
    Note that $\Omc \models A(a)$ and $\Omc \models B(b)$, and 
    by definition $\geointerp(a) \in \geointerp(A)$, $\geointerp(b) \in \geointerp(B)$.
    Also, $\Omc \models A \sqsubseteq B$ 
    and $\geointerp(A) \subseteq \geointerp(B)$. So one can see that  $\geointerp$ is both a strongly concept and TBox-faithful model of $\Omc$. If we let $\geointerp'$ be a geometric interpretation such that $\geointerp'(A) = \{0, 1, 2, 3\} = \geointerp(B)$, we now have that $\geointerp'(b) \in \geointerp'(A)$, which means $\geointerp'$ is not a strongly concept-faithful model of \Omc (since $\Omc\not\models A(b)$), and we have that $\geointerp'(B) \subseteq \geointerp'(A)$, which means it is not a strongly TBox-faithful model of \Omc (since $\Omc\not\models B\sqsubseteq A$).
\end{example}

We say that an ontology language \emph{has the strong faithfulness property} over a class of geometric interpretations \Cmc if for every satisfiable ontology \Omc in this language there is a geometric interpretation in \Cmc that is both a strongly IQ faithful and a strongly TBox faithful model of \Omc.

\end{definition}

The range of concepts, roles, and individual names in \cref{Definition: Faithfulness}
varies depending on the language and setting studied. We omit the notion of weak faithfulness by \cite{Oezcep2020} as it does not apply for \ELH since ontologies in this language are always satisfiable (there is no negation). 
The ``if-then'' statements in \cref{Definition: Faithfulness} become ``if and only if'' when $\eta$ satisfies the ontology.
Intuitively, strong faithfulness expresses how similar the generated embedding is to the original ontology. 

We observe that strong faithfulness with respect to the TBox component of the ontology is extremely desirable: it guarantees that concept and role inclusions are also enforced when coupled with a geometric interpretation in the embedding space. On the other hand, strong IQ faithfulness is not a desirable property for learned embeddings. Although this might seem counter-intuitive at first, it is a reasonable statement: an embedding that is strongly IQ faithful is unsuitable for link prediction, as the only assertions that hold in the embedding are those that already hold in the original ontology. This means that no new facts are truly discovered by the model. Here we prove both strong TBox and IQ faithfulness for \ELH for theoretical reasons.

Finally, observe that an embedding model that is both strongly TBox and IQ faithful must have the same TBox and IQ consequences as the original ontology.
This is a stronger requirement than   establishing that an embedding model for an ontology \Omc (within a method) exist if and only if a classical model for \Omc exists, which is a property of sound and complete embedding methods~\cite{Bourgaux_Guimarães_Koudijs_Lacerda_Ozaki_2024}.


\section{Strong Faithfulness}
\label{sec:nonconvex}
In this section we prove initial results about strong faithfulness for \ELH. In particular, we prove that \ELH has the strong faithfulness property over $m$-dimensional $f$-geometric interpretations for any $m\geq 1$ but this is not the case if we require that regions in the geometric interpretations are convex. We first introduce a mapping from classical interpretation to (possibly) non-convex geometric interpretations and then use it with the notion of canonical model  
to establish strong faithfulness for \ELH.

\begin{definition}
\label{mudefinition}
Let $\mathcal{I} = (\Delta^{\mathcal{I}}, \cdot^{\mathcal{I}})$ be a classical $\ELH$ interpretation, and we assume without loss of generality, since $\Delta^{\mathcal{I}}$ is non-empty and countable, that $\Delta^{\mathcal{I}} $ is a (possibly infinite) interval in $\mathbb{N}$ starting on $0$. 
Let $\barmu \colon {\Delta^{\mathcal{I}} \mapsto {\mathbb{R}^1}}$ be a mapping from our classical interpretation domain to a vector space where:
\begin{align*}
  \barmu(d) = 
  \begin{cases}
  (-\infty, -d] \cup [d,\infty), & \text{if } \Delta^\Imc \text{is finite and } d = max(\Delta^\Imc),\\
    (-d-1, -d] \cup [d,d+1), & \text{otherwise.}
  \end{cases}
\end{align*}
where $d\in\mathbb{N}$ and $(-d-1, -d] $ and $ [d,d+1)$ are intervals over $\mathbb{R}^1$, closed on $d$ and $-d$, and open on $d+1$ and $-d-1$.
\end{definition}
\begin{remark}

For any interpretation \I, \(\barmu\) covers the real line, that is, \(\bigcup_{d \in \Delta^\I} \barmu(d) = \mathbb{R}^1\).

\end{remark}

\begin{definition}\label{def:geointerbar}

We call $\bargeointerp$ the \emph{geometric interpretation} of $\mathcal{I}$ and define it as follows.
    Let $\Imc$ be a classical \ELH interpretation. The \emph{geometric interpretation} of $\Imc$, 
    denoted $\bargeointerp$, is defined as:
\begin{align*}
    \bargeointerp(a) & := d\text{, such that }d=a^\Imc \text{, for all } a \in N_I,\\
     \bargeointerp(A) & := \{v \in \barmu(d) \mid d \in A^\mathcal{I}\}\text{, for all }A \in N_C, \text{ and }\\
    \bargeointerp(r) & := \{f(v,e) \mid v \in \barmu(d)  \text{\xspace for \xspace} (d, e) \in r^\Imc\}\text{, for all }r \in N_R. 
\end{align*}

\end{definition}

\begin{figure}
\begin{center}

    \centering
    \includegraphics[scale=1.1]{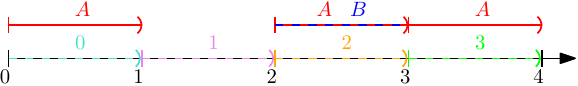}
    \caption{A partial visualization (showing only the positive section of the real line) of a geometric interpretation $\bargeointerp$ where elements $d_0 \ldots d_3$ are mapped to their respective intervals, and where $\barmu(d_0), \barmu(d_2), \barmu(d_3) \in \bargeointerp(A)$ and $\barmu(d_2) \in \bargeointerp(B)$.}
    \label{Figure: Partial Visualization of Etabar}

\end{center}
\end{figure}



In Figure~\ref{Figure: Partial Visualization of Etabar}, we illustrate with an example the mapping in Definition~\ref{def:geointerbar}. 
We now show that for (possibly) non-convex geometric models, a classical interpretation \Imc models arbitrary IQs 
and arbitrary TBox axioms 
if and only if their geometrical interpretation $\bargeointerp$ also models them. 

\begin{restatable}{theorem}{onelemmatorulethemallnonconvex}
\label{aggregatedlemmanonconvex}
    For all \ELH axioms $\alpha$,
$\Imc \models \alpha$ iff $\bargeointerp \models \alpha$. 
\end{restatable}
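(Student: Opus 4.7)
The plan is to strengthen the statement to the level of the concept language and then derive each of the four axiom cases uniformly. The key observation underlying everything is that the family $\{\barmu(d) \mid d \in \Delta^\Imc\}$ is a partition of $\mathbb{R}^1$: the half-open intervals $(-d-1,-d]$ and $[d,d+1)$ are pairwise disjoint across $d$, yet by the remark preceding \cref{def:geointerbar} they together cover $\mathbb{R}$. In particular, for every $d \in \Delta^\Imc$ we have $d \in \barmu(d)$ (since $d \in [d,d+1)$) and $d \notin \barmu(d')$ for any other $d'$, which will repeatedly be used to recover classical elements from vectors.

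I would first prove by structural induction on \ELH concepts $C$ that $\bargeointerp(C) = \bigcup_{d \in C^\Imc} \barmu(d)$. The base cases $C \in \{A,\top,\bot\}$ follow directly from \cref{def:geointerbar} together with the covering property. The conjunction step combines the IH with the disjointness of the partition: intersecting two unions of pieces yields the union over the intersection of index sets. The critical case is $C = \exists r.D$, where I would first argue that the linear map $f$ is itself injective. This uses conditions 1--3 in the definition of an isomorphism preserving linear map: writing $(v_1-v_2,u_1-u_2) = (v_1-v_2,0) + (0,u_1-u_2)$, linearity gives $f(v_1-v_2,0) = -f(0,u_1-u_2) \in f(\mathbb{R}^m\times\{0\}^m) \cap f(\{0\}^m\times\mathbb{R}^m) = \{0^{2m}\}$, and conditions~1 and~2 then force $v_1=v_2$ and $u_1=u_2$. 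Unfolding the definitions of $\bargeointerp(\exists r.D)$ and $\bargeointerp(r)$, the injectivity of $f$ reduces the condition $f(v,u) \in \bargeointerp(r)$ to the existence of $(d',e) \in r^\Imc$ with $v \in \barmu(d')$ and $u = e$; combining this with the IH applied at the point $e$ (using $e \in \barmu(e)$ and disjointness) yields $v \in \bargeointerp(\exists r.D)$ iff $v \in \barmu(d')$ for some $d' \in (\exists r.D)^\Imc$, as required.

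With the concept lemma in hand, each axiom type follows. For CIs $C \sqsubseteq D$: by the partition, $\bargeointerp(C) \subseteq \bargeointerp(D)$ iff $C^\Imc \subseteq D^\Imc$. For IQs $C(a)$: since $\bargeointerp(a) = a^\Imc$ lies in the unique piece $\barmu(a^\Imc)$, we have $\bargeointerp(a) \in \bargeointerp(C)$ iff $a^\Imc \in C^\Imc$. Role assertions $r(a,b)$ and RIs $r \sqsubseteq s$ are analogous, using injectivity of $f$ together with the partition argument applied to the first coordinate to recover classical pairs from geometric ones. The main obstacle I anticipate is the existential case of the concept induction: the definition of $\bargeointerp(r)$ uses each $e \in \Delta^\Imc$ simultaneously as a classical element and as a vector, with $\barmu$ effectively expanding only the first coordinate. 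Carefully tracking this asymmetry while applying the injectivity of $f$ is what makes the inductive step go through.
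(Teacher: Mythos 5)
Your proposal is correct and follows essentially the same route as the paper: the paper splits your single equality $\bargeointerp(C) = \bigcup_{d \in C^\Imc}\barmu(d)$ into two inductive lemmas (one showing $d \in C^\Imc$ iff $\barmu(d) \subseteq \bargeointerp(C)$, the other showing that every $v \in \bargeointerp(C)$ lies in $\barmu(d)$ for some $d \in C^\Imc$), and then handles the four axiom cases exactly as you do. Your explicit derivation of the injectivity of $f$ from conditions 1--3 and your explicit appeal to the partition property of $\barmu$ are details the paper leaves implicit, but they do not change the underlying argument.
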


We now provide a definition of canonical model for \ELH ontologies inspired by a standard chase procedure.
In our definition, we use a \emph{tree shaped interpretation} $\Imc_D$ of an \ELH concept $D$, with the root denoted $\rho_D$.
This is defined inductively. 
For $D$ a concept name $A\in N_C$ we define $\Imc_A$ as the interpretation with $\Delta^{\Imc_A}:=\{\rho_A\}$,
$A^{\Imc_A}:=\{\rho_A\}$, and all
other concept and role names interpreted as the empty set.
For $D=\exists r.C$, we define
$\Imc_D$ as the interpretation with $\Delta^{\Imc_D}:=\{\rho_D\}\cup \Delta^{\Imc_C}$,
all concept and role name interpretations are as for
$\Imc_C$ except that we add $(\rho_D,\rho_C)$ to $r^{\Imc_D}$ 
and assume $\rho_D$ is fresh (i.e., it is not in $\Delta^{\Imc_C}$).  Finally, for $D=C_1\sqcap C_2$
we define $\Delta^{\Imc_D}:=\Delta^{\Imc_{C_1}}\cup (\Delta^{\Imc_{C_2}}\setminus\{\rho_{C_2}\})$,
assuming $\Delta^{\Imc_{C_1}}$ and $\Delta^{\Imc_{C_2}}$
are disjoint, and
with
all concept and role name interpretations   as in
$\Imc_{C_1}$ and $\Imc_{C_2}$, except that
we connect $\rho_{C_1}$ with the elements of $\Delta^{\Imc_{C_2}}$
in the same way as $\rho_{C_2}$ is connected. That is, we 
\emph{identify} $\rho_{C_1}$ with the root $\rho_{C_2}$ of $\Imc_{D_2}$.

\begin{definition}
\label{def:canmodelinfinite}
The canonical model $\barcanonmodel$ of a satisfiable \ELH ontology \Omc is defined as the union of a sequence of interpretations $\Imc_0,\Imc_1,\ldots$, where
$\Imc_0$ is defined as:
\begin{align*}
    \Delta^{\Imc_0} & :=\{a\mid a\in N_I(\Amc)\}, \\
A^{\Imc_0} & :=\{a\mid A(a)\in \Amc\}\text{ for all }A\in N_C,\text{  and } \\
 r^{\Imc_0} & :=\{(a,b)\mid r(a,b)\in \Amc\},\text{ for all } r\in N_R. 
\end{align*}
Suppose $\Imc_n$ is defined.
We define $\Imc_{n+1}$  by choosing a CI or an RI in \Omc and applying one of the following rules:
\begin{itemize}
    \item if $C\sqsubseteq D\in\Omc$ 
    and $d\in C^{\Imc_n}\setminus D^{\Imc_n}$ then define $\Imc_{n+1}$ as the result 
    of adding to $\Imc_n$ a copy of the tree shaped interpretation $\Imc_{D}$
    and identifying $d$ with the root of $\Imc_{D}$
    (assume that the elements in $\Delta^{\Imc_D}$ are fresh, that is, $\Delta^{\Imc_D}\cap \Delta^{\Imc_n}=\emptyset$);
    \item if $r\sqsubseteq s\in\Omc$ 
    and $(d,e)\in r^{\Imc_n}\setminus s^{\Imc_n}$ then set $\Imc_{n+1}$ as the result 
    of adding
  $(d,e)$ to $s^{\Imc_{n}}$.
\end{itemize}
We assume the choice of CIs  and RIs and corresponding rule above to be fair, i.e., if a CI or RI applies at a certain place, it will eventually be applied there.
\end{definition}

\begin{restatable}{theorem}{infinitecanonicalthm}
\label{thm:caninf}
         Let \Omc be a satisfiable \ELH ontology and let $\barcanonmodel$ be the canonical model of \Omc (\cref{def:canmodelinfinite}).
    Then,  
    \begin{itemize}
     \item   for all \ELH IQs and CIs $\alpha$ over   ${\sf sig}(\Omc)$,   $\barcanonmodel \models \alpha$ iff $\mathcal{O} \models \alpha$; and
    \item for all RIs $\alpha$ over   ${\sf sig}(\Omc)$, $\barcanonmodel \models \alpha$ iff $\Omc \models \alpha$.
    \end{itemize}
\end{restatable}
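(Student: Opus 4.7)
The plan is to split the biconditional into soundness ($\Omc \models \alpha \Rightarrow \barcanonmodel \models \alpha$) and completeness ($\barcanonmodel \models \alpha \Rightarrow \Omc \models \alpha$). Soundness follows uniformly for all three axiom types from the single sub-claim $\barcanonmodel \models \Omc$. To establish this, observe that $\Imc_0$ satisfies the ABox by construction, and fairness of rule application guarantees that any $d \in C^{\Imc_n} \setminus D^{\Imc_n}$ for a CI $C \sqsubseteq D \in \Omc$, or any $(d,e) \in r^{\Imc_n} \setminus s^{\Imc_n}$ for an RI $r \sqsubseteq s \in \Omc$, eventually triggers its respective rule. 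Because the chase is monotone on concept and role extensions (elements and tuples are only added, never removed), the limit $\barcanonmodel$ inherits all the satisfactions obtained along the sequence and therefore models every axiom of $\Omc$.

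For completeness, the technical core is a universality lemma: for every classical model $\Imc' \models \Omc$ there is a homomorphism $h\colon \barcanonmodel \to \Imc'$ with $h(a^{\barcanonmodel}) = a^{\Imc'}$ for every $a \in N_I(\Omc)$. I would build $h$ inductively over the chase stages. At stage $0$, set $h_0(a) := a^{\Imc'}$; this is a homomorphism because $\Imc' \models \Amc$. When passing from $\Imc_n$ to $\Imc_{n+1}$ via a CI $C \sqsubseteq D$ firing on $d$, the inductive hypothesis plus preservation of \ELH concepts under homomorphisms gives $h_n(d) \in C^{\Imc'}$, so $\Imc' \models C \sqsubseteq D$ supplies witnesses in $\Imc'$ realising the attached tree $\Imc_D$; extend $h_n$ to those fresh elements. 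An RI step needs no extension, since $\Imc' \models r \sqsubseteq s$ immediately carries $(h_n(d), h_n(e)) \in r^{\Imc'}$ into $s^{\Imc'}$. The limit $h := \bigcup_n h_n$ is the desired homomorphism.

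Completeness for IQs follows immediately: if $\barcanonmodel \models C(a)$, then for any $\Imc' \models \Omc$, preservation of \ELH concepts under $h$ yields $a^{\Imc'} = h(a^{\barcanonmodel}) \in C^{\Imc'}$, and the role-assertion case is analogous. For CIs $C \sqsubseteq D$ I reduce to the IQ case by a fresh-name trick: $\Omc \models C \sqsubseteq D$ iff $\Omc \cup \{A \sqsubseteq C,\, A(a)\} \models D(a)$ for a fresh concept name $A$ and individual $a$; applying the already-proved IQ part to the extended ontology and its canonical model closes the loop. The RI case is symmetric, reducing $\Omc \models r \sqsubseteq s$ to $\Omc \cup \{p \sqsubseteq r,\, p(a,b)\} \models s(a,b)$ for fresh $p, a, b$; this reduction is needed precisely to avoid the pitfall that a role with empty extension in $\barcanonmodel$ would vacuously satisfy any RI on its left-hand side. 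The main obstacle will be the inductive extension of $h$ at existential chase steps, where each freshly attached copy of $\Imc_D$ must be consistently matched by existential witnesses in $\Imc'$ and the resulting limit homomorphism must be verified to be well-defined over the potentially infinite chase; once this universality lemma is in hand, the fresh-name reductions for CIs and RIs are routine.
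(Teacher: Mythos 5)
Your soundness direction and your IQ completeness are essentially right, and since the paper's appendix contains no proof of this particular theorem (its detailed canonical-model argument is given only for the \emph{finite} canonical model of \cref{Definition: Canonical Model Assertion}, i.e.\ \cref{finitecanmodelprops}), your universality lemma is the natural route: $\barcanonmodel\models\Omc$ follows from fairness plus monotonicity of the chase (together with the routine observation that membership of an element in a finite \ELH concept at the limit is already witnessed at some finite stage), and the homomorphism $h\colon\barcanonmodel\to\Imc'$ built stage by stage, combined with preservation of \ELH concepts under homomorphisms, gives completeness for concept and role assertions. That much would go through.

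The genuine gap is in your completeness argument for CIs and RIs. Your fresh-name reduction establishes $\Omc\models C\sqsubseteq D$ iff $\Omc'\models D(a)$ iff $\Imc_{\Omc'}\models D(a)$, where $\Omc'$ is the \emph{extended} ontology and $\Imc_{\Omc'}$ is \emph{its} canonical model; but the theorem asserts $\barcanonmodel\models C\sqsubseteq D$ iff $\Omc\models C\sqsubseteq D$ for the canonical model of $\Omc$ itself, and nothing in your argument connects $\barcanonmodel\models C\sqsubseteq D$ to either side of that chain. The missing implication, $\barcanonmodel\models C\sqsubseteq D\Rightarrow\Omc\models C\sqsubseteq D$, is exactly where the difficulty sits, and for the chase of \cref{def:canmodelinfinite} as literally defined it fails: take $\Tmc=\emptyset$ and $\Amc=\{A(a),B(a)\}$; then $A^{\barcanonmodel}=B^{\barcanonmodel}=\{a\}$, so $\barcanonmodel\models A\sqsubseteq B$ although $\Omc\not\models A\sqsubseteq B$ (similarly for RIs with $\Amc=\{r(a,b),s(a,b)\}$). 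The pitfall you correctly identified for empty (or accidentally coincident) extensions cannot be repaired by passing to a different ontology's canonical model, because the chase seeded only with $N_I(\Amc)$ simply lacks the ``generic'' elements needed to refute non-entailed inclusions. This is precisely what \cref{Definition: Canonical Model Assertion} fixes by putting an element $c_D$ into the domain for every normal-form concept $D$ over the signature, with $c_D\in B^{\canonmodel}$ iff $\Tmc\models D\sqsubseteq B$. To close your proof you would have to either enrich \cref{def:canmodelinfinite} with such generic elements (e.g.\ seed the chase with a fresh root of the tree-shaped interpretation $\Imc_C$ for each relevant concept $C$ and a fresh $r$-edge for each role name $r$) and rerun your homomorphism argument from those elements, or weaken the CI/RI claims to the entailment-to-satisfaction direction only.
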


We are now ready to state our theorem
combining the results
of \cref{aggregatedlemmanonconvex,thm:caninf} and the notion of strong faithfulness for IQs and TBox axioms.
\begin{restatable}{theorem}{canonicalmuonenonconvexTBoxfaithfulboth}
\label{Theorem: Mu2 IQ Faithfulboth}
Let \Omc be a satisfiable  \ELH ontology and let $\barcanonmodel$ be the canonical model of \Omc (see \cref{def:canmodelinfinite}). The $m$-dimensional $f$-geometric interpretation of $\barcanonmodel$
(see \cref{def:geointerbar}) is a strongly IQ and TBox faithful model of \Omc. 
\end{restatable}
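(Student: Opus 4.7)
The plan is to obtain the theorem as a direct composition of the two transport results stated immediately before it. \cref{aggregatedlemmanonconvex} moves satisfaction of an \ELH axiom back and forth between a classical interpretation $\Imc$ and its geometric image $\bargeointerp$, and \cref{thm:caninf} moves entailment back and forth between an ontology $\Omc$ and its canonical model $\barcanonmodel$. Stringing these together should immediately yield each clause in the definition of strong IQ and TBox faithfulness.

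First I would check that $\bargeointerp_{\barcanonmodel}$ is well-defined. The domain of $\barcanonmodel$ is countable (the ABox contributes finitely many elements at stage $0$, and every rule application in \cref{def:canmodelinfinite} introduces only finitely many fresh tree elements), so we may enumerate $\Delta^{\barcanonmodel}$ as an interval in $\mathbb{N}$ starting at $0$ as assumed in \cref{mudefinition}; the finite/infinite case split in $\barmu$ then determines $\bargeointerp_{\barcanonmodel}$ via \cref{def:geointerbar}.

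Next, for strong concept-faithfulness, suppose $\bargeointerp_{\barcanonmodel}(b)\in\bargeointerp_{\barcanonmodel}(C)$ for some \ELH concept $C$ and individual $b$. By the IQ satisfaction clause in \cref{Definition: Geometric Interpretation} this is precisely $\bargeointerp_{\barcanonmodel}\models C(b)$; \cref{aggregatedlemmanonconvex} then gives $\barcanonmodel\models C(b)$, and the IQ clause of \cref{thm:caninf} concludes $\Omc\models C(b)$. The role part of strong IQ faithfulness is analogous: if $f(\bargeointerp_{\barcanonmodel}(a),\bargeointerp_{\barcanonmodel}(b))\in\bargeointerp_{\barcanonmodel}(r)$, then $\bargeointerp_{\barcanonmodel}\models r(a,b)$, hence $\barcanonmodel\models r(a,b)$, hence $\Omc\models r(a,b)$. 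For strong TBox-faithfulness, given a CI or RI $\tau$ with $\bargeointerp_{\barcanonmodel}\models\tau$, apply \cref{aggregatedlemmanonconvex} to obtain $\barcanonmodel\models\tau$ and then the CI/RI clause of \cref{thm:caninf} to obtain $\Omc\models\tau$.

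Given the heavy lifting already done in \cref{aggregatedlemmanonconvex,thm:caninf}, the only real content of the theorem is bookkeeping: matching the antecedents of the strong-faithfulness clauses with the satisfaction conditions of \cref{Definition: Geometric Interpretation}, and checking that these three cases exhaust the clauses in \cref{Definition: Faithfulness}. I therefore do not expect a serious obstacle here; the genuine difficulty of the section is concentrated in \cref{aggregatedlemmanonconvex} (handling $\exists r.C$ via the linear map $f$ and the interval mapping $\barmu$) and in \cref{thm:caninf} (showing the chase-based canonical model faithfully reflects all IQ, CI, and RI entailments), both of which we are permitted to invoke.
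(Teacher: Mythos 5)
Your proposal is correct and follows essentially the same route as the paper: the paper's proof simply factors the composition of \cref{aggregatedlemmanonconvex} and \cref{thm:caninf} into two intermediate lemmas (one for IQ faithfulness, one for TBox faithfulness) and then combines them, which is exactly the bookkeeping you describe. Your additional remark on the countability of $\Delta^{\barcanonmodel}$ (so that the enumeration assumed in \cref{mudefinition} applies) is a small point of rigor the paper leaves implicit.
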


 What \cref{Theorem: Mu2 IQ Faithfulboth} demonstrates is that the existence of  canonical models for \ELH allows us to connect our result relating classical  and geometric interpretations to faithfulness. 
 This property of canonical models is crucial and can potentially be extended to other description logics that also have canonical models (however, many of such logics do not have polynomial size canonical models, a property we use in the next section,  so we focus on \ELH in this work).

\begin{corollary}\label{cor:non-convex-f}
For all $m \geq 1$ and isomorphism preserving linear maps $f$, \ELH has the strong faithfulness property over $m$-dimensional $f$-geometric interpretations.
\end{corollary}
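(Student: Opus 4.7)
The plan is to obtain Corollary 3.13 as an essentially immediate consequence of Theorem 3.12, once the definition of the strong faithfulness property (Definition 2.13) is unpacked. Fix arbitrary $m \geq 1$, an isomorphism-preserving linear map $f\colon \mathbb{R}^m \times \mathbb{R}^m \to \mathbb{R}^{2m}$, and a satisfiable $\ELH$ ontology $\Omc$; the task is to exhibit a single $m$-dimensional $f$-geometric interpretation of $\Omc$ that is simultaneously strongly IQ faithful and strongly TBox faithful.

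First I would build the canonical model $\barcanonmodel$ of $\Omc$ via Definition 3.8, which is well-defined since $\Omc$ is satisfiable. Then I would form its $m$-dimensional $f$-geometric interpretation $\bargeointerp$ using Definition 3.5, understood as lifted from $\mathbb{R}^1$ to $\mathbb{R}^m$ by embedding each interval $\barmu(d)$ into the first coordinate of $\mathbb{R}^m$ (padding the remaining coordinates with zeros), and using the prescribed $f$ to combine pairs for the role regions. Applying Theorem 3.12 to $\bargeointerp$ then yields precisely the two required faithfulness conditions, and since $m$, $f$, and $\Omc$ were arbitrary, this is exactly the strong faithfulness property claimed.

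The only step that deserves care, and the main (mild) obstacle in the argument, is verifying that this lift of Definition 3.5 to arbitrary $m$ and $f$ still makes the semantic clause for $\exists r.C$ in Definition 3.4 behave as intended. The three isomorphism-preserving conditions on $f$ are exactly what is needed here: injectivity of $f$ on $\mathbb{R}^m \times \{0\}^m$ and on $\{0\}^m \times \mathbb{R}^m$, together with the trivial-intersection condition, guarantee that once the images of domain elements sit in the first-coordinate slice, $f(v,u) \in \bargeointerp(r)$ holds if and only if the underlying canonical model contains the corresponding role pair $(d,e) \in r^{\barcanonmodel}$. Once this verification is in place, the rest of the proof is merely a matter of plugging Theorem 3.12 into Definition 2.13, and the corollary follows.
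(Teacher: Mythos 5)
Your proposal is correct and takes essentially the same route as the paper, which offers no separate proof of this corollary and treats it as an immediate consequence of the preceding theorem (the strongly IQ- and TBox-faithful geometric interpretation of the canonical model witnesses the existential in the definition of the strong faithfulness property, for every satisfiable ontology). Your extra verification that the $1$-dimensional construction of $\barmu$ lifts to arbitrary $m$ and arbitrary isomorphism preserving $f$ --- using linearity together with the two injectivity conditions and the trivial-intersection condition to recover the role pair from $f(v,u)$ --- is exactly the detail the paper leaves implicit, and it is handled correctly.
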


However, requiring that the regions of the geometric model are convex makes strong faithfulness more challenging. The next theorem hints that such models require more dimensions and a more 
principled approach to map \ELH ontologies in a continuous vector space. 

\begin{figure}
\begin{center}

    \centering
    \includegraphics[scale=1.4]{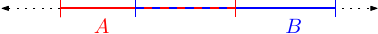}
    \caption{An illustration of the region $\geointerp(A) \cap \geointerp(B)$.}
    \label{Figure: Figure 1 A and B Only}

\end{center}
\end{figure}
\begin{restatable}{theorem}{impossibleconvexRone}\label{thm:negative}
 $\ELH$  does not have the strong faithfulness property over convex $1$-dimensional $f$-geometric models.
\end{restatable}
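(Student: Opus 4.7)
The plan is to exhibit a concrete satisfiable \ELH ontology \Omc for which no convex $1$-dimensional $f$-geometric interpretation can simultaneously model \Omc and be strongly concept-faithful; since the strong faithfulness property requires strong IQ-faithfulness (which includes concept-faithfulness) and TBox-faithfulness to hold together with $\eta \models \Omc$, this is enough to refute the property. The counterexample I propose is the pure ABox
\[
\Omc = \{A_1(a_1),\, A_2(a_2),\, A_3(a_3),\, A_1(b),\, A_2(b),\, A_3(b)\}
\]
over three concept names and four pairwise distinct individual names. The classical interpretation placing $b^{\Imc}$ in all three $A_i^{\Imc}$ and each $a_j^{\Imc}$ only in $A_j^{\Imc}$ witnesses that \Omc is satisfiable and that $\Omc \not\models A_i(a_j)$ whenever $i \neq j$.

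Next I would assume, for contradiction, that some convex $1$-dimensional $f$-geometric $\eta$ is a strongly concept-faithful model of \Omc, and write $x_i := \eta(a_i)$ and $y := \eta(b)$. Because the ontology is symmetric under any permutation of the pairs $(A_i, a_i)$, I may relabel so that $x_1 \leq x_2 \leq x_3$, and then split on the position of $y$ relative to $x_2$. If $y \leq x_2$, then $\eta(A_3) \subseteq \mathbb{R}$ is convex and contains both $y$ (as $\eta \models A_3(b)$) and $x_3$ (as $\eta \models A_3(a_3)$); hence it contains the interval $[y, x_3]$ and in particular $x_2$, so strong concept-faithfulness forces $\Omc \models A_3(a_2)$, contradicting the non-entailment above. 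If $y > x_2$, the symmetric argument on $\eta(A_1)$, which contains $x_1$ and $y$ and thus the interval $[x_1, y]$, yields $\Omc \models A_1(a_2)$, again a contradiction. The borderline $y = x_2$ falls into the first case and additionally gives $\eta(a_2) = \eta(b) \in \eta(A_1)$, which contradicts strong concept-faithfulness directly.

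The main creative obstacle is picking the right counterexample rather than carrying out the case analysis: a one- or two-concept ABox can always be embedded faithfully in $\mathbb{R}$ by placing the individuals on disjoint sub-intervals, so at least three concept names are needed. The three-concept ``triangle'' above is the minimal configuration that no linear order of $\{x_1, x_2, x_3, y\}$ can accommodate, because in dimension one the median of the $x_i$'s is inevitably trapped inside the convex hull of any two of the other three points, and convex subsets of $\mathbb{R}$ are precisely intervals. Once the example is fixed, the remainder is a short one-dimensional convex-hull argument and I do not anticipate further technical difficulties.
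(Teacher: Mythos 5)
Your proof is correct and rests on the same obstruction as the paper's: three pairwise-overlapping concepts sharing a common element, each also having a ``private'' element, cannot be realized by convex sets (intervals) on the real line. Your execution is in fact tighter than the paper's --- you fix a single concrete ABox and run a clean ordering argument on $\eta(a_1),\eta(a_2),\eta(a_3),\eta(b)$ with a two-way split on the position of $\eta(b)$ relative to the median, whereas the paper enumerates interval configurations more informally --- but the underlying idea, and the (shared, implicit) reading that a strongly faithful model must also satisfy $\Omc$, are the same.
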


\begin{proof}
We reason by cases in order to show impossibility of the strong faithfulness property for the class of \emph{convex} $1$-dimensional $f$-geometric model for arbitrary $\ELH$ ontologies. 
Let $\mathcal{O}$ be an $\ELH$ ontology, $A$, $B$, $C$ $\in N_C$ concept names, $a,b \in N_I$ individuals, and let $\eta(A)$, $\eta(B)$, $\eta(C)$, $\eta(a)$, and $\eta(b)$ be their corresponding geometric interpretations to $\mathbb{R}^1$. Assume $\mathcal{O} \models A \sqcap B(a)$. 
There are three initial cases on how to choose the interval placement of $\eta(A)$ and $\eta(B)$: 

\begin{itemize}
    \item \textbf{Null intersection: $(\eta(A) \cap \eta(B)) = \emptyset$}.
    
     If $(\eta(A) \cap \eta(B)) = \emptyset$, then either $(\eta(a) \in \eta(A)$ and $(\eta(a) \not\in \eta(B)$, or $(\eta(a) \in \eta(B)$ and $(\eta(a) \not\in \eta(A)$. Recall the definition of satisfiability for concept assertions. Since we assumed $\mathcal{O} \models A\sqcap B(a)$, we would want our geometric interpretation to be such that $\eta(a) \in \eta(A) \cap \eta(B)$, a contradiction.
    
    \item \textbf{Total inclusion: $\eta(A) \subseteq \eta(B)$ and/or $\eta(B) \subseteq \eta(A)$}.
    
     Consider an extension $\mathcal{O'}$ of our ontology where $\mathcal{O'} \models A(c)$ and $\mathcal{O'} \not\models B(c)$. If we let $\eta(A) \subseteq \eta(B)$, it is clear that our ontology cannot be faithfully modeled, since by our assumption of total inclusion, we would have that $\eta(c) \in \eta(A)$ and $\eta(c) \in \eta(B)$, which goes against $\mathcal{O'} \not\models B(c)$. The same holds for the total inclusion in the other direction, where $\eta(B) \subseteq \eta(A)$. Therefore, we go to our last initial case to be considered.
    
    \item \textbf{Partial intersection: $(\eta(A) \cap \eta(B)) \not = \emptyset $.}
    
    This is in fact the only way of faithfully giving a geometric interpretation to our concept assertion $A\sqcap B(a)$, while still leaving room for ABox axioms such that an arbitrary element could belong to one of our classes $A$ or $B$ without necessarily belonging to both of them. Then, $\eta(A) \cap \eta(B)$ and $\eta(A) \not \subseteq \eta(B)$ nor $\eta(B) \not\subseteq \eta(A)$.

\end{itemize}

After having forced the geometric interpretation of our two initial concepts $A$ and $B$ to partially intersect, we now show that by adding a third concept $C$, in which $\mathcal{O} \models A \sqcap B \sqcap C(a)$, either $\eta(A) \subset \eta(B) \cup \eta(C)$ or $\eta(B) \subset \eta(A) \cup \eta(C)$, even though this interpretation is not included in our original ontology. We are unable to include a concept assertion $A(a) \in \mathcal{O}$ without also having that $\eta(a) \in \eta(C)$  in our geometric interpretation, or likewise for the case in which $B(a) \in \mathcal{O}$.

Stemming from the fact that our geometric interpretation must be convex, and it is modeled in an euclidean $\mathbb{R}^1$ space, we can visualize our classes $A$, $B$, and $C$ as intervals on the real line. Assume, without loss of generality, that $\eta(A)$ is placed to the left of $\eta(B)$ (see \cref{Figure: Figure 1 A and B Only}). Then, $C$ can only be placed either to the right of $B$ or to the left of $A$.

By reasoning in the same way as before, we know that $\eta(C)$ must partially intersect with either $\eta(A)$ or $\eta(B)$, so one end of the interval representing $C$ must be placed in $\eta(A) \cap \eta(B)$, without us having that either $\eta(C) \subseteq \eta(A)$, $\eta(C) \subseteq \eta(B)$, $\eta(C) \subseteq \eta(A) \cap \eta(B)$ or $\eta(C) \subseteq \eta(A) \cup \eta(B)$. This last requirement is due to the fact that we want to be able to have an ontology such that $\mathcal{O} \models C(a)$ and where $\mathcal{O} \not\models A(a)$, $\mathcal{O} \not\models B(a)$, or $\mathcal{O} \not\models A(a) \sqcap B(a)$. 
Assuming the intersection between $\eta(A)$ and $\eta(B) \not = \emptyset$ there are three more cases to be considered:

\begin{itemize}
    \item \textbf{C is  in the intersection of A and B: $\eta(C) \subseteq \eta(A) \cap \eta(B) $ (Fig. 2 (a)).}
    
    If $\eta(C) \subseteq \eta(A) \cap \eta(B)$, it is immediately clear that by extending $\mathcal{O}$ such that $\mathcal{O} \models C(b)$ but $\mathcal{O} \not\models A(b)$, we would end up with $\eta(b) \in \eta(C)$. But since we assumed that $\eta(C) \subseteq \eta(A) \cap \eta(B)$, this means that $\eta(b) \in \eta(A)$, and therefore our geometric interpretation would model the concept assertion $A(b)$, a contradiction.
    \end{itemize}
    
\begin{itemize}    
    \item \textbf{C goes from the intersection: $\eta(A) \cap \eta(B)$ to $\eta(A) \setminus \eta(B)$ (Fig. 2 (b)).}
    
        In this situation, we would have $\eta(C) \subseteq \eta(A)$, and if $\mathcal{O} \models C(a)$, we would necessarily have that $\eta(a) \in \eta(C)$, but this means we would also have $\eta(a) \in \eta(A)$, leading to the unwarranted consequence that $\eta \models A(a)$. There is one last case.
        
\end{itemize}

\begin{itemize}

     \item \textbf{C is placed in a region such that: $\eta(C) \cap (\eta(A) \cup \eta(B)) \not = \emptyset$ and $\eta(C) \setminus (\eta(A) \cup \eta(B)) \not = \emptyset$ (Fig. 2 (c)).}
    
    This would mean that $\eta(B) \subseteq \eta(A) \cup \eta(C)$, and that any concept assertion $B(a)$ would entail either $C(a)$ or $A(a)$ in our geometric interpretation, while it is not necessary that $\mathcal{O} \models A(a)$ or $\mathcal{O} \models B(a)$. Since we are in $\mathbb{R}^1$, this desired placement can happen either to the right or to the left of the number line. By assumption that $\eta(A)$ has been placed to the left of $\eta(B)$ as shown in \cref{Figure: Figure 1 A and B Only} and following, we have just shown that placing $\eta(C)$ to the right of $\eta(B)$ leads to a contradiction. The same reasoning applies if we choose to place it to the left of $\eta(A)$.

\end{itemize}

There are no more cases to be considered.

\end{proof}


    \begin{figure}[ht]
        \begin{center}

            \centering
                \includegraphics[scale=1.3]{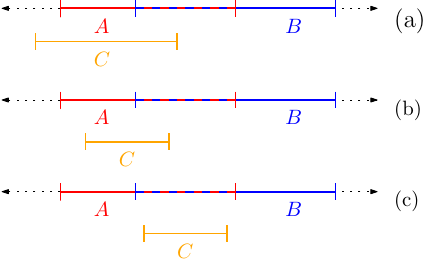}
                \caption{The three possible cases when there is an element in the intersection of $A,B,C$.}
                \label{Figure: Figure 2 Three Cases}
        \end{center}
    \end{figure}

 The problem illustrated in \cref{thm:negative} arises even if the ontology language does not have roles (as it is the case, e.g., of Boolean \ALC, investigated by \cite{Oezcep2020}).  It also holds if we restrict to normalized \ELH. We address the problem of mapping normalized \ELH ontologies to convex geometric models in the next section.

 \section{Strong Faithfulness on Convex Models}\label{sec:convex}

We prove that normalized \ELH has the strong faithfulness property over a class of \emph{convex} geometric models.
We introduce a new mapping $\mu$ from the domain of a classical interpretation \Imc to a vector space and a new geometric interpretation $\geointerp$ based on this mapping. Our proofs now require us to fix the isomorphism preserving linear map $f$ used in the definition of geometric interpretations (\cref{Definition: Geometric Interpretation}). We choose the concatenation function, denoted $\oplus$, as done in the work by \cite{DBLP:conf/kr/Gutierrez-Basulto18}.
The strategy for proving strong faithfulness for normalized \ELH   requires us to (a) find a suitable non-convex geometric interpretation for concepts and roles, and (b) show that the convex hull of the region maintains the property intact.

\begin{definition}
\label{mu2definition}
    Let $\mathcal{I} = (\Delta^{\mathcal{I}}, \cdot^{\mathcal{I}})$ be a classical $\ELH$ interpretation, and $\mathcal{O}$  an $\ELH$ ontology. 
    We start by defining a new map $\mu \colon \Delta^\Imc \mapsto \Rdim$, where \dimsymb
    corresponds to $\vert N_I(\Omc) \vert + \vert N_C(\Omc) \vert + \vert N_R(\Omc) \vert \cdot \vert \Delta^{\mathcal{I}} \vert$. 
    We assume, without loss of generality, a fixed ordering in our indexing system for positions in vectors, where indices $0$ to $|N_I(\Omc)|-1$ correspond to the indices for individual names; $|N_I(\Omc)|$ to $k=|N_I(\Omc)|+|N_C(\Omc)|-1$ correspond to the indices for concept names; and $k$ to $k+(|N_R(\Omc)| \cdot \vert \Delta^\Imc\vert ) -1$ correspond to the indices for role names together with an element of $\Delta^\Imc$. 
    We adopt the notation $v[a]$, $v[A]$, and $v[r, d]$ to refer to the   position in a vector $v$ corresponding to $a$, $A$, and $r$ together with an element $d$, respectively (according to our indexing system). 
    For example,   $v{[a]=0}$   means that the value at the index   corresponding to the individual name $a$ is $0$. A vector is \emph{binary} iff $v \in \{0,1\}^{\dimsymb}$.
    We now define $\mu$ using binary vectors.
For all $d \in \Delta^{\mathcal{I}}$, $a \in N_I$, $A \in N_C$ and $r \in N_R$:

\begin{itemize}
    \item $\mu(d){[a]=1}$ if $d = a^{\mathcal{I}}$, otherwise  $\mu(d){[a]}=0$,
    \item $\mu(d){[A]=1}$ if $d \in A^{\mathcal{I}}$, otherwise $\mu(d){[A]=0}$, and 
    \item $\mu(d){[r, e]=1}$ if $(d, e) \in r^{\mathcal{I}}$,
    otherwise $\mu(d){[r, e]=0}$.
    
\end{itemize}
\end{definition}

\begin{figure}[t]
        \begin{center}
            \centering
                \includegraphics[scale=0.7]{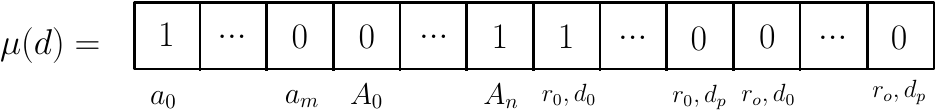}
                \caption{A  mapping to the binary vector $\mu(d)$ when $d\in\Delta^\Imc$, where $d \in a_0^{\mathcal{I}}$, $d \in A_0^{\mathcal{I}}$ and $ (d, d_0)\in r^\Imc_0$.} 
                \label{Figure: Figure 3 - Mufunction}
        \end{center}
    \end{figure}
    \cref{Figure: Figure 3 - Mufunction} illustrates a possible mapping for element $d\in\Delta^\Imc$, where $d \in a_0^{\mathcal{I}}$, $d \in A_0^{\mathcal{I}}$ and $ (d, d_0)\in r^\Imc_0$.

\begin{example}
\label{ex:mappingmuex2}

    Let \Omc be an ontology such as in \cref{ex: faithfulness}, with $\Tmc = \{A \sqsubseteq B\}$, \Amc being extended to $\Amc' = \{A(a), B(b), r(a,b)\}$. Let \Imc be an interpretation such that $\Delta^{\Imc} = \{d, e\}$, with $a^\Imc = d$, $b^\Imc = e$, $r^\Imc = \{(d,e)\}$, $A^\Imc = \{d\}$, and $B^\Imc = \{d, e\}$. In this case, $\mu: \Delta^\Imc \mapsto \mathbb{R}^6$, with $\vert N_I(\Omc) \vert = 2$ (corresponding to $a$ and $b$), $\vert N_C(\Omc) \vert = 2$  (corresponding to $A$ and $B$), and $\vert N_R(\Omc) \vert \cdot \vert \Delta^{\mathcal{I}} \vert = 2$ corresponding to $r$, $d$, and $e$. Assume our ordering in the definition holds, and assume further that the names in the signature of \Omc are ordered alphabetically. We have that the six dimensions correspond to, respectively: $a, b, A, B, [r,d], [r,e]$. By applying the mapping to the elements of $\Delta^\Imc$, we get the vectors $\mu(d) = (1, 0, 1, 1, 0, 1)$ and $\mu(e) = (0, 1, 0, 1, 0, 0)$.
\end{example}
We now introduce a definition for (possibly) non-convex geometric interpretations, in line with the mapping $\mu$ above.

\begin{definition}%
\label{def:geointerp2} 
    Let $\mathcal{I}$ be a classical $\EL$ interpretation.
    The \emph{geometric interpretation} of $\mathcal{I}$,   denoted $\eta_{\mathcal{I}}$, is defined as:
\begin{align*}
    \eta_{\mathcal{I}}(a) & := \mu(a^{\mathcal{I}})\text{, for all }a \in N_I, \\
     \eta_{\mathcal{I}}(A) & := \{\mu(d) \mid \mu(d){[A]=1}, d \in \Delta^\interp\}\text{, for all }A \in N_C, \\
     \eta_{\mathcal{I}}(r) & := \{\mu(d) \oplus \mu(e) \mid \mu(d){[r, e]=1}, d, e \in \Delta^\interp\}\text{, for  all }  r  \in  N_R. 
\end{align*}

\end{definition}

 We provide two examples, one covering both concept and role assertions, and one (which can be represented graphically), covering only concept assertions.

\begin{example}
Let $\Omc$, $\Imc$ be as in \cref{ex:mappingmuex2}. Then, the geometric interpretation $\geointerp$ of \Imc is as: $\geointerp(a) = \mu(d)$, $\geointerp(b) = \mu(e)$, $\geointerp(A) = \{\mu(d)\}$, $\geointerp(B) = \{\mu(d), \mu(e)\}, \geointerp(r) = \{\mu(d) \oplus \mu(e)\}$. We remark that this is a strongly faithful TBox embedding.
\end{example}

An intuitive way of thinking about our definition $\mu$ is that it maps domain elements to a subset of the vertex set of the $\dimsymb$-dimensional unit hypercube (see \cref{exmaplecube}).

\begin{figure}[ht]
        \begin{center}
            \centering
                \includegraphics[scale=1.2]{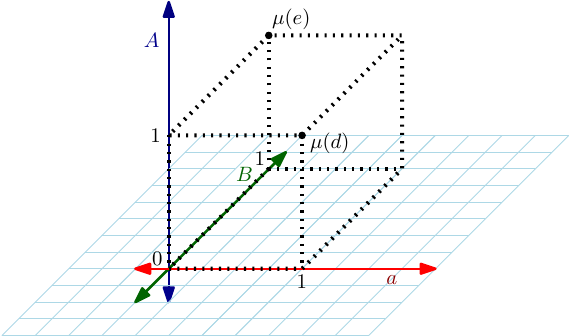}
                \caption{A mapping of $\mu(d)$ and $\mu(e)$ according to interpretation $\Imc$. The axes colored in red, blue, and green correspond to the dimensions associated with $a$, $A$, and $B$, respectively.} \label{Figure: Figure 4 - Cube}
        \end{center}
\end{figure}

\begin{example}\label{exmaplecube}
Consider $A, B \in N_C$ and $a \in N_I$. Let \Imc be an interpretation   with $ d,e \in \Delta^\Imc$ such that $d = a^\Imc$, $d \in A^\Imc$, and $e \in A^\Imc \cap B^\Imc$. We illustrate   $\mu(d)$ and $\mu(e)$ in \cref{Figure: Figure 4 - Cube}. In symbols,  $\mu(d)[a]=1$, $\mu(d)[A]=1$, and $\mu(d)[B]=0$, while $\mu(e)[a]=0$, $\mu(e)[A]=1$, and $\mu(e)[B]=1$.
\end{example}

Before proving strong faithfulness with convex geometric models, we show that $\geointerp$ preserves the axioms that hold in the original   interpretation \Imc. 
It is possible for two elements $d, e \in \Delta^\interp$ to be mapped to the same vector $v$ as a result of our mapping $\mu$. This may happen when $d, e$ $\not \in \{a^\interp \mid  a \in N_I\}$ but it does hinder our results.

\begin{restatable}{proposition}{gammadequalstogammae}
If $\mu(d) = \mu(e)$, then $d \in C^\interp$ iff $e \in C^\interp$.
\end{restatable}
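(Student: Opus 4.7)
The plan is to proceed by structural induction on the concept $C$, using the fact that the equality $\mu(d)=\mu(e)$ forces agreement in every coordinate of the vector, which by the definition of $\mu$ in \cref{mu2definition} yields two useful semantic equalities. Namely, for every concept name $A \in N_C(\Omc)$ we get $d \in A^\Imc \iff e \in A^\Imc$ (from the $[A]$ coordinates), and for every role name $r \in N_R(\Omc)$ and every $d' \in \Delta^\Imc$ we get $(d, d') \in r^\Imc \iff (e, d') \in r^\Imc$ (from the $[r, d']$ coordinates). These two facts are precisely what is needed to push the induction through the two nontrivial \ELH{} constructors.

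First I would handle the base cases. For $C = \top$ both elements belong to $\Delta^\Imc$; for $C = \bot$ neither does; for $C = A$ a concept name, the equivalence is immediate from the $[A]$-coordinate agreement noted above. Next I would treat the inductive step for $C = C_1 \sqcap C_2$, where membership in $(C_1 \sqcap C_2)^\Imc$ decomposes to conjoint membership in $C_1^\Imc$ and $C_2^\Imc$; applying the inductive hypothesis to each conjunct yields the desired equivalence.

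The existential case $C = \exists r.D$ is the main step to get right, although it is still routine. Suppose $d \in (\exists r.D)^\Imc$; then there is some $d' \in D^\Imc$ with $(d, d') \in r^\Imc$. By the role-coordinate agreement we obtain $(e, d') \in r^\Imc$, and $d' \in D^\Imc$ is shared (it does not depend on $d$ or $e$), so $e \in (\exists r.D)^\Imc$. The converse direction is symmetric.

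The only subtlety worth flagging is that the claim is formulated for arbitrary \ELH{} concepts $C$, not only those built from names occurring in $\Omc$. However, the argument for concept names $A \notin N_C(\Omc)$ and role names $r \notin N_R(\Omc)$ is vacuous in the sense that these symbols do not appear as coordinates of $\mu$; one simply observes that the classical interpretation $\Imc$ already fixes their extensions, so the equivalences $d \in A^\Imc \iff e \in A^\Imc$ and $(d, d') \in r^\Imc \iff (e, d') \in r^\Imc$ hold or fail independently of $\mu$, and the proposition is meaningful under the standard reading that $\Imc$ is the interpretation under consideration. I do not anticipate any genuine obstacle, but this indexing caveat is the one place where care is needed when stating the induction.
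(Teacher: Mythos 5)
Your proof is correct and follows essentially the same route as the paper's: structural induction on $C$, with the concept-name case read off the $[A]$ coordinates, the conjunction case via the inductive hypothesis, and the existential case via agreement on the $[r,d']$ coordinates together with the observation that the witness $d'$ is shared. The signature caveat you flag is a reasonable extra remark that the paper does not discuss, but it does not change the argument.
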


We use a similar strategy as before to prove our result.

\begin{restatable}{theorem}{onelemmatorulethemallnonconvexmutwo}
\label{aggregatedlemmanonconvexmutwo}
    For all \ELH axioms $\alpha$,
$\Imc \models \alpha$ iff $\geocanon \models \alpha$. 
\end{restatable}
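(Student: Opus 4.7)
The plan is to establish the theorem through a central structural-induction lemma followed by a case analysis on axiom types. The key lemma to prove first is: for every $\ELH$ concept $C$ and every $d \in \Delta^{\Imc}$, we have $d \in C^{\Imc}$ iff $\mu(d) \in \geocanon(C)$. Once this is available, each of the four axiom cases (concept assertion, role assertion, CI, RI) reduces to the lemma together with the definitions of $\geocanon(a)$ and $\geocanon(r)$ from Definition~\ref{def:geointerp2}.

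I would prove the lemma by induction on the structure of $C$. The base cases $C = A$, $C = \top$, and $C = \bot$ are immediate from Definition~\ref{def:geointerp2}. The inductive case $C = C_1 \sqcap C_2$ follows from the IH and the pointwise set-theoretic definition of intersection in both classical and geometric settings. The key inductive case is $C = \exists r.D$. For the forward direction, given $d \in (\exists r.D)^{\Imc}$ choose a witness $e \in D^{\Imc}$ with $(d,e) \in r^{\Imc}$; the IH yields $\mu(e) \in \geocanon(D)$, while by construction $\mu(d) \oplus \mu(e) \in \geocanon(r)$, so $\mu(d) \in \geocanon(\exists r.D)$. For the reverse direction, suppose $\mu(d) \in \geocanon(\exists r.D)$; then there is $u \in \geocanon(D)$ with $\mu(d) \oplus u \in \geocanon(r)$, which by Definition~\ref{def:geointerp2} equals $\mu(d') \oplus \mu(e')$ for some $(d',e') \in r^{\Imc}$. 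Since $\oplus$ is concatenation, the first $\dimsymb$ coordinates force $\mu(d) = \mu(d')$ and the last $\dimsymb$ coordinates force $u = \mu(e')$; the IH applied to $D$ gives $e' \in D^{\Imc}$, hence $d' \in (\exists r.D)^{\Imc}$. To transfer this back to $d$, I would invoke Proposition~\ref{gammadequalstogammae} with $C := \exists r.D$: since $\mu(d) = \mu(d')$, we conclude $d \in (\exists r.D)^{\Imc}$.

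With the lemma in place, the axiom cases follow directly. For an IQ $C(a)$, the chain $\Imc \models C(a)$ iff $a^{\Imc} \in C^{\Imc}$ iff $\mu(a^{\Imc}) \in \geocanon(C)$ iff $\geocanon(a) \in \geocanon(C)$ is exactly $\geocanon \models C(a)$. For a role assertion $r(a,b)$, the forward direction is by construction of $\geocanon(r)$; the reverse uses that the coordinates indexed by individual names in $\mu$ force the matching representatives to be exactly $(a^{\Imc}, b^{\Imc})$. For a CI $C \sqsubseteq D$, the forward direction uses the key lemma on both $C$ and $D$ after verifying by a side induction that every vector of $\geocanon(C)$ arises as $\mu(d)$ for some $d \in C^{\Imc}$ (which holds since normalized concepts have the form $A$, $A \sqcap B$, or $\exists r.A$); the converse is the contrapositive of the same argument. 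The RI case is analogous, based on the shape $\geocanon(r) = \{\mu(d) \oplus \mu(e) : (d,e) \in r^{\Imc}\}$.

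The hard part will be the $\exists r.D$ induction step, precisely because $\mu$ need not be injective: different domain elements may share a binary fingerprint, so the equation $\mu(d) \oplus u = \mu(d') \oplus \mu(e')$ does not immediately identify $d$ with $d'$. This is exactly the obstacle that Proposition~\ref{gammadequalstogammae} is designed to overcome, and its invocation is what closes the argument. A secondary subtlety is in the CI direction: one must ensure $\geocanon(C)$ contains only $\mu$-images so that a containment in $\geocanon(D)$ can be pulled back to a classical subset relation, which is where relying on normalized concepts keeps the argument clean.
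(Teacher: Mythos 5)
Your proposal is correct and follows essentially the same route as the paper: the same central lemma ($d \in C^{\Imc}$ iff $\mu(d) \in \eta_{\Imc}(C)$, proved by induction on $C$), the same auxiliary facts (every vector in $\eta_{\Imc}(C)$ is some $\mu(d)$ with $d \in C^{\Imc}$; every element of $\eta_{\Imc}(r)$ decomposes as $\mu(d)\oplus\mu(e)$ with $(d,e)\in r^{\Imc}$), and the same per-axiom-type case split for IQs, CIs, and RIs. Your explicit use of Proposition~\ref{gammadequalstogammae} to handle the non-injectivity of $\mu$ in the $\exists r.D$ step is, if anything, slightly more careful than the paper's own write-up of that case.
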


Since the definition of $\geointerp$ uses vectors in a dimensional space that depends on the size of $\Delta^\Imc$ and \Omc, we need the canonical models to be finite.
Therefore, we employ \emph{finite} canonical models for normalized \EL because canonical models
for arbitrary \ELH CIs are not guaranteed to be  finite.
Our definition of canonical model is a non-trivial adaptation of  other  definitions found in the literature (e.g., \cite{Institut_für_Theoretische_Informatik_2015,Lutz_Wolter_2010}). 

    Let $\mathcal{A}$ be an  \EL ABox, $\mathcal{T}$ a normalized \EL TBox, and $\mathcal{O} := \mathcal{A} \cup \mathcal{T}$. 
    We first define: 
    \begin{align*}
\Delta^{\mathcal{I}_\mathcal{O}}_{u}   & := \{c_A \, \vert \, A \in N_C(\mathcal{O}) \cup \{\top\}\} \text{ and }\\
\Delta^{\mathcal{I}_\mathcal{O}}_{u+} & := \Delta^{\mathcal{I}_\mathcal{O}}_{u}\cup \{c_{A\sqcap B} \, \vert \, A,B \in N_C(\mathcal{O}) \} \ \cup   \ \{c_{\exists r.B} \, \vert \, r\in N_R(\mathcal{O}), B \in N_C(\mathcal{O}) \cup  \{\top\}\}.    \end{align*}   
\begin{definition}
\label{Definition: Canonical Model Assertion}
    The \emph{canonical model} $\mathcal{I}_\mathcal{O}$ of $\mathcal{O}$ is defined as 
    \begin{align*}
\Delta^{\mathcal{I}_\mathcal{O}} & := N_I(\mathcal{A}) \cup \Delta^{\mathcal{I}_\mathcal{O}}_{u+}, \quad\quad a^{\mathcal{I}_\mathcal{O}} := a, \\    
    A^{\mathcal{I}_\mathcal{O}} &:= \{a \in N_I(\mathcal{A}) \, \vert \, \mathcal{O} \models A(a)\} \ \cup  \ \{c_D \in \Delta^{\mathcal{I}_\mathcal{O}}_{u+} \, \vert \, \mathcal{T} \models D \sqsubseteq A\}\text{, and }
    \\
    r^{\mathcal{I}_\mathcal{O}} & := \{(a, b) \in N_I(\mathcal{A})\times   N_I(\mathcal{A})\, \vert \, \mathcal{O} \models r(a,b)  \} \ \cup \\
    &   \{(a, c_B) \in N_I(\mathcal{A}) \times \Delta^{\mathcal{I}_\mathcal{O}}_u \, \vert \, \mathcal{O} \models \exists r.B(a)\}  \cup
    \{(c_{\exists s.B},c_B)\in \Delta^{\mathcal{I}_\mathcal{O}}_{u+} \times \Delta^{\mathcal{I}_\mathcal{O}}_u \, \vert \, \Tmc\models s\sqsubseteq r\} 
    \\ &
    \cup   \{(c_D, c_B) \in \Delta^{\mathcal{I}_\mathcal{O}}_{u+} \times \Delta^{\mathcal{I}_\mathcal{O}}_u \, \vert \, \mathcal{T} \models D\sqsubseteq A,    \ \mathcal{T} \models A \sqsubseteq \exists r.B, \text{ for some } A\in N_C(\Omc)\},
    \end{align*}
    for all $a \in N_I$, $A \in N_C$, and $r \in N_R$.
\end{definition}

 \noindent
The following holds for the canonical model just defined.
\begin{restatable}{theorem}{canmodel}%
\label{finitecanmodelprops}
    Let \Omc be a normalized \ELH ontology.
    The following holds
    \begin{itemize}
     \item   for all \ELH IQs and CIs $\alpha$ in normal form over ${\sf sig}(\Omc)$,   $\canonmodel \models \alpha$ iff $\mathcal{O} \models \alpha$; and
    \item for all RIs $\alpha$ over   ${\sf sig}(\Omc)$, $\canonmodel \models \alpha$ iff $\Omc \models \alpha$.
    \end{itemize}
\end{restatable}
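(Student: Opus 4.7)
The proof naturally splits along the two directions of the biconditionals. For the $\Leftarrow$ direction, the plan is to establish that $\canonmodel$ is itself a model of $\Omc$: then any axiom entailed by $\Omc$ must hold in $\canonmodel$. I would verify this by case analysis on the shape of axioms. ABox assertions hold immediately from the first clauses of $A^{\canonmodel}$ and $r^{\canonmodel}$. For each of the three normalized CI forms ($A_1 \sqcap A_2 \sqsubseteq B$, $\exists r.A \sqsubseteq B$, $A \sqsubseteq \exists r.B$), I would check inclusion element-by-element, splitting on whether the witness element is an individual name $a \in N_I(\Amc)$ (and using the fact that $\Omc \models A_1(a)$ and $\Omc \models A_2(a)$ imply $\Omc \models B(a)$ when $A_1 \sqcap A_2 \sqsubseteq B \in \Tmc$, and similarly for the other forms) or an element $c_D \in \Delta^{\canonmodel}_{u+}$ (using $\Tmc$-entailment closure). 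For RIs $r \sqsubseteq s \in \Tmc$, I would verify that every pair produced by a clause for $r^{\canonmodel}$ also appears in $s^{\canonmodel}$ via the analogous clause, using transitivity of $\Tmc \models \cdot \sqsubseteq \cdot$.

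For the $\Rightarrow$ direction on IQs and CIs, the core is a \emph{main lemma} proven by structural induction on a concept $C$ in normal form over ${\sf sig}(\Omc)$: for every $a \in N_I(\Amc)$, $a \in C^{\canonmodel}$ iff $\Omc \models C(a)$; and for every $c_D \in \Delta^{\canonmodel}_{u+}$, $c_D \in C^{\canonmodel}$ iff $\Tmc \models D \sqsubseteq C$. Base case $C = A$ is immediate from the definition of $A^{\canonmodel}$. The conjunction case $C = B_1 \sqcap B_2$ follows from the IH together with the fact that $\Tmc \models D \sqsubseteq B_1$ and $\Tmc \models D \sqsubseteq B_2$ jointly entail $\Tmc \models D \sqsubseteq B_1 \sqcap B_2$. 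The existential case $C = \exists r.B$ is the technical heart of the proof: supposing $a \in (\exists r.B)^{\canonmodel}$, there exists $x$ with $(a,x) \in r^{\canonmodel}$ and $x \in B^{\canonmodel}$, and I would do a case analysis on which of the four clauses in the definition of $r^{\canonmodel}$ placed $(a,x)$ there, invoking the IH on $B$ to conclude $\Omc \models \exists r.B(a)$ in each subcase (and analogously when the source is a $c_D$ element). The reverse inclusion in this case uses that the canonical model explicitly realizes every existential restriction entailed by $\Omc$.

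Given the main lemma, CIs $C \sqsubseteq D$ in normal form are handled by evaluating at a canonical representative: since $C$ is a normal-form concept over ${\sf sig}(\Omc)$, the element $c_C$ exists in $\Delta^{\canonmodel}_{u+}$ and satisfies $c_C \in C^{\canonmodel}$ by construction (using $\Tmc \models C \sqsubseteq C$); if $C^{\canonmodel} \subseteq D^{\canonmodel}$ then $c_C \in D^{\canonmodel}$, and the lemma gives $\Tmc \models C \sqsubseteq D$, hence $\Omc \models C \sqsubseteq D$. IQs and role assertions reduce analogously to the lemma applied to individual names.

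The RI direction is, I expect, the main obstacle. The natural candidate witness for $\Omc \not\models r \sqsubseteq s \Rightarrow r^{\canonmodel} \not\subseteq s^{\canonmodel}$ is a pair of the form $(c_{\exists r.B}, c_B)$, which clause~3 places in $r^{\canonmodel}$ since $\Tmc \models r \sqsubseteq r$. For this pair to land in $s^{\canonmodel}$, only clauses~3 and~4 are eligible: clause~3 would require exactly $\Tmc \models r \sqsubseteq s$, giving the desired contradiction, but clause~4 could spuriously contribute via some concept name $A$ with $\Tmc \models \exists r.B \sqsubseteq A$ and $\Tmc \models A \sqsubseteq \exists s.B$. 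The plan is to choose $B$ (among $N_C(\Omc) \cup \{\top\}$) so that no such $A$ exists, leveraging the fact that role hierarchy entailment in $\ELH$ collapses to chains of RIs in $\Tmc$ and is independent of the concept axioms; alternatively, if every candidate $B$ is obstructed by clause~4, one extracts from these obstructions a model of $\Omc$ refuting $r \sqsubseteq s$, contradicting $\Omc \not\models r \sqsubseteq s$ by a standard model-theoretic argument. This rules out the spurious inclusion and completes the proof.
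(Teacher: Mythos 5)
Your handling of the IQs and CIs is sound and is essentially the paper's own argument in a different packaging: where you prove one truth lemma by structural induction ($a\in C^{\canonmodel}$ iff $\Omc\models C(a)$, and $c_D\in C^{\canonmodel}$ iff $\Tmc\models D\sqsubseteq C$) and then evaluate each CI at its canonical representative $c_C$, the paper unrolls the same content into separate claims, one per normal form, using exactly the witnesses you name ($c_{A_1\sqcap A_2}$, $c_{\exists r.B}$, $c_A$). Your $\Leftarrow$ direction via ``$\canonmodel$ is a model of $\Omc$'' is also fine, though the paper instead reads each entailed axiom off the closure conditions in the definition directly.

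The gap is in the RI case, and you put your finger on exactly the right difficulty but your proposed repair does not close it. Consider $\Tmc=\{\exists r.\top\sqsubseteq A,\ A\sqsubseteq\exists s.\top,\ \exists r.A\sqsubseteq A,\ A\sqsubseteq\exists s.A\}$ with empty ABox. Since $\Tmc$ declares no RIs, $\Omc\not\models r\sqsubseteq s$ (the model with $\Delta^\Imc=\{d,e\}$, $r^\Imc=\{(d,e)\}$, $A^\Imc=\{d\}$, $s^\Imc=\{(d,d)\}$ satisfies $\Tmc$ and violates $r\sqsubseteq s$). Here $r^{\canonmodel}$ consists only of the two clause-3 pairs $(c_{\exists r.\top},c_\top)$ and $(c_{\exists r.A},c_A)$, and \emph{both} are placed in $s^{\canonmodel}$ by the fourth clause with witness concept name $A$, because $\Tmc\models\exists r.\top\sqsubseteq A$, $\Tmc\models\exists r.A\sqsubseteq A$, $A\sqsubseteq\exists s.\top\in\Tmc$, and $A\sqsubseteq\exists s.A\in\Tmc$. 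So every candidate $B$ is obstructed by clause~4, yet there is no semantic contradiction to extract from the obstructions, since $\Omc\not\models r\sqsubseteq s$ genuinely holds while $r^{\canonmodel}\subseteq s^{\canonmodel}$; both branches of your dichotomy fail. (For what it is worth, the paper's own proof of this claim sidesteps rather than resolves the issue: it takes the witness $(c_{\exists r.\top},c_\top)$ and asserts that a pair of the form $(c_{\exists s'.B},c_B)$ lies in $s^{\canonmodel}$ \emph{iff} $\Tmc\models s'\sqsubseteq s$, which ignores precisely the clause-4 contribution you worried about. So the obstacle you identified is real, but it is an obstacle to the construction as defined, not one that a cleverer choice of witness pair or a countermodel-extraction argument can route around.)
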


The main difference between our definition and other canonical model definitions in the literature is related to our purposes of proving strong faithfulness, as we discuss in Section~\ref{sec:convex}.
We require the CIs and RIs (in normal form and in ${\sf sig}(\Omc)$) that are entailed by the ontology are exactly those that hold in the canonical model.

\begin{restatable}{theorem}{aggregatednonconvexalphaTBOXmutwo}

Let $\Omc$ be an \ELH ontology and let $\canonmodel$ be the canonical model of \Omc (\cref{Definition: Canonical Model Assertion}). The \dimsymb-dimensional (possibly non-convex) \(\oplus\)-geometric interpretation $\eta_{\canonmodel}$ of $\canonmodel$ is a strongly and IQ and TBox faithful model of \Omc.
\end{restatable}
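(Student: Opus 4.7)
The plan is to chain the two preceding results, \cref{aggregatedlemmanonconvexmutwo,finitecanmodelprops}, to obtain both flavours of strong faithfulness at once. First, I will verify that $\eta_{\canonmodel}$ is a geometric model of \Omc in the first place: each axiom $\alpha \in \Omc$ is trivially entailed by \Omc, so by \cref{finitecanmodelprops} we have $\canonmodel \models \alpha$, and by \cref{aggregatedlemmanonconvexmutwo} this transfers to $\eta_{\canonmodel} \models \alpha$. This establishes that $\eta_{\canonmodel}$ is indeed a model, a prerequisite before we can reasonably ask whether it is strongly faithful.

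For strong IQ faithfulness, suppose $\eta_{\canonmodel}(b) \in \eta_{\canonmodel}(C)$ for a concept $C$ in normal form and $b \in N_I$. By the definition of satisfaction this is just $\eta_{\canonmodel} \models C(b)$; applying \cref{aggregatedlemmanonconvexmutwo} right-to-left gives $\canonmodel \models C(b)$, and the IQ clause of \cref{finitecanmodelprops} delivers $\Omc \models C(b)$. The role-assertion case is strictly analogous: if $\eta_{\canonmodel}(a) \oplus \eta_{\canonmodel}(b) \in \eta_{\canonmodel}(r)$, then $\eta_{\canonmodel} \models r(a,b)$, and the same chain yields $\Omc \models r(a,b)$.

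For strong TBox faithfulness, given a CI or RI $\tau$ over ${\sf sig}(\Omc)$ with $\eta_{\canonmodel} \models \tau$, I again apply \cref{aggregatedlemmanonconvexmutwo} to obtain $\canonmodel \models \tau$ and then \cref{finitecanmodelprops} to conclude $\Omc \models \tau$. Since \Omc is normalized, it suffices to consider CIs in normal form, which is precisely the scope of \cref{finitecanmodelprops}.

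Nothing in this argument is combinatorially hard: the real work has already been carried out in the two cited results. The alignment of the inductive definition of $\eta$ on complex concepts with the classical semantics, handled in \cref{aggregatedlemmanonconvexmutwo}, and the careful construction of $\canonmodel$ so as not to introduce spurious CI or RI entailments, handled in \cref{finitecanmodelprops}, together reduce the present theorem to a short chain of bi-implications. The one point that deserves attention is the restriction to normal-form concepts in the IQ clause: since \Omc is normalized, this matches the intended scope, but a stronger statement over arbitrary \ELH concepts would require an additional structural induction using $\eta(C \sqcap D) = \eta(C) \cap \eta(D)$ and the $\oplus$-based clause for $\exists r.C$, both of which already agree with the classical definition via \cref{aggregatedlemmanonconvexmutwo}.
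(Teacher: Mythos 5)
Your proposal is correct and follows essentially the same route as the paper: the paper's proof also just chains the canonical-model properties (\cref{finitecanmodelprops}) with the equivalence between $\canonmodel$ and $\eta_{\canonmodel}$, packaged into the two intermediate lemmas \cref{canongeoequivalence} and \cref{Theorem: TBox Faithfulness mu2 nonconvex case}, which you effectively inline by citing \cref{aggregatedlemmanonconvexmutwo} and \cref{finitecanmodelprops} directly. Your explicit remark that the IQ clause of \cref{finitecanmodelprops} only covers normal-form IQs, so the faithfulness claim is read over that range, is a point the paper glosses over and is worth keeping.
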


We now proceed with the main theorems of this section. Note that the dimensionality of the image domain of $\mu$ can be much higher than the one for $\barmu$ in \cref{sec:nonconvex} (which can be as low as just $1$, see \cref{cor:non-convex-f}). 
We use the results until now as intermediate steps to bridge the gap between classical  and convex geometric interpretations. In our construction of convex geometric interpretations, the vectors mapped by $\mu$ and the regions given by the non-convex geometric interpretation $\geointerp$ are the anchor points for the convex closure of these sets. We introduce the notion of the \emph{convex hull} of a geometric interpretation $\geointerp$ using \cref{defconvexhull}. 

\begin{definition}%
\label{def:geoconvex}
    We denote by $\geoconvex$ the \emph{convex hull of the geometric interpretation} $\geointerp$ and define $\geoconvex$ as follows:
    \begin{align*}
         \geoconvex(a) & := \mu(a^\mathcal{I})\text{, for all }a \in N_I;\\
     \geoconvex(A) & := \{\mu(d) \mid d \in A^\mathcal{I}\}^*\text{, for all }A \in N_C; \text{ and }\\
     \geoconvex(r) & := \{\mu(d) \oplus \mu(e) \mid (d,e) \in r^\interp\}^*\text{, for all }r \in N_R.
\end{align*}
\end{definition}
    \begin{remark}
        In Definition~\ref{def:geoconvex}, 
        $\geoconvex(a) = \geointerp(a)$ for all $a\in N_I$. We include the star symbol in the notation to make it clear that we are referring to the geometric interpretation of individual names in the context of convex regions for concepts and roles.
    \end{remark}





\begin{restatable}{theorem}{onelemmatorulethemall}
    \label{aggregatedlemma}

    Let \(\geointerp\) be a geometric interpretation as in \cref{def:geointerp2}.
    If $\alpha$ is an \ELH CI, an \ELH RI, or an \ELH IQ in normal form then $\geointerp \models \alpha$ iff $\geoconvex \models \alpha$. 
\end{restatable}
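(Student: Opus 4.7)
The plan is to proceed by case analysis on the form of $\alpha$, since in normal form $\alpha$ is either a concept assertion $A(a)$, a role assertion $r(a,b)$, an IQ $C(a)$ with $C\in\{A,\exists r.A, A\sqcap B\}$, a normalized CI ($A_1\sqcap A_2\sqsubseteq B$, $\exists r.A\sqsubseteq B$, or $A\sqsubseteq \exists r.B$), or an RI $r\sqsubseteq s$. Two general tools do most of the work. The first is that $\geointerp(E)\subseteq\geoconvex(E)$ for every $E\in N_C\cup N_R$ by construction, together with \cref{convexsubsetobvious}, which handles the ``easy'' implications for inclusions. The second is a coordinate-pinning principle tailored to $\mu$: by definition, every generator of $\geointerp(A)$ has coordinate $[A]=1$ and every generator of $\geointerp(r)$ has coordinate $[r,e]=1$ in its $e$-slot; hence $v\in\geoconvex(A)$ forces $v[A]=1$, and for any convex combination $v=\sum_i\lambda_i w_i$ of binary vectors whose sum is itself binary at coordinate $c$, all $w_i[c]$ with $\lambda_i>0$ are pinned to $v[c]$.

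Using these tools, the non-existential cases fall out quickly. For the assertions $A(a)$ and $r(a,b)$, the vectors $\mu(a^\Imc)$ and $\mu(a^\Imc)\oplus\mu(b^\Imc)$ are binary, so \cref{binarycorollary} yields that membership in $\geoconvex(E)=\geointerp(E)^\ast$ is equivalent to membership in $\geointerp(E)$. For $r\sqsubseteq s$, the forward direction is \cref{convexsubsetobvious} and the backward direction uses that every element of $\geointerp(r)$ is binary, hence a member of $\geoconvex(s)=\geointerp(s)^\ast$ if and only if a member of $\geointerp(s)$. For $A_1\sqcap A_2\sqsubseteq B$, the forward direction takes $v\in\geoconvex(A_1)\cap\geoconvex(A_2)$, writes $v=\sum_i\lambda_i\mu(d_i)$ with $d_i\in A_1^\Imc$ and $\lambda_i>0$, and pins $v[A_2]=1$ to conclude $d_i\in A_2^\Imc$ for each $i$, so each $\mu(d_i)\in\geointerp(B)\subseteq\geoconvex(B)$ and $v\in\geoconvex(B)$ by convexity; the backward direction pulls $\mu(d)$ into $\geoconvex(B)$ and invokes \cref{binarycorollary}.

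The main obstacle is the existential axioms. The key lemma here is the following unfolding: whenever $w\oplus u\in\geoconvex(r)$, there is a decomposition $w\oplus u=\sum_i\lambda_i(\mu(d_i)\oplus\mu(e_i))$ with $(d_i,e_i)\in r^\Imc$ and $\lambda_i>0$, and because $\oplus$ is the concatenation map (hence linear), $w=\sum_i\lambda_i\mu(d_i)$ and $u=\sum_i\lambda_i\mu(e_i)$. For the backward direction of $A\sqsubseteq\exists r.B$, pick $d\in A^\Imc$, obtain $\mu(d)\oplus u\in\geoconvex(r)$ with $u\in\geoconvex(B)$, and apply the unfolding: the binarity of $\mu(d)$ and the extreme-point property of hypercube vertices force $\mu(d_i)=\mu(d)$ for all $i$ with $\lambda_i>0$, whence $(d,e_i)\in r^\Imc$ by the proposition asserting that $\mu$-equal elements behave identically; in parallel, $u\in\geoconvex(B)$ gives $u[B]=1$, and pinning this against $u=\sum_i\lambda_i\mu(e_i)$ forces $e_i\in B^\Imc$ for every $i$ with $\lambda_i>0$, producing the required witness $e_i$. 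The case $\exists r.A\sqsubseteq B$ is dual: for $v\in\geoconvex(\exists r.A)$, unfold $v\oplus u$ and use $u[A]=1$ to force $e_i\in A^\Imc$, giving $\mu(d_i)\in\geointerp(\exists r.A)\subseteq\geointerp(B)\subseteq\geoconvex(B)$ for every $i$, and then $v=\sum_i\lambda_i\mu(d_i)\in\geoconvex(B)$ by convexity. The forward direction of the existentials is the mirror image: given $v=\sum_i\lambda_i\mu(d_i)\in\geoconvex(A)$ with witnesses $e_i$ furnished by $\geointerp\models A\sqsubseteq\exists r.B$, set $u:=\sum_i\lambda_i\mu(e_i)\in\geoconvex(B)$ and use linearity of $\oplus$ to conclude $v\oplus u\in\geoconvex(r)$.

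The hardest part, and the one that requires the most care, is the coordination between the two convex-combination representations needed in the backward direction of $A\sqsubseteq\exists r.B$ and of the IQ $\exists r.A(a)$: the witness $u$ is required to live in $\geoconvex(B)$ while simultaneously $w\oplus u$ lives in $\geoconvex(r)$, using the \emph{same} scalars $\lambda_i$. The coordinate-pin argument on $[B]$ (resp.\ $[A]$), together with the extreme-point argument on the first half of the concatenation, is precisely what synchronises these two descriptions. The IQ cases $A(a)$, $(A\sqcap B)(a)$, and $\exists r.A(a)$ are then reduced to their CI counterparts by noting that $\geoconvex(a)=\geointerp(a)=\mu(a^\Imc)$ is binary and plays the role of the representative $d$ above.
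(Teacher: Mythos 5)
Your argument is correct, and at the level of its key technical devices it coincides with the paper's: both use \cref{convexsubsetobvious} and \cref{binarycorollary} for the easy directions, and both handle the existential cases by writing a member of $\geoconvex(r)$ as a convex combination of concatenations $\mu(d_i)\oplus\mu(e_i)$ with $(d_i,e_i)\in r^\Imc$ (\cref{vinetarimpliesvmud2}), splitting it by linearity of $\oplus$ into prefix and suffix combinations, and pinning coordinates of binary summands (this is the prefix/suffix analysis in the existential case of \cref{convexiffinterpforconcepts}, together with \cref{Lemma: v in eta estrela A}). The organizational difference is that the paper proves the CI and RI cases by chaining through the classical interpretation --- $\geointerp\models\alpha$ iff $\Imc\models\alpha$ iff $\geoconvex\models\alpha$, via \cref{tboxnonconvex,conceptinclusionconvex,roleinclusionnonconvexmu2,roleinclusionconvexfinal} --- whereas you relate $\geointerp$ and $\geoconvex$ directly on the generators, exploiting normal form explicitly. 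Your route buys a cleaner treatment of the backward direction for $A\sqsubseteq\exists r.B$ and $\exists r.A\sqsubseteq B$: the paper's \cref{conceptinclusionconvex} passes from $\mu(d)\in\geoconvex(D)$ to $\mu(d)\in\geointerp(D)$ by citing \cref{binarycorollary}, which as stated applies to sets of the form $S^\ast$, while $\geoconvex(\exists r.B)$ is defined by the inductive semantics rather than as a convex hull; your unfolding-plus-pinning argument supplies exactly the step needed there (every binary element of $\geoconvex(\exists r.B)$ lies in $\geointerp(\exists r.B)$). In exchange, the paper's route covers arbitrary (non-normalized) concept inclusions in \cref{conceptinclusionconvex} and reuses lemmas already proved for the non-convex case, so it is shorter on the page. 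Your remark that the two convex-combination representations of the witness $u$ need not coincide --- only the value $u[B]=1$ (resp.\ $u[A]=1$) is transferred between them --- is the correct way to synchronize the two descriptions and is sound.
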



We are now ready to consider strong IQ and TBox faithfulness for convex regions.

\begin{restatable}{theorem}{canonicalmutwoconvexTBoxfaithfulboth}
\label{canonicalgeoconvexIQTBoxStrong}
Let \Omc be a normalized \ELH ontology and let \canonmodel be the canonical model of \Omc (\cref{Definition: Canonical Model Assertion}). 
The \dimsymb-dimensional convex $\oplus$-geometric interpretation of \canonmodel (\cref{def:geoconvex}) 
is a strongly IQ and TBox faithful model of \Omc.
\end{restatable}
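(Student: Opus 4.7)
The plan is to bootstrap from the strong faithfulness of the non-convex geometric interpretation $\geointerp$ of the canonical model, established in the preceding theorem, by transferring axiom satisfaction between $\geoconvex$ and $\geointerp$ via \cref{aggregatedlemma}. Concretely, for every normal-form CI, RI, or IQ $\alpha$ over ${\sf sig}(\Omc)$, the chain of equivalences
\begin{align*}
\Omc \models \alpha \;\Longleftrightarrow\; \canonmodel \models \alpha \;\Longleftrightarrow\; \geointerp \models \alpha \;\Longleftrightarrow\; \geoconvex \models \alpha
\end{align*}
holds, where the first equivalence is \cref{finitecanmodelprops}, the second is \cref{aggregatedlemmanonconvexmutwo} instantiated with $\Imc = \canonmodel$, and the third is \cref{aggregatedlemma}. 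Since $\Omc$ is normalized, each of its axioms is in normal form, so the forward direction of the chain applied axiom-by-axiom immediately yields $\geoconvex \models \Omc$.

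For strong TBox faithfulness I would apply the chain in the reverse direction: given any normalized CI or RI $\tau$ with $\geoconvex \models \tau$, \cref{aggregatedlemma} gives $\geointerp \models \tau$, \cref{aggregatedlemmanonconvexmutwo} gives $\canonmodel \models \tau$, and \cref{finitecanmodelprops} gives $\Omc \models \tau$. Strong IQ faithfulness splits into two cases. For a normal-form IQ $C(b)$ with $C$ being $A$, $A_1 \sqcap A_2$, or $\exists r.A$, the assumption $\geoconvex(b) \in \geoconvex(C)$ is exactly $\geoconvex \models C(b)$, so the same chain delivers $\Omc \models C(b)$. For a role assertion, $\mu(a) \oplus \mu(b) \in \geoconvex(r)$ is $\geoconvex \models r(a,b)$, and the chain again yields $\Omc \models r(a,b)$.

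The substance of the argument is thus fully absorbed into \cref{aggregatedlemma}: the delicate point is that the convex hull preserves exactly the satisfied normal-form axioms. The genuinely subtle case is an IQ of shape $\exists r.A(b)$, where membership of the binary vector $\mu(b)$ in $\geoconvex(\exists r.A)$ may be witnessed by some non-binary $u \in \geoconvex(A)$ together with a non-binary $\mu(b) \oplus u \in \geoconvex(r)$, so that \cref{binarycorollary} cannot be invoked naively to conclude $\mu(b) \in \geointerp(\exists r.A)$. Once \cref{aggregatedlemma} is granted, the present theorem is a routine assembly; finiteness of $\canonmodel$ guarantees that $\dimsymb$ is well-defined and that $\mu$, $\geointerp$, and $\geoconvex$ are bona fide finite geometric interpretations.
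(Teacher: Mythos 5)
Your proposal is correct and follows essentially the same route as the paper: the paper's proof also assembles the result from the chain $\Omc \models \alpha \Leftrightarrow \canonmodel \models \alpha \Leftrightarrow \geointerp \models \alpha \Leftrightarrow \geoconvex \models \alpha$, merely packaged through the intermediate lemmas \cref{theoremcaniqconvex} and \cref{theoremcantboxconvexity}, which in turn invoke \cref{finitecanmodelprops}, the non-convex equivalences, and the convex-hull transfer results underlying \cref{aggregatedlemma}. Your identification of the $\exists r.A(a)$ case with non-binary witnesses as the genuinely delicate point matches where the paper's effort actually lies.
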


We now state a corollary analogous to \cref{cor:non-convex-f}, though here 
we cannot state it for all classes of $m$-dimensional $f$-geometric interpretations (we know by \cref{thm:negative} that this is impossible for any class of $1$-dimensional geometric interpretations). 
We omit ``$m$-dimensional'' in \cref{cor:convex} to indicate that this holds for the larger class containing geometric interpretations with an arbitrary number of dimensions (necessary to cover the whole language).
\begin{corollary}\label{cor:convex}
Normalized    \ELH has the strong faithfulness property over $\oplus$-geometric interpretations.
\end{corollary}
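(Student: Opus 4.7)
The plan is to obtain \cref{cor:convex} as an essentially immediate consequence of \cref{canonicalgeoconvexIQTBoxStrong}, with only a small amount of bookkeeping to align the classes of interpretations involved. First, I would observe that normalized \ELH ontologies are always satisfiable, since the language admits neither negation nor disjointness axioms; thus the satisfiability hypothesis in \cref{Definition: Faithfulness} is trivially met for every $\Omc$ under consideration, and the ``every satisfiable ontology'' quantifier in the definition of the strong faithfulness property reduces to ``every ontology''.

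Next, given an arbitrary normalized \ELH ontology $\Omc$, I would construct its canonical model $\canonmodel$ according to \cref{Definition: Canonical Model Assertion}. Because $\Omc$ is finite and the domain $\Delta^{\canonmodel} = N_I(\Amc) \cup \Delta^{\mathcal{I}_\mathcal{O}}_{u+}$ is built from finitely many witnesses indexed by concept and role names of $\Omc$, the interpretation $\canonmodel$ is finite. This is essential, because the dimension $\dimsymb = |N_I(\Omc)| + |N_C(\Omc)| + |N_R(\Omc)| \cdot |\Delta^{\canonmodel}|$ used by the mapping $\mu$ in \cref{mu2definition} depends on $|\Delta^{\canonmodel}|$; finiteness ensures that $\mu$, and hence $\geointerp$ and $\geoconvex$, are well-defined.

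Then I would apply \cref{def:geoconvex} to form the \dimsymb-dimensional convex $\oplus$-geometric interpretation $\geoconvex$ of $\canonmodel$. Directly by \cref{canonicalgeoconvexIQTBoxStrong}, this interpretation is both strongly IQ faithful and strongly TBox faithful with respect to $\Omc$, so it serves as the desired witness. The final step is to note that $\geoconvex$ is a convex $\oplus$-geometric interpretation of dimension $\dimsymb$ and therefore, a fortiori, an $\oplus$-geometric interpretation in the class of $\oplus$-geometric interpretations of arbitrary dimension that appears in the statement of the corollary.

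There is no genuine obstacle at this stage: all the substantive work (the construction of $\mu$, the correctness of $\canonmodel$ via \cref{finitecanmodelprops}, the preservation of CIs, RIs and IQs when passing to the convex hull via \cref{aggregatedlemma}, and the combination of these into strong IQ and TBox faithfulness in \cref{canonicalgeoconvexIQTBoxStrong}) has already been discharged. The only mildly delicate point worth making explicit in the write-up is that, since the strong faithfulness property is existentially quantified over the target class, exhibiting a single convex witness in a specific dimension $\dimsymb$ suffices to establish the property over the larger, dimension-free class of $\oplus$-geometric interpretations.
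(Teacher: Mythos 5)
Your proposal is correct and follows exactly the route the paper intends: the corollary is an immediate consequence of \cref{canonicalgeoconvexIQTBoxStrong}, with the witness being the $\dimsymb$-dimensional convex $\oplus$-geometric interpretation of the canonical model, and the dimension-free class accommodating the fact that $\dimsymb$ varies with $\Omc$. Your additional remarks on the automatic satisfiability of \ELH ontologies and the finiteness of $\canonmodel$ are consistent with the paper's own observations and add no divergence.
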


\begin{remark}[Number of parameters]
    The final number of parameters for the convex geometric interpretation $\geocanon$ of the canonical model $\canonmodel$ built on ontology \Omc is, thus: $O(\dimsymb \cdot n)$ where $\dimsymb$ is the embedding dimension given by map $\mu$ (Definition \ref{mu2definition}), and $n = \vert \Delta^{\mathcal{I}_\mathcal{O}} \vert$.
\end{remark}
 



\section{Model Checking on Geometric Models}%
\label{sec:modelcheck}

Here we study upper bounds for the complexity of model checking problems using convex geometric models  as those defined  in \cref{def:geoconvex} and normalized \(\ELH\) axioms.
The results and algorithms in this section are underpinned by \cref{aggregatedlemma}, which allow us to use \(\geointerp\) instead of \(\geoconvex\) for model checking purposes. 
The advantage of using \(\geointerp\) instead of \(\geoconvex\) is that the algorithms need to inspect only finitely many elements in the extension of each concept and each role, as long as the original interpretation \(\I\) has finite domain (and we only need to consider a finite number of concept, role, and individual names).
For example, let \(\I = (\Delta^\I, \cdot^\I)\) with \(\Delta^\I\) finite.
If \(A \in \NC\) then \(\geoconvex(A)\) can have infinitely many elements, while \(\geointerp(A)\)  will have at most \(\sizedelta\) elements (by \cref{def:geointerp2}).
Before presenting the algorithms, we discuss some assumptions that   facilitate our analysis:
\begin{enumerate}
    \item\label{assumptionBasic} indexing vectors and comparing primitive types use constant time;
    \item\label{assumptionAccess} accessing the extension of an individual, concept, or role name in \(\geointerp\) takes constant time;
    \item\label{assumptionIter} iterating over \(\geointerp(A)\) (and also \(\geointerp(r)\)) consumes time \(O(\sizedelta)\) (\(O(\sizedeltas)\)) for all \(A \in \NC\) (\(r \in \NR\)); and
    \item\label{assumptionMember} if \(A \in \NC\) (\(r \in \NR\)), testing if \(v \in \geointerp(A)\) (\(v \in \geointerp(r))\) consumes time \(O(\dimsymb \cdot \sizedelta)\) (\(O(\dimsymb \cdot \sizedeltas)\)).
\end{enumerate}

Assumption (1) is standard when analysing worst-case complexity.
The others are pessimistic assumptions on the implementation of \(\geointerp\) (and \(\geoconvex\)).
E.g., encoding the binary vectors as integers and implementing bit wise operations could reduce the complexity of membership access and iteration.
Also, using a hash map with a perfect hash function would decrease the membership check to constant time.

We are now ready to present our upper bounds. For normalised \ELH CIs, we provide \cref{alg:conceptsubalgo} to decide if a concept inclusion holds in a convex geometric model built as in \cref{def:geoconvex}.
\Cref{aggregatedlemma} guarantees that \(\geoconvex \models C \sqsubseteq D\) iff \(\geointerp \models C \sqsubseteq D\) for any CI in normalised \ELH.
Thus, as long as \(\Delta^\I\) is finite, \cref{alg:conceptsubalgo} terminates and outputs whether \(\geoconvex \models C \sqsubseteq D\).
\Cref{conceptsub_complexity} establishes that \cref{alg:conceptsubalgo} runs in 
polynomial time in the size of \(\Delta^\I\) and the dimension of  vectors in \(\geoconvex\).

\begin{algorithm}[htbp]
    \caption{Check if a convex geometric model (\cref{def:geoconvex}) satisfies an \ELH CI in normal form}%
    \label{alg:conceptsubalgo}
    \begin{algorithmic}[1]
        \Require{a convex geometric interpretation \(\geoconvex\) and an \ELH concept inclusion in normal form \(\alpha\)}
        \Ensure{returns \valtrue if \(\geoconvex \models \alpha\), \valfalse otherwise}
        \If{\(\alpha = A \sqsubseteq B\)}\Comment{\(A, B \in \NC\)}
            \For{\(v \in \geointerp(A)\)}\label{CInf1loopstart}\label{CInf1start}
                \If{\(v[B] = 0\)}\label{CInf1check}
                    \Return{\valfalse}\label{CInf1loopend}\label{CInf1end}
                \EndIf
            \EndFor
        \ElsIf{\(\alpha = A_1 \sqcap A_2 \sqsubseteq B\)}\Comment{\(A_1, A_2, B \in \NC\)}\label{CInf2start}
            \For{\(v \in \geointerp(A_1)\)}\label{CInf2loopstart}
                \If{\(v[A_2] = 1 \land v[B] = 0\)}\label{CInf2check}
                    \Return{\valfalse}\label{CInf2end}
                \EndIf 
            \EndFor
        \ElsIf{\(\alpha = A \sqsubseteq \exists r.B\)}\Comment{\(A, B \in \NC, r \in \NR\)}
            \For{\(v \in \geointerp(A)\)}\label{CInf3loopstart}
                count \(\gets\) 0\; 
                \For{\(u \in \geointerp(B)\)}
                    \If{\(v \oplus u \in \geointerp(r)\)}\label{CInf3lookup}
                        \State \(count \gets count + 1\)
                    \EndIf
                \EndFor\label{CInf3loopend}
                \If{count = 0}
                    \Return{\valfalse}
                \EndIf
            \EndFor
        \ElsIf{\(\alpha = \exists r.A \sqsubseteq B\)}\Comment{\(A, B \in \NC, r \in \NR\)}
            \For{\(v \oplus u \in \geointerp(r)\)}\label{CInf4loopstart}\label{CInf4start}
                \If{\(u[A] = 1\) and \(v[B] = 0\)}
                    \Return{\valfalse}\label{CInf4end}
                \EndIf
            \EndFor
        \EndIf
        \State \Return{\valtrue}
    \end{algorithmic}
\end{algorithm}

\begin{restatable}{theorem}{algoplexityconceptinclusion}
\label{conceptsub_complexity}
    Given a finite geometric interpretation \(\geointerp\) and an \ELH CI in normal form,
    \cref{alg:conceptsubalgo} runs in time in \(O(\dimsymb \cdot \sizedeltaabbrev^4)\), where \(\dimsymb\) is as in \cref{mu2definition} and \(\sizedeltaabbrev = |\Delta^\I|\).
\end{restatable}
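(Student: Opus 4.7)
First I would break the proof into four cases, one for each shape of normalized \ELH CI handled by \cref{alg:conceptsubalgo}. Since exactly one branch executes on any given input, the total running time is the maximum of the four branch-wise bounds, so each branch can be analyzed independently using Assumptions~(1)--(4).

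The three branches for \(A \sqsubseteq B\), \(A_1 \sqcap A_2 \sqsubseteq B\), and \(\exists r.A \sqsubseteq B\) each consist of a single loop. Using Assumption~(3) one bounds the loop length by \(\sizedeltaabbrev\) in the concept-iteration cases and by \(\sizedeltaabbrev^2\) in the role-iteration case (a role extension contains at most \(|\Delta^\I|^2\) pairs), while Assumption~(1) guarantees that each iteration performs only constant-time index reads into the binary vectors. These three branches therefore cost at most \(O(\sizedeltaabbrev^2)\), which is dominated by the bound to be proved.

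The decisive case---and the one that pins down the overall bound---is the branch for \(A \sqsubseteq \exists r.B\). It performs a doubly nested loop over \(\geointerp(A) \times \geointerp(B)\), which contributes \(O(\sizedeltaabbrev^2)\) iterations by Assumption~(3); for each pair \((v, u)\) it issues the membership query \(v \oplus u \in \geointerp(r)\), which by Assumption~(4) costs \(O(\dimsymb \cdot \sizedeltaabbrev^2)\). Multiplying yields \(O(\sizedeltaabbrev^2 \cdot \dimsymb \cdot \sizedeltaabbrev^2) = O(\dimsymb \cdot \sizedeltaabbrev^4)\); taking the maximum over all four branches gives the claimed bound.

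The main obstacle, modest as it is, will be to flag that in the nested \(\exists r.B\) branch it is the role-membership test, rather than the loop length alone, that drives the asymptotics. Once that is observed, the remainder of the proof reduces to a direct multiplication of the costs supplied by the four assumptions, with no hidden subtleties.
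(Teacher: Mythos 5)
Your proposal is correct and follows essentially the same case analysis as the paper: both bound the two concept-iteration branches by $O(\sizedeltaabbrev)$ (or $O(\sizedeltaabbrev^2)$ for the role-iteration branch), and both identify the $A \sqsubseteq \exists r.B$ branch as dominant because the nested loop of $O(\sizedeltaabbrev^2)$ iterations each pays the $O(\dimsymb \cdot \sizedeltaabbrev^2)$ role-membership cost of Assumption~4. No gaps.
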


As \(\dimsymb\) depends linearly on \(\Delta^\I\) and the size of the signature.
If the latter is regarded as a constant, we can simply say that \cref{alg:conceptsubalgo} has time in \(O(\sizedeltaabbrev^5)\), where \(\sizedeltaabbrev = \sizedelta\).
Similarly as for \cref{alg:conceptsubalgo}, \cref{aggregatedlemma} allows us to design an algorithm to determine if a convex geometric model \(\geoconvex\) satisfies an IQ in normal form \(\alpha\), as we show in \cref{alg:IQs}.

\begin{algorithm}[htbp]
    \caption{check if a convex geometric model (as in \cref{def:geoconvex}) satisfies an \ELH IQ in normal form}%
    \label{alg:IQs}
    
    \begin{algorithmic}[1]
        \Require{a convex geometric interpretation \(\geoconvex\) and an \ELH IQ in normal form \(\alpha\)}
        \Ensure{returns \valtrue if \(\geoconvex \models \alpha\), \valfalse otherwise}
    
        \If{\(\alpha = A(a)\)}\Comment{\(A \in \NC, a \in \NI\)}
            \If{\(\geointerp(a)[A] = 1\)}\label{IQnf1check}
                \Return{\valtrue}
            \EndIf
        \ElsIf{\(\alpha = (A \sqcap B)(a)\)}\Comment{\(A,B \in \NC, a \in \NI\)}
            \If{\((\geointerp(a)[A] = 1) \land (\geointerp(a)[B] = 1)\)}\label{IQnf2check}
                \Return{\valtrue}
            \EndIf
        \ElsIf{\(\alpha = (\exists r.A)(a)\)}\Comment{\(A \in \NC, r \in \NR, a \in \NI\)}
            \For{\(u \in \geointerp(A)\)}\label{IQ3loopstart}
                \If{\(\geointerp(a) \oplus u \in \geointerp(r)\)}\label{IQnf3lookup}
                    \Return{\valtrue}\; 
                \EndIf
            \EndFor\label{IQnf3loopend}
        \ElsIf{\(\alpha = r(a, b)\)}\Comment{\(r \in \NR, a,b \in \NI\)}
            \If{\(\geointerp(a) \oplus \geointerp(b) \in \geointerp(r)\)}\label{IQnf4check}
                \Return{\valtrue}
            \EndIf
        \EndIf
        \Return{\valfalse}
    \end{algorithmic}
\end{algorithm}

\Cref{normIQ_complexity} shows that \cref{alg:IQs} 
runs in time polynomial in \(\dimsymb \cdot \sizedelta\).

\begin{restatable}{theorem}{algoplexitynormIQ}%
\label{normIQ_complexity}
    Given a finite geometric interpretation \(\geointerp\) and an \ELH IQ in normal form,
    \cref{alg:IQs} runs in time \(O(\dimsymb \cdot \sizedeltaabbrev^3)\), with \(\dimsymb\)  as in \cref{mu2definition} and \(\sizedeltaabbrev = |\Delta^\I|\).
\end{restatable}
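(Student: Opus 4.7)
The plan is to proceed by a case analysis over the four branches of \cref{alg:IQs}, which correspond to the four syntactic forms of an \ELH IQ in normal form. First I would invoke \cref{aggregatedlemma} to ensure that reasoning about \(\geoconvex\) can be carried out over \(\geointerp\); this underlies the correctness of the algorithm but is worth stating up front, since the whole motivation for switching to the non-convex variant is the finiteness of the sets that are enumerated, which is what makes the complexity analysis meaningful at all.

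Next I would tally the cost of each branch using Assumptions (1)--(4) stated at the beginning of \cref{sec:modelcheck}. The branches \(A(a)\) and \((A \sqcap B)(a)\) consist of constant-time position look-ups on the vector \(\geointerp(a)\) (Lines \ref{IQnf1check} and \ref{IQnf2check}), so each runs in \(O(1)\) by Assumptions (1) and (2). The branch \(r(a,b)\) performs the single membership test \(\geointerp(a) \oplus \geointerp(b) \in \geointerp(r)\) at Line \ref{IQnf4check}, which by Assumption (4) costs \(O(\dimsymb \cdot \sizedeltas)\). All three are dominated by the remaining case \((\exists r.A)(a)\): this branch iterates over \(\geointerp(A)\), which by Assumption (3) contributes \(O(\sizedelta)\) iterations, and for each \(u\) in that set performs the membership test \(\geointerp(a) \oplus u \in \geointerp(r)\) at Line \ref{IQnf3lookup}, again costing \(O(\dimsymb \cdot \sizedeltas)\) by Assumption (4). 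Multiplying these two factors gives \(O(\sizedelta \cdot \dimsymb \cdot \sizedeltas) = O(\dimsymb \cdot \sizedeltaabbrev^3)\), which matches the claimed bound, and taking the maximum over the four branches concludes the argument.

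I do not anticipate a substantive obstacle: the analysis is essentially bookkeeping against the stated assumptions, which already package the nontrivial implementation choices. The one subtlety I would be careful to mention is that forming the concatenated vector \(\geointerp(a) \oplus u\) inside the inner loop of the existential case adds only \(O(\dimsymb)\) time per iteration and is therefore absorbed into the membership-test cost \(O(\dimsymb \cdot \sizedeltas)\), so it does not require a separate accounting step and does not affect the final bound.
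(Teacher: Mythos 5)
Your proposal is correct and follows essentially the same route as the paper's proof: a case analysis over the four normal forms, charging $O(1)$ to the two concept-assertion branches via Assumptions (1)--(2), $O(\dimsymb \cdot \sizedeltas)$ to the role-assertion branch via Assumption (4), and $O(\sizedelta) \cdot O(\dimsymb \cdot \sizedeltas) = O(\dimsymb \cdot \sizedeltaabbrev^3)$ to the existential branch via Assumptions (3)--(4). Your extra remarks on invoking \cref{aggregatedlemma} for correctness and on absorbing the cost of forming $\geointerp(a) \oplus u$ are harmless additions that the paper handles in the surrounding text rather than in the proof itself.
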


Next, we present \cref{alg:RIs}, which handles RIs.
Again, as a consequence of \cref{aggregatedlemma}, we only need to check the inclusion between two finite sets of vectors in \(\mathbb{R}^{2 \cdot \dimsymb}\).
Finally, we show an upper bound using \cref{alg:RIs}.

\begin{algorithm}[htbp]
    \caption{Check if a convex geometric model (as in \cref{def:geoconvex}) satisfies an \ELH role inclusion}
    \label{alg:RIs}
    
    \begin{algorithmic}[1]
        \Require{a convex geometric interpretation \(\geoconvex\) and an \ELH role inclusion \(r \sqsubseteq s\)}
        \Ensure{returns \valtrue if \(\geoconvex \models r \sqsubseteq s\), \valfalse otherwise}
    
        \For{\(v \in \geointerp(r)\)}\label{RIloop} 
            \If{\(v \not \in \geointerp(s)\)}\label{RIcheck}
                \Return{\valfalse}
            \EndIf
        \EndFor
        \Return{\valtrue}
    \end{algorithmic}
\end{algorithm}

\begin{restatable}{theorem}{algoplexitynormRI}%
\label{normRI_complexity}
    Given a finite geometric interpretation \(\geointerp\) and an \ELH role inclusion,
    \cref{alg:RIs} runs in time in \(O(\dimsymb \cdot \sizedeltaabbrev^4)\), where \(\dimsymb\) is as in \cref{mu2definition} and \(\sizedeltaabbrev = |\Delta^\I|\).
\end{restatable}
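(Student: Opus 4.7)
The plan is to establish both correctness and the running time bound for \cref{alg:RIs} by leveraging \cref{aggregatedlemma} and the four stated complexity assumptions on the representation of \(\geointerp\).

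First, I would argue correctness by appealing to \cref{aggregatedlemma}: since an RI \(r \sqsubseteq s\) is an \ELH axiom of the kind covered by that theorem, we have \(\geoconvex \models r \sqsubseteq s\) iff \(\geointerp \models r \sqsubseteq s\), i.e.\ iff \(\geointerp(r) \subseteq \geointerp(s)\). The algorithm returns \valfalse exactly when it finds some \(v \in \geointerp(r)\) with \(v \notin \geointerp(s)\), and returns \valtrue otherwise; this is precisely the negation of \(\geointerp(r) \subseteq \geointerp(s)\). Finiteness of \(\geointerp(r)\) and \(\geointerp(s)\) follows from \cref{def:geointerp2} together with the finiteness of \(\Delta^\I\), so the loop terminates.

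For the time bound, I would proceed with a simple multiplication of the cost of the loop and the cost of the membership test. By assumption~\ref{assumptionIter}, iterating over \(\geointerp(r)\) in the outer loop on line~\ref{RIloop} takes time \(O(\sizedeltas) = O(\sizedeltaabbrev^2)\), since each element of \(\geointerp(r)\) corresponds to a pair \((d,e) \in \Delta^\I \times \Delta^\I\) via \(\mu(d) \oplus \mu(e)\). By assumption~\ref{assumptionMember}, each membership test \(v \in \geointerp(s)\) on line~\ref{RIcheck} takes time \(O(\dimsymb \cdot \sizedeltas) = O(\dimsymb \cdot \sizedeltaabbrev^2)\). Multiplying these bounds gives \(O(\sizedeltaabbrev^2) \cdot O(\dimsymb \cdot \sizedeltaabbrev^2) = O(\dimsymb \cdot \sizedeltaabbrev^4)\), as claimed.

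There is no real obstacle here: the argument is a direct accounting exercise using the pessimistic assumptions~\ref{assumptionBasic}--\ref{assumptionMember} already fixed in the section. The only subtle point worth flagging explicitly in the writeup is that the outer loop bound \(O(\sizedeltaabbrev^2)\) (rather than \(O(\sizedeltaabbrev)\)) comes from the fact that role extensions live in \(\mathbb{R}^{2\cdot\dimsymb}\) and are indexed by pairs from \(\Delta^\I \times \Delta^\I\), and similarly the cost of each membership test carries a factor \(\sizedeltaabbrev^2\) for the same reason; together these produce the quartic dependence on \(\sizedeltaabbrev\) rather than a cubic one.
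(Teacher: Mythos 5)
Your proposal is correct and follows essentially the same route as the paper's proof: the running-time bound is obtained by multiplying the \(O(\sizedeltaabbrev^2)\) iterations of the loop (assumption on iteration over role extensions) by the \(O(\dimsymb \cdot \sizedeltaabbrev^2)\) cost of each membership test, yielding \(O(\dimsymb \cdot \sizedeltaabbrev^4)\). The correctness discussion via \cref{aggregatedlemma} that you include is handled in the surrounding text of the paper rather than in the proof itself, but it is accurate and does not change the argument.
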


The three algorithms presented in this \lcnamecref{sec:modelcheck} run in polynomial time in \(\dimsymb \cdot \sizedelta\).
We recall that the construction of \(\geointerp\) (and also \(\geoconvex\)) requires that both the signature and \(\Delta^\I\) are finite (which is reasonable for normalized \ELH), otherwise the vectors in \(\geointerp\) would have infinite dimension.

\section{Conclusion and discussion}

We have proven that \ELH  has the strong faithfulness property over (possibly) non-convex geometric models, and that normalized \ELH has the strong faithfulness property over convex geometric models. Furthermore, we give upper bounds for the complexity of checking satisfaction for \ELH axioms in normal form in the class of convex geometric models that we use for strong faithfulness.

As future work, we would like to implement an embedding method that is formally guaranteed to generate strongly TBox faithful embeddings for normalized \ELH ontologies, as well as expand the language so as to cover more logical constructs present in $\mathcal{EL}^{++}$.

\appendix

\section{Appendix}

\subsection{Omitted proofs for \cref{sec:basic}}

    
    


\subsetinconvexset*

\begin{proof}
    Let $S_1, S_2$ be finite sets with $S_1 \subseteq S_2$. We first prove the statement for $v \in S_1\subseteq S^\ast_1$ and then for $u \in S_1^\ast \setminus S_1$. Let $v \in S_1$ be an arbitrary vector. 
    By assumption, $v \in S_2$, and by the definition of convex hull, $v \in S_2^\ast$. Now, by \cref{defconvexhull} let $u \in S_1^\ast \setminus S_1$ be defined by $\sum_{i=1}^{n} \lambda_i v_i$ where $v_1 \ldots v_n \in S_1$ and $n \leq \vert S_1 \vert$. Since $S_1 \subseteq S_2$, $v_1 \ldots v_n \in S_2$ and,  by \cref{defconvexhull}, 
    since $u=\sum_{i=1}^{n} \lambda_i v_i$,
    this gives us that $u \in S_2^\ast$. Thus, $S_1 \subseteq S_2$ implies $S_1^\ast \subseteq S_2^\ast$.
\end{proof}

\ifvinSwithoutSvisnonbinary*

\begin{proof}

For this proof we use a notation introduced in \cref{mu2definition}. We reason by cases. We need to cover all combinations of values that $\lambda_i$ may take for arbitrary $n$. We cover two cases. One where all $\lambda$ are strictly greater than zero and strictly lesser than 1, and a case where some $\lambda_i$ may be zero. By setting $n=1$, we have  $v=\lambda_1 x_1$. By definition, $\lambda_1 = 1$, giving us either $v = 0$ or $v = 1$, both binary vectors, which means $v \in S^*$ iff $v \in S$. Therefore, this case is not in the scope of our lemma, and we assume $n>1$.

    \begin{itemize}
    
    \item \textbf{Case 1 ($0 < \lambda_i < 1$)}: We prove the case by induction on the number of $n$.

    \noindent \textbf{Base case:} In the base case $n=2$. Let $v_1, v_2 \in S$ with $v_1 \neq v_2$. Then,  there is a dimension $d$ such that $v_1[d] \neq v_2[d]$. 
    Since $v_1$ and $v_2$ are binary, we can assume, without loss of generality, $v_1[d]=1$ and $v_2[d]=0$. Now let $v = \lambda_1 v_1 + \lambda_2 v_2$ be a vector, with $\lambda_1 + \lambda_2 = 1$. 
    Since we assumed $\forall \lambda_i$ $0 < \lambda_i < 1$, this means $v \not \in \{0,1\}^d$ because   $v[d] = \lambda_1$, which is strictly between 0 and 1. 
    Therefore, $v$ is non-binary.

    \noindent \textbf{Inductive step:} Assume our hypothesis holds for $v_1,\ldots, v_{n-1}$.

    Let $v \in S^*$. We know that $v = \sum_{i=1}^{n} \lambda_i v_i$, with $0 < \lambda_i < 1$, with $v_i \in S$, and with $\sum_{i=1}^{n} \lambda_i = 1$. Since $\forall_{i \neq j} \, v_i \neq v_j$, there is a dimension $d$ such that $\exists l,m$ with $v_l[d] \neq v_m[d]$. Since $S$ is a set of binary vectors, we decompose the value of a dimension $d$ as a sum of vectors where $v_i[d] = 1$ and $v_j[d]=0$. In order to do this, we introduce an ordering and assume, without loss of generality, that $v_i[d]=1$ $\forall 1 \leq i \leq k$ where $k < n$, and $v_j[d] = 0$ $\forall k+1 \leq j < n$. More explicitly:

    \[v[d] = \sum_{i=1}^{k} \lambda_i v_i[d] + \sum_{j=k+1}^{n} \lambda_j v_j[d].\]

    However, $\sum_{j=k+1}^{n} \lambda_j v_j[d] = 0$, so we only have to look at the first sum. Clearly, $v[d] \neq 0$, because $v_l[d] \neq v_m[d]$. Since there exists at least one $\lambda_j > 0$ and, in this case $\forall \lambda_i$ $0 < \lambda_i < 1$, it is impossible for the sum to be equal to $1$, giving us $v[d] \in (0,1)$.

    \item \textbf{Case 2} ($\exists \lambda_i = 0$ and $\forall \lambda_{j \neq i}$ we have $0\leq \lambda_j   < 1$):
    
    We prove the case directly. We start by noting that for this case to hold, $n \geq 3$, as $n=2$ would mean $\lambda_1 = 0$ and $\lambda_2 < 1$, which goes against the criterion that $\sum_{i=1}^n \lambda_i v_i = 1$ from the definition. Now, assume $n\geq3$. We denote by $m$ the number of $\lambda_i$ where $\lambda_i = 0$. Pick $m$ such that $1 \leq m \leq n-2$. Then, there are at least $n-m \geq 2$ $\lambda_j$ such that $ 0 < \lambda_j <1$. Which is the situation covered by \textit{Case 1}.

\end{itemize}
There are no more cases to be considered.
\end{proof}

\corolariobinario*

\begin{proof}
    The corollary follows directly from \cref{nonbinarydifference}.
\end{proof}

\subsection{Omitted proofs for \cref{sec:nonconvex}}

\begin{lemma}
\label{interpgeointerpequivalence}
For all $d \in \Delta^{\mathcal{I}}$, for all $\ELH$ concepts $C$, it is the case that $d \in C^{\mathcal{I}}$ iff $\barmu(d) \subseteq \bargeointerp(C)$ (see \cref{def:geointerbar}).
\end{lemma}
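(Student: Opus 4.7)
The plan is to prove the lemma by structural induction on $C$, but I would actually establish the stronger auxiliary statement
\[
\bargeointerp(C) \;=\; \bigcup_{d\in C^{\mathcal{I}}} \barmu(d)
\]
for every \ELH concept $C$. This equality immediately implies the biconditional in the lemma, since the family $\{\barmu(d)\}_{d\in\Delta^\Imc}$ consists of pairwise disjoint, non-empty subsets of $\mathbb{R}^1$ (by the case split in \cref{mudefinition} one checks $\barmu(d)\cap\barmu(d')=\emptyset$ whenever $d\neq d'$, and each $\barmu(d)$ is a non-empty union of intervals). Thus if $d\in C^\Imc$ then $\barmu(d)$ is one of the sets in the union, and conversely if $\barmu(d)\subseteq\bargeointerp(C)$ then picking any $v\in\barmu(d)$ forces $v\in\barmu(d')$ for some $d'\in C^\Imc$, and disjointness gives $d=d'$.

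The base cases ($C=\top$, $C=\bot$, $C=A\in N_C$) follow directly from the \cref{def:geointerbar} together with the remark that $\bigcup_{d\in\Delta^\Imc}\barmu(d)=\mathbb{R}^1$. The case $C=C_1\sqcap C_2$ follows from the inductive hypotheses $\bargeointerp(C_i)=\bigcup_{d\in C_i^\Imc}\barmu(d)$, the identity $\bargeointerp(C_1\sqcap C_2)=\bargeointerp(C_1)\cap\bargeointerp(C_2)$, and once again the pairwise disjointness of the $\barmu(d)$'s, which turns the intersection of the two unions into $\bigcup_{d\in C_1^\Imc\cap C_2^\Imc}\barmu(d)$.

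The main obstacle will be the case $C=\exists r.D$. For the $\supseteq$ inclusion, given $d\in(\exists r.D)^\Imc$ choose $e\in D^\Imc$ with $(d,e)\in r^\Imc$; by the induction hypothesis $\barmu(e)\subseteq\bargeointerp(D)$, and for any $v\in\barmu(d)$ and any $u\in\barmu(e)$ we have $f(v,u)\in\bargeointerp(r)$ by definition, witnessing $v\in\bargeointerp(\exists r.D)$. For the $\subseteq$ inclusion, given $v\in\bargeointerp(\exists r.D)$ we obtain $u\in\bargeointerp(D)$ with $f(v,u)\in\bargeointerp(r)$, so $f(v,u)=f(v',u')$ for some $v'\in\barmu(d')$, $u'\in\barmu(e')$ with $(d',e')\in r^\Imc$. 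To conclude $v\in\barmu(d')$ and $u\in\barmu(e')$ I need $f$ to be injective on pairs, which is where Points 1--3 of the isomorphism preserving linear map are crucial: from $f(v-v',0)+f(0,u-u')=0$ and linearity one gets $f(v-v',0)\in f(\mathbb{R}^m\times\{0\}^m)\cap f(\{0\}^m\times\mathbb{R}^m)=\{0^{2m}\}$, and then the injectivity of the two restrictions yields $v=v'$, $u=u'$. Applying the induction hypothesis to $D$ shows $u\in\barmu(e')$ forces $e'\in D^\Imc$, hence $d'\in(\exists r.D)^\Imc$ and $v\in\barmu(d')$, as required. This injectivity argument is the most delicate step; once it is in place the remainder is bookkeeping.
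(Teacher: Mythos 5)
Your overall strategy is sound and close in spirit to the paper's: both arguments proceed by structural induction on $C$. The genuine difference is in the packaging. The paper proves the pointwise biconditional $d \in C^{\Imc} \Leftrightarrow \barmu(d) \subseteq \bargeointerp(C)$ directly, and then needs a \emph{separate} induction (\cref{vinetaCimpliesvequalsmud1}) to show that every $v \in \bargeointerp(C)$ lies in $\barmu(d)$ for some $d \in C^{\Imc}$; you instead prove the set equality $\bargeointerp(C) = \bigcup_{d \in C^{\Imc}} \barmu(d)$ in a single induction, which yields both statements at once via the pairwise disjointness of the $\barmu(d)$. This is a cleaner decomposition, and your explicit treatment of why $f(v,u)=f(v',u')$ forces $v=v'$ and $u=u'$ (via linearity and Points 1--3 of the isomorphism preserving linear map) makes precise a step that the paper's own proof of the $\exists r.C_1$ case leaves implicit.

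There is, however, one concrete slip you should fix: in the $\supseteq$ direction of the existential case you assert that for any $v \in \barmu(d)$ and \emph{any} $u \in \barmu(e)$ we have $f(v,u) \in \bargeointerp(r)$ ``by definition.'' That is not what \cref{def:geointerbar} says: $\bargeointerp(r) = \{f(v,e) \mid v \in \barmu(d) \text{ for } (d,e) \in r^{\Imc}\}$ pairs the whole interval $\barmu(d)$ on the first coordinate with the single point $e \in \mathbb{R}^1$ (the domain element itself) on the second, not with all of $\barmu(e)$. The argument survives because only one witness is needed: take $u = e$, note $e \in \barmu(e) \subseteq \bargeointerp(D)$ by the induction hypothesis, and $f(v,e) \in \bargeointerp(r)$ for every $v \in \barmu(d)$. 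The same asymmetric reading should be applied in your $\subseteq$ direction, where the element of $\bargeointerp(r)$ you decompose is of the form $f(v',e')$ with $u' = e'$ rather than $u' \in \barmu(e')$; since $e' \in \barmu(e')$, your disjointness argument then goes through unchanged. With that correction the proof is complete.
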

\begin{proof}
We provide an inductive argument in order to prove the claim.

\textbf{Base case:} Assume $C = A \in N_C$, and assume $d \in A^{\mathcal{I}}$.

By the definition of $\bargeointerp$, $d \in A^\Imc$ iff for all \(v \in \barmu(d)\), \(v \in \bargeointerp(A)\), that is, iff \(\barmu(d) \subseteq  \bargeointerp(A)\).
Now assume $C=\top$, and assume $d \in C^{\Imc}$. By the definition of $\bargeointerp$, if $d \in C^\Imc$, then $\barmu(d) \subseteq \bargeointerp(C)$. Now assume $\barmu(d) \subseteq \bargeointerp(C)$. Since we assumed $C = \top$, we have that $\barmu(d) \subseteq \mathbb{R}^1$, with $d \in \Delta^\Imc$. When $C = \bot$, the statement is vacuously true.

\textbf{Inductive step:} Assume our hypothesis holds for $C_1$ and $C_2$. There are two cases:

\begin{itemize}
    
    \item \textbf{Case 1 $(C_1 \sqcap C_2)$}: Assume $d \in (C_1 \sqcap C_2)^\interp$ by the semantics of $\ELH$, $d \in (C_1 \sqcap C_2)^\interp$ iff $d \in C_1^\interp$ and $d \in C_2^\interp$. By the inductive hypothesis, $d \in C_i^\interp$ iff $\barmu(d) \subseteq \geointerp(C_i)$, $i\in\{1,2\}$. But this happens iff $d \in \bargeointerp(C_1) \cap \bargeointerp(C_2)$. By the definition of $\bargeointerp$, this means that $\barmu(d) \subseteq \bargeointerp(C_1 \sqcap C_2)$ iff $d \in (C_1 \sqcap C_2)^\interp$.

    \item \textbf{Case 2 $(\exists r.C_1)$}: Assume $d \in (\exists r.C_1)^\interp$ by the semantics of $\ELH$, $d \in (\exists r.C_1)^\interp$ iff $(d, e) \in r^\interp$ and $e \in C_1^\interp$. By the inductive hypothesis, $e \in C_1^\interp$ iff $\barmu(e) \subseteq \bargeointerp(C_1)$. By the definition of $\bargeointerp$, $(d,e) \in r^\interp$ iff $f(v, e) \in \bargeointerp(r)$ where $v \in \barmu(d)$. By the semantics of $\bargeointerp$, $f(v,e) \in \bargeointerp(r)$ and $e \in \bargeointerp(C_1)$ iff $\barmu(d) \subseteq \bargeointerp(\exists r.C_1)$.
    
    \qedhere
\end{itemize}
\end{proof}

\begin{lemma}
\label{Lemma: Concept assertion for IQ faithfulness Mu 1}
    For all interpretations \Imc, all \ELH concepts $C$, and all $a \in N_I$, it is the case that \Imc $\models C(a)$ iff $\bargeointerp \models C(a)$
\end{lemma}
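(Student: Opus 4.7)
The plan is to bridge the two satisfaction notions through \cref{interpgeointerpequivalence}, which already gives us that $d \in C^{\mathcal{I}}$ iff $\barmu(d) \subseteq \bargeointerp(C)$. Unfolding the definitions, $\bargeointerp \models C(a)$ means $\bargeointerp(a) \in \bargeointerp(C)$; since $\bargeointerp(a) = a^{\mathcal{I}}$ by \cref{def:geointerbar}, this amounts to showing that $a^{\mathcal{I}} \in C^{\mathcal{I}}$ iff $a^{\mathcal{I}} \in \bargeointerp(C)$. A direct observation from \cref{mudefinition} is that $a^{\mathcal{I}} \in \barmu(a^{\mathcal{I}})$ (the point $a^{\mathcal{I}} \in \mathbb{N}$ always lies in its own half-open interval), and moreover the family $\{\barmu(d) \mid d \in \Delta^{\mathcal{I}}\}$ is pairwise disjoint.

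The forward direction is then immediate: if $\Imc \models C(a)$, then $a^{\mathcal{I}} \in C^{\mathcal{I}}$, so \cref{interpgeointerpequivalence} gives $\barmu(a^{\mathcal{I}}) \subseteq \bargeointerp(C)$, and since $a^{\mathcal{I}} \in \barmu(a^{\mathcal{I}})$ we conclude $a^{\mathcal{I}} \in \bargeointerp(C)$. For the backward direction, assuming $a^{\mathcal{I}} \in \bargeointerp(C)$, it suffices to establish $\barmu(a^{\mathcal{I}}) \subseteq \bargeointerp(C)$ and invoke \cref{interpgeointerpequivalence} again. To that end I would prove, by structural induction on $C$, the auxiliary claim: for every $d \in \Delta^{\mathcal{I}}$, if $\barmu(d) \cap \bargeointerp(C) \neq \emptyset$ then $\barmu(d) \subseteq \bargeointerp(C)$, i.e.\ each $\bargeointerp(C)$ is a union of the blocks $\barmu(d)$. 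The base case $C = A \in N_C$ is immediate from $\bargeointerp(A) = \bigcup_{d \in A^{\mathcal{I}}} \barmu(d)$ together with disjointness of the blocks; the cases $C = \top, \bot$ and $C = C_1 \sqcap C_2$ follow by a direct application of the inductive hypothesis and set-theoretic reasoning.

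The main obstacle is the existential case $C = \exists r.C_1$. Suppose $v \in \barmu(d) \cap \bargeointerp(\exists r.C_1)$, so there exists $u \in \bargeointerp(C_1)$ with $f(v,u) \in \bargeointerp(r)$; by definition $f(v,u) = f(v'',e')$ for some $v'' \in \barmu(d'')$ and $(d'',e') \in r^{\mathcal{I}}$. Using Points 1--3 of \cref{Definition: Geometric Interpretation} in the standard way (decomposing $f(v-v'', u-e') = 0$ via linearity into a sum of vectors in $f(\mathbb{R}^m \times \{0\}^m)$ and $f(\{0\}^m \times \mathbb{R}^m)$, which intersect only at $0$, and then invoking injectivity of each restriction) yields $v = v''$ and $u = e'$. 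Then $v \in \barmu(d) \cap \barmu(d'')$ forces $d = d''$ by disjointness. Finally, for every $v^{*} \in \barmu(d)$ the same pair $(d, e')$ witnesses $f(v^{*}, e') \in \bargeointerp(r)$ with $e' \in \bargeointerp(C_1)$ (the latter since $e' \in \barmu(e') \subseteq \bargeointerp(C_1)$ by the inductive hypothesis applied to $C_1$), giving $\barmu(d) \subseteq \bargeointerp(\exists r.C_1)$. Applying the auxiliary claim with $v = a^{\mathcal{I}}$ and $d = a^{\mathcal{I}}$ concludes the proof.
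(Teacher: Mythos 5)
Your proof is correct, and it is in fact more careful than the paper's own argument. The paper's proof is a three-line appeal to \cref{interpgeointerpequivalence}, read as if that lemma directly gave ``$a^{\Imc}\in C^{\Imc}$ iff $\bargeointerp(a)\in\bargeointerp(C)$''; but \cref{interpgeointerpequivalence} only relates $a^{\Imc}\in C^{\Imc}$ to the set inclusion $\barmu(a^{\Imc})\subseteq\bargeointerp(C)$, so passing to membership of the single point $\bargeointerp(a)=a^{\Imc}$ requires exactly the two facts you supply: that $a^{\Imc}\in\barmu(a^{\Imc})$ (for the forward direction) and that each region $\bargeointerp(C)$ is saturated under the blocks $\barmu(d)$ (for the converse). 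Your auxiliary claim plays the same role as the paper's \cref{vinetaCimpliesvequalsmud1} (stated only later in the appendix and used there for concept inclusions), combined with pairwise disjointness of the blocks; your handling of the existential case via linearity of $f$ and Points 1--3 of the definition of isomorphism preserving linear maps is a clean way to recover $v=v''$ and $u=e'$, which the paper leaves implicit. In short: same overall route through \cref{interpgeointerpequivalence}, but your version makes the point-versus-block bridging explicit where the paper glosses over it.
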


\begin{proof}
    By the semantics of \ELH, we know $\Imc \models C(a)$ iff $a^\Imc \in C^\Imc$. By \cref{interpgeointerpequivalence}, we know that $a^\Imc \in C^\Imc$ iff $\bargeointerp(a^\Imc) \in \bargeointerp(C)$. By the semantics of geometric interpretation, this is the case iff $\bargeointerp \models C(a)$.
\end{proof}

\begin{lemma}
\label{Lemma: Role assertion for IQ faithfulness Mu 1}
For all $r \in N_R$, for all $a, b \in N_I$,  we have $\bargeointerp \models r(a,b)$ iff $\mathcal{I} \models r(a,b)$.
\end{lemma}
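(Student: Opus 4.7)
The plan is to prove both directions by unfolding the definition of $\bargeointerp$ on individual names and role names from Definition~\ref{def:geointerbar}, leveraging two structural facts: first, each $\barmu(d)$ contains $d$ itself, and the sets $\{\barmu(d) \mid d \in \Delta^\Imc\}$ are pairwise disjoint; second, the linear map $f$ is injective on each coordinate subspace $\mathbb{R}^m \times \{0\}^m$ and $\{0\}^m \times \mathbb{R}^m$, with trivial intersection of their images.

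For the forward direction, I would assume $\Imc \models r(a,b)$, so that $(a^\Imc, b^\Imc) \in r^\Imc$, and then observe that $\bargeointerp(a) = a^\Imc$, $\bargeointerp(b) = b^\Imc$, and $a^\Imc \in \barmu(a^\Imc)$ by direct inspection of the definition of $\barmu$. The pair $(d, e) := (a^\Imc, b^\Imc)$ then witnesses $f(a^\Imc, b^\Imc) \in \bargeointerp(r)$, which is exactly the semantic condition for $\bargeointerp \models r(a,b)$.

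For the backward direction, I would suppose $\bargeointerp \models r(a,b)$, i.e., $f(a^\Imc, b^\Imc) \in \bargeointerp(r)$. By the definition of $\bargeointerp(r)$ there exist $(d, e) \in r^\Imc$ and $v \in \barmu(d)$ with $f(v, e) = f(a^\Imc, b^\Imc)$. Using linearity of $f$, this rewrites as $f(v - a^\Imc, 0) = f(0, b^\Imc - e)$; both sides lie in $f(\mathbb{R}^m \times \{0\}^m) \cap f(\{0\}^m \times \mathbb{R}^m) = \{0^{2m}\}$, so injectivity of $f$ on the two coordinate subspaces forces $v = a^\Imc$ and $e = b^\Imc$. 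Hence $a^\Imc \in \barmu(d)$, and since the family $\{\barmu(d') \mid d' \in \Delta^\Imc\}$ is pairwise disjoint and $a^\Imc \in \barmu(a^\Imc)$, I conclude $d = a^\Imc$. Therefore $(a^\Imc, b^\Imc) \in r^\Imc$, i.e., $\Imc \models r(a,b)$.

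The main (and essentially only) subtlety will be the backward direction, where one must combine the isomorphism-preserving conditions on $f$ with the pairwise disjointness of the sets $\barmu(d)$; the latter is a short case analysis depending on whether $d$ falls under the finite maximum branch or the ``otherwise'' branch of the definition of $\barmu$. No further inductive machinery is needed, since roles are atomic and individuals are interpreted as single vectors, so the lemma follows at once from these two auxiliary observations.
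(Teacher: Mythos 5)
Your proposal is correct and follows essentially the same route as the paper: both directions are obtained by unfolding the definitions of $\bargeointerp$ on individual and role names. The only difference is that your backward direction explicitly justifies the step the paper states as a bare biconditional, using the injectivity properties of the isomorphism preserving linear map $f$ to force $v = a^\Imc$ and $e = b^\Imc$, and the pairwise disjointness of the sets $\barmu(d)$ to force $d = a^\Imc$ --- a welcome tightening rather than a different argument.
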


\begin{proof}
  By the semantics of \ELH, $\Imc \models r(a,b)$   iff $(a^\interp, b^\interp) \in r^\interp$. By the definition of \bargeointerp, we have $(a^\interp, b^\interp) \in r^\interp$ iff \(f(v, b^\Imc) \in \bargeointerp(r)\) 
for all \(v \in \barmu(a^\Imc)\). 
From the \cref{def:geointerbar}, \(b^\Imc = \bargeointerp(b)\), hence $(a^\interp, b^\interp) \in r^\interp$ iff \(f(v, \bargeointerp(b)) \in \bargeointerp(r)\) 
for all \(v \in \barmu(a^\Imc)\).
Since \(\bargeointerp(a) \in \barmu(a^\Imc)\), we get, by the semantics of $\bargeointerp$, that $f(\bargeointerp(a), \bargeointerp(b)) \in \bargeointerp(r)$ iff $\bargeointerp \models r(a,b)$. Giving us $\Imc \models r(a,b)$ iff $\bargeointerp \models r(a,b)$.
\end{proof}

\begin{lemma}
\label{Theorem: Mu1 IQ Faithful}
Let \(\ontoo\) be an \ELH ontology and let $\barcanonmodel$ be the canonical model of $\mathcal{O}$ (\cref{def:canmodelinfinite}).
The geometrical interpretation $\bargeocanon$ of $\barcanonmodel$ (\cref{def:geointerbar}) is a strongly IQ faithful model of $\mathcal{O}$.
\end{lemma}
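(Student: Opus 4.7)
The plan is to compose the three preceding lemmas with the canonical-model theorem. Specifically, I will chain \cref{Lemma: Concept assertion for IQ faithfulness Mu 1} and \cref{Lemma: Role assertion for IQ faithfulness Mu 1}, which transfer IQ satisfaction between the classical interpretation $\barcanonmodel$ and its geometric counterpart $\bargeocanon$, with \cref{thm:caninf}, which relates entailments of $\mathcal{O}$ to satisfaction in $\barcanonmodel$. Each step is an ``iff'', so the resulting argument is a direct composition of equivalences.

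First I would check that $\bargeocanon$ is indeed a model of $\mathcal{O}$, so that the notion of a ``faithful model'' is well-posed. By \cref{thm:caninf}, $\barcanonmodel \models \alpha$ for every $\alpha \in \mathcal{O}$, and then \cref{aggregatedlemmanonconvex} (which subsumes the two assertion lemmas and uniformly handles CIs and RIs) lifts this to $\bargeocanon \models \alpha$ for every $\alpha \in \mathcal{O}$.

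For the strong concept-faithfulness clause, I would take an arbitrary \ELH concept $C$ and an arbitrary individual $b$, and assume $\bargeocanon(b) \in \bargeocanon(C)$. By \cref{Definition: Geometric Interpretation} this is precisely $\bargeocanon \models C(b)$; by \cref{Lemma: Concept assertion for IQ faithfulness Mu 1} applied to $\barcanonmodel$ this gives $\barcanonmodel \models C(b)$; and by the IQ clause of \cref{thm:caninf} we conclude $\mathcal{O} \models C(b)$. The role case is structurally identical: if $f(\bargeocanon(a), \bargeocanon(b)) \in \bargeocanon(r)$, then $\bargeocanon \models r(a,b)$, hence $\barcanonmodel \models r(a,b)$ by \cref{Lemma: Role assertion for IQ faithfulness Mu 1}, and finally $\mathcal{O} \models r(a,b)$ by \cref{thm:caninf}.

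The argument is essentially bookkeeping once the three underlying lemmas are in place, so there is no real obstacle. The only mildly delicate point is that the concept $C$ in the strong concept-faithfulness clause ranges over \emph{all} \ELH concepts, not only those built from names in ${\sf sig}(\mathcal{O})$; but this is already covered because \cref{interpgeointerpequivalence} (and hence \cref{Lemma: Concept assertion for IQ faithfulness Mu 1}) is stated for every \ELH concept, so no additional case analysis is required.
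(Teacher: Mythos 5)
Your proof is correct and follows exactly the paper's own argument: compose \cref{thm:caninf} with \cref{Lemma: Concept assertion for IQ faithfulness Mu 1} and \cref{Lemma: Role assertion for IQ faithfulness Mu 1} to obtain $\mathcal{O} \models \alpha$ iff $\bargeocanon \models \alpha$ for every IQ $\alpha$. The only caveat is that your closing remark attributes the handling of concepts built from symbols outside ${\sf sig}(\mathcal{O})$ to \cref{interpgeointerpequivalence}, whereas the signature restriction actually sits in \cref{thm:caninf}; the paper's proof glosses over this point in the same way, so it is not a substantive difference.
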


\begin{proof}
Since $\mathcal{I}_\mathcal{O}$ is a canonical model of \Omc,  $\mathcal{I}_\mathcal{O} \models \alpha$ iff $\mathcal{O} \models \alpha$ (\cref{thm:caninf}). By \cref{Lemma: Concept assertion for IQ faithfulness Mu 1,Lemma: Role assertion for IQ faithfulness Mu 1}, $ \mathcal{I}_\mathcal{O}\models \alpha$ iff $\bargeocanon \models \alpha$. Then, we have that $\mathcal{O} \models \alpha$ iff $\bargeocanon \models \alpha$.
\end{proof}

\begin{lemma}
    \label{vinetaCimpliesvequalsmud1}
    Let $\Imc$ be an interpretation, and $\barmu$ be a mapping derived from \cref{mudefinition}.
    For all \ELH concepts $C$, if $v \in \bargeointerp(C)$, then there is $d \in \Delta^\Imc$ such that $v \in \barmu(d)$, and $d \in C^\Imc$.
\end{lemma}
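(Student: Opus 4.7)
The plan is to proceed by structural induction on the $\ELH$ concept $C$, using at each step the disjointness of the intervals $\{\barmu(d)\}_{d \in \Delta^\Imc}$ as the unifying geometric fact. This disjointness is immediate from \cref{mudefinition}: the intervals $(-d-1,-d]\cup[d,d+1)$ (and the tail interval for the maximum element, if $\Delta^\Imc$ is finite) partition $\mathbb{R}$ as $d$ ranges over $\Delta^\Imc$, so any $v\in\mathbb{R}$ lies in exactly one $\barmu(d)$. I will establish this partition property as the initial observation and then invoke it repeatedly.

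For the base cases, the claim follows directly from the definitions. If $C=A\in N_C$, then by \cref{def:geointerbar}, $v\in\bargeointerp(A)$ gives some $d\in A^\Imc$ with $v\in\barmu(d)$. If $C=\top$, the covering property (the remark after \cref{mudefinition}) yields a $d\in\Delta^\Imc$ with $v\in\barmu(d)$, and $d\in\top^\Imc$ trivially. The case $C=\bot$ is vacuous since $\bargeointerp(\bot)=\emptyset$.

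For the inductive step on conjunction $C=C_1\sqcap C_2$, suppose $v\in\bargeointerp(C_1)\cap\bargeointerp(C_2)$. By the induction hypothesis applied to each conjunct, there are $d_1,d_2\in\Delta^\Imc$ with $v\in\barmu(d_1)\cap\barmu(d_2)$ and $d_i\in C_i^\Imc$. Disjointness of the intervals forces $d_1=d_2=:d$, so $d\in C_1^\Imc\cap C_2^\Imc=(C_1\sqcap C_2)^\Imc$, as desired.

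The existential case $C=\exists r.C_1$ is where I expect the main obstacle, and it requires reading the role definition in \cref{def:geointerbar} carefully. If $v\in\bargeointerp(\exists r.C_1)$, then by the semantics there exists $u\in\bargeointerp(C_1)$ such that $f(v,u)\in\bargeointerp(r)$. Unpacking $\bargeointerp(r)$ produces some pair $(d,e)\in r^\Imc$ together with witnesses $v'\in\barmu(d)$ (and correspondingly $u$ tied to $e$) such that $f(v,u)=f(v',\cdot)$. Using the injectivity properties of $f$ from the hypotheses on isomorphism preserving linear maps together with the interval disjointness, I align $v$ with $\barmu(d)$ and, through the induction hypothesis on $C_1$ applied to $u$, align $u$ with $\barmu(e)$ for some $e\in C_1^\Imc$. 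Combined with $(d,e)\in r^\Imc$, this yields $d\in(\exists r.C_1)^\Imc$ and $v\in\barmu(d)$, closing the induction. The delicate point here is to ensure that the element $e$ produced by the induction hypothesis coincides with the element that arises from the role-interpretation unpacking; this is again ensured by the disjointness of the intervals, which acts as the ``identification lemma'' throughout the argument.
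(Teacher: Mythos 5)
Your plan is correct and follows essentially the same route as the paper's proof: structural induction on $C$, with the base cases read off from \cref{def:geointerbar}, the conjunction case resolved by the pairwise disjointness of the intervals $\barmu(d)$ forcing $d_1=d_2$, and the existential case resolved by unpacking $\bargeointerp(r)$ and identifying the witness for $C_1$ with the element $e$ from the pair $(d,e)\in r^\Imc$ (the paper does this via the observation that $e\in\Delta^\Imc\subseteq\mathbb{N}$ and the only natural number in $\barmu(e')$ is $e'$ itself, which is the same ``identification lemma'' you invoke). Your additional appeal to the injectivity properties of $f$ to separate the two components of $f(v,u)$ is a slightly more explicit justification of a step the paper leaves implicit, but it does not change the argument.
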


\begin{proof}
    We provide an inductive argument for the claim.

    \textbf{Base case:} Assume $C = A \in N_C$ and let $v \in \bargeointerp(A)$. By the definition of $\bargeointerp$, it is the case that $v \in \bargeointerp(A)$ iff $v \in \{v' \in \barmu(d) \mid d \in A^\Imc\}$. 
    Assume $C = \top$. By the definition of $\bargeointerp$, we have $v \in \bargeointerp(C)$ iff $v \in \barmu(d)$ such that $\barmu(d) \subseteq \mathbb{R}^1$. This means $v \in \barmu(d)$ and $\barmu(d) \subseteq \bargeointerp(C)$, for some $d \in \Delta^\Imc$. When $C = \bot$, the statement is vacuously true.

    \textbf{Inductive step:} Assume our hypothesis holds for $C_1$ and $C_2$.

    \begin{itemize}
        \item \textbf{Case 1 ($C_1 \sqcap C_2)$}: Assume $v \in \bargeointerp(C_1 \sqcap C_2)$. Then, by the definition of $\bargeointerp$, it is the case that $v \in \bargeointerp(C_1)$ and $v \in \bargeointerp(C_2)$. By the inductive hypothesis, if $v \in \bargeointerp(C_1)$, then $\exists d \in \Delta^\Imc$ such that $v \in \barmu(d)$ and $d\in C^\Imc_1$, and if $v \in \bargeointerp(C_2)$, then $\exists d' \in \Delta^\Imc$ such that $v \in \barmu(d')$ and $d'\in C^\Imc_2$. By definition of $\barmu$,
        this can only be if $d'=d$ since $\barmu$ maps   elements of $\Delta^\Imc$ to 
        mutually disjoint subsets of $\mathbb{R}^1$.
        By the semantics of \ELH, if $d\in C^\Imc_1$ and $d\in C^\Imc_2$ then $d\in (C_1\sqcap C_2)^\Imc$.

        \item  \textbf{Case 2 $(\exists r.C_1)$}: Assume $v \in \bargeointerp(\exists r.C_1)$. By the definition of $\bargeointerp$, this means $v$ is such that $f(v,e) \in \bargeointerp(r)$ where $v \in \barmu(d)$ for $(d,e) \in r^\Imc$ and \(e \in \bargeointerp(C_1)\).
        By the inductive hypothesis, there is an \(e' \in \Delta^\I\) such that \(e \in \barmu(e')\) and \(e' \in C_1^\I\).
        As \(e' \in \Delta^\I \subseteq \mathbb{N}\), by the construction of \(\barmu\), it is the case that \(e' = e\).
        Therefore, we have $e \in C_1^\Imc$. By the definition of $\barmu$ and the semantics of \ELH, this means $\exists d \in \Delta^\Imc$ such that $v \in \barmu(d)$ and $d\in (\exists r.C_1)^\Imc$. 
        
        \qedhere
    \end{itemize}
\end{proof}

\begin{lemma}
\label{conceptinclusionforth10}
Let \(\I\) be an interpretation and \(\bargeointerp\) the geometric interpretation of \(\I\) (\cref{def:geointerbar}).
For all \ELH concepts $C$ and $D$, $\Imc \models C \sqsubseteq D$ iff $\bargeointerp \models C \sqsubseteq D$.
\end{lemma}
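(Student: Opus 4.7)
The plan is to reduce this to the two preceding lemmas, namely \cref{interpgeointerpequivalence} (stating $d \in C^\Imc$ iff $\barmu(d) \subseteq \bargeointerp(C)$) and \cref{vinetaCimpliesvequalsmud1} (stating that any $v \in \bargeointerp(C)$ lies in some $\barmu(d)$ with $d \in C^\Imc$). No induction on the structure of $C$ or $D$ is needed here, since the inductive work has already been done in those two lemmas; I simply have to chain them together, using the key structural property that the family $\{\barmu(d)\}_{d \in \Delta^\Imc}$ consists of pairwise disjoint, nonempty subsets of $\mathbb{R}^1$ that cover $\mathbb{R}^1$ (this follows from \cref{mudefinition} and the remark after it).

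For the forward direction, assume $\Imc \models C \sqsubseteq D$, i.e., $C^\Imc \subseteq D^\Imc$. Pick an arbitrary $v \in \bargeointerp(C)$. By \cref{vinetaCimpliesvequalsmud1} there is $d \in \Delta^\Imc$ with $v \in \barmu(d)$ and $d \in C^\Imc$. By hypothesis $d \in D^\Imc$, and then \cref{interpgeointerpequivalence} yields $\barmu(d) \subseteq \bargeointerp(D)$, so $v \in \bargeointerp(D)$. Hence $\bargeointerp(C) \subseteq \bargeointerp(D)$, i.e., $\bargeointerp \models C \sqsubseteq D$.

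For the backward direction, assume $\bargeointerp(C) \subseteq \bargeointerp(D)$ and take any $d \in C^\Imc$. By \cref{interpgeointerpequivalence} we have $\barmu(d) \subseteq \bargeointerp(C)$, hence $\barmu(d) \subseteq \bargeointerp(D)$ by hypothesis. Since $\barmu(d) \neq \emptyset$, pick some $v \in \barmu(d)$; then $v \in \bargeointerp(D)$ and \cref{vinetaCimpliesvequalsmud1} gives some $d' \in \Delta^\Imc$ with $v \in \barmu(d')$ and $d' \in D^\Imc$. The main (but very small) obstacle is concluding $d = d'$: this is where I invoke the disjointness of the $\barmu$-intervals across distinct domain elements from \cref{mudefinition}, which forces $d' = d$ and gives $d \in D^\Imc$. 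Thus $C^\Imc \subseteq D^\Imc$, i.e., $\Imc \models C \sqsubseteq D$, completing the equivalence.
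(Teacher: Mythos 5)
Your proof is correct and follows essentially the same route as the paper: both directions reduce to chaining \cref{interpgeointerpequivalence} and \cref{vinetaCimpliesvequalsmud1} without any new induction, and your forward direction matches the paper's verbatim. The only (harmless) difference is in the backward direction, where the paper concludes $d \in D^\Imc$ directly from the ``iff'' of \cref{interpgeointerpequivalence} applied to $\barmu(d) \subseteq \bargeointerp(D)$, whereas you take a small detour through a chosen $v \in \barmu(d)$, \cref{vinetaCimpliesvequalsmud1}, and the pairwise disjointness of the $\barmu$-intervals --- a property the paper itself invokes inside the proof of \cref{vinetaCimpliesvequalsmud1}, so your argument is equally sound.
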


\begin{proof}
    Let $C, D$ be \ELH concepts. Assume $\Imc \models C \sqsubseteq D$.
    By the semantics of \ELH, this means $C^\Imc \subseteq D^\Imc$.
    Let $v \in \bargeointerp(C)$ be a vector.
    By \cref{vinetaCimpliesvequalsmud1}, we know there is $d \in \Delta^\Imc$ and $d \in C^\Imc$ such that $v \in \barmu(d)$ and $\barmu(d) \subseteq \bargeointerp(C)$.
    By \cref{interpgeointerpequivalence}, this means $d \in C^\Imc$, and, by assumption, that $d \in D^\Imc$.
    By \cref{interpgeointerpequivalence}, this means $\barmu(d) \subseteq \geointerp(D)$.
    Since we have shown $v \in \barmu(d)$ such that $\bargeointerp(C)$ implies $v \in \bargeointerp(D)$, this means $\bargeointerp \models C \sqsubseteq D$.

    Now assume $\bargeointerp \models C \sqsubseteq D$.
    By the semantics of geometric interpretation, this means $\bargeointerp(C) \subseteq \bargeointerp(D)$. Let $d \in C^\Imc$. We know, by \cref{interpgeointerpequivalence}, that $d \in C^\Imc$ iff $\barmu(d) \subseteq \bargeointerp(C)$. By assumption, this means $\barmu(d) \subseteq \bargeointerp(D)$. Again by \cref{interpgeointerpequivalence}, this means $d \in D^\Imc$.
    Since we have shown $d \in C^\Imc$ implies $d \in D^\Imc$, we have $\Imc \models C \sqsubseteq D$.
\end{proof}

\begin{lemma}
    \label{vinetarimpliesvmud1}

    Let \Imc be an interpretation, $\barmu$ be a mapping (\cref{mudefinition}), and $\bargeointerp$ the geometric interpretation of \Imc (\cref{def:geointerbar}) derived from $\barmu$. For all role names $r \in N_R$, if $f(v,e) \in \bargeointerp(r)$, then there are $d,e \in \Delta^\Imc$ such that $v \in \barmu(d)$ for $(d,e) \in r^\Imc$.
\end{lemma}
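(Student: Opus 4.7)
The plan is to unpack \cref{def:geointerbar} directly: by definition,
\[
\bargeointerp(r) = \{f(v',e') \mid v'\in \barmu(d'), (d',e') \in r^\Imc\}.
\]
So from the hypothesis $f(v,e) \in \bargeointerp(r)$ we immediately get witnesses $d',e' \in \Delta^\Imc$ with $(d',e')\in r^\Imc$ and $v'\in \barmu(d')$ such that $f(v,e) = f(v',e')$. The task then reduces to showing that the witnesses coincide with the given vectors, i.e. $v=v'$ and $e=e'$, so that we may take $d := d'$.

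To establish this coincidence I would use the three conditions defining an isomorphism preserving linear map. By linearity of $f$,
\[
f(v-v',\, 0) + f(0,\, e-e') = f(v-v',\, e-e') = f(v,e) - f(v',e') = 0.
\]
Hence $f(v-v',0) = -f(0,e-e') = f(0,\, e'-e)$, and this vector lies in $f(\mathbb{R}^m\times\{0\}^m)\cap f(\{0\}^m\times \mathbb{R}^m)$, which by Condition~3 equals $\{0^{2\cdot m}\}$. Consequently $f(v-v',0) = 0$ and $f(0,e-e') = 0$, so Conditions~1 and 2 (injectivity of $f$ on each axis) yield $v = v'$ and $e = e'$.

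With $d := d'$, we conclude $v\in\barmu(d)$ and $(d,e)\in r^\Imc$, as required. The proof is essentially a definitional unpacking; the only non-trivial piece is the decomposition argument just above, which I do not expect to be a serious obstacle since it mirrors the standard argument one uses whenever working with such a ``concatenation-like'' linear map $f$.
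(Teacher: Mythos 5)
Your proposal is correct and takes essentially the same route as the paper's proof, which is a one-line unpacking of the definition of $\bargeointerp(r)$. The only difference is that you additionally prove that the witnesses from the set-builder expression must coincide with the given $v$ and $e$ (via linearity of $f$ and Conditions 1--3), a step the paper's proof leaves implicit; this extra injectivity argument is valid and makes the proof more rigorous.
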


\begin{proof}

    Assume $z = f(v,e) \in \bargeointerp(r)$. By the definition of $\bargeointerp$, we have $z \in \{f(v,e) \mid v \in \barmu(d) \text{ for } (d,e) \in r^\Imc$\}. This means $v \in \barmu(d)$ for $d \in \Delta^\Imc$, and, by definition, $e \in \Delta^\Imc$.
  
\end{proof}

\begin{lemma}
    \label{roleinclusion1}
    Let \(\I\) be an interpretation and \(\bargeointerp\) the geometric interpretation of \(\I\) (\cref{def:geointerbar}).
    For all roles $r,s \in N_R$, it is the case that $\Imc \models r \sqsubseteq s$ iff $\bargeointerp \models r \sqsubseteq s$.
\end{lemma}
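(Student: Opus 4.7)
The plan is to prove the biconditional by unwinding the definitions in both directions, with the forward direction being almost immediate from the construction of $\bargeointerp(r)$, and the backward direction requiring a careful use of the injectivity properties of the isomorphism-preserving linear map $f$ together with the pairwise disjointness of the intervals $\barmu(d)$ for distinct $d \in \Delta^\Imc$. The overall strategy mirrors \cref{conceptinclusionforth10}, but replaces \cref{vinetaCimpliesvequalsmud1} with \cref{vinetarimpliesvmud1}.

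For the forward direction, I would assume $\Imc \models r \sqsubseteq s$, i.e., $r^\Imc \subseteq s^\Imc$, and take an arbitrary $z \in \bargeointerp(r)$. By the definition of $\bargeointerp(r)$ in \cref{def:geointerbar}, there exist $d, e \in \Delta^\Imc$ and $v \in \barmu(d)$ with $(d,e) \in r^\Imc$ and $z = f(v,e)$. The assumed inclusion gives $(d,e) \in s^\Imc$, and then $z = f(v,e) \in \bargeointerp(s)$ by the definition of $\bargeointerp(s)$. Hence $\bargeointerp(r) \subseteq \bargeointerp(s)$, i.e., $\bargeointerp \models r \sqsubseteq s$.

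For the backward direction, I would assume $\bargeointerp(r) \subseteq \bargeointerp(s)$ and pick an arbitrary $(d,e) \in r^\Imc$. Choose any $v \in \barmu(d)$ (non-empty by \cref{mudefinition}); by definition, $f(v,e) \in \bargeointerp(r) \subseteq \bargeointerp(s)$. Applying \cref{vinetarimpliesvmud1} with $s$ in place of $r$ yields some $d', e' \in \Delta^\Imc$ with $v \in \barmu(d')$ and $f(v,e) = f(v'', e')$ for some $v'' \in \barmu(d')$ and $(d',e') \in s^\Imc$. I then need to argue $d' = d$ and $e' = e$: linearity of $f$ together with $f(\mathbb{R}\times\{0\}) \cap f(\{0\}\times\mathbb{R}) = \{0^{2}\}$ forces $f(v-v'',0) = f(0,e'-e) = 0$, and the injectivity of $f$ on each component then gives $v = v''$ and $e = e'$. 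Finally, since $v \in \barmu(d) \cap \barmu(d')$ and the sets $\{\barmu(d) : d \in \Delta^\Imc\}$ are pairwise disjoint by construction, $d = d'$. Hence $(d,e) \in s^\Imc$, so $r^\Imc \subseteq s^\Imc$ and $\Imc \models r \sqsubseteq s$.

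The main obstacle is the backward direction, specifically the identification step $v = v''$ and $e = e'$ from $f(v,e) = f(v'',e')$. Although \cref{vinetarimpliesvmud1} gives us the existence of witnesses $(d',e')$, it does not guarantee on its face that these are the exact $(d,e)$ we started with; that identification must be extracted from the three defining properties of an isomorphism preserving linear map. Once this is done, the disjointness of the intervals $\barmu(d)$ across distinct domain elements closes the argument cleanly, and the proof is otherwise a bookkeeping exercise on the definition of $\bargeointerp(r)$.
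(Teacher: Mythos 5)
Your proposal is correct and follows essentially the same route as the paper's proof: the forward direction unwinds the definition of $\bargeointerp(r)$ (the paper routes this through \cref{vinetarimpliesvmud1}), and the backward direction pushes a pair $(d,e)\in r^\Imc$ through $f(v,e)\in\bargeointerp(r)\subseteq\bargeointerp(s)$ and back. In fact you are more careful than the paper on the one delicate point: the paper's backward direction simply asserts that $f(v,e)\in\bargeointerp(s)$ holds ``iff $(d,e)\in s^\Imc$'' by definition, whereas you explicitly justify the identification $v=v''$, $e=e'$, $d=d'$ via the three properties of the isomorphism preserving linear map and the pairwise disjointness of the sets $\barmu(d)$ — a detail the paper leaves implicit.
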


\begin{proof}

    Assume $\Imc \models r \sqsubseteq s$. By the semantics of \ELH, $r^\Imc \subseteq s^\Imc$. Now let $v \in \bargeointerp(r)$. By \cref{vinetarimpliesvmud1}, there is $d \in \Delta^\Imc$ such that $v \in \barmu(d)$, $e \in \Delta^\Imc$, and $(d,e) \in r^\Imc$. By assumption, this gives us $(d,e) \in s^\Imc$. By the construction of $\bargeointerp$, this means $f(v,e) \in \bargeointerp(s)$ for $v \in \barmu(d)$. 
    Hence, $f(v, e) \in \bargeointerp(r)$ implies $f(v, e) \in \bargeointerp(s)$ and we can conclude that $\bargeointerp \models r \sqsubseteq s$.
    Now assume $\bargeointerp \models r \sqsubseteq s$. By the semantics of $\bargeointerp$, $\bargeointerp(r) \subseteq \bargeointerp(s)$. Let $(d,e) \in r^\Imc$. From the definition of $\bargeointerp$, we know there is $f(v,e) \in \bargeointerp(r)$ such that $v \in \barmu(d)$. By assumption, we have $f(v,e) \in \bargeointerp(s)$ and, by the definition of $\bargeointerp$, this is the case iff $(d,e) \in s^\Imc$. Since $(d,e)$ was arbitrary, we conclude $\Imc \models r \sqsubseteq s$.

\end{proof}

\onelemmatorulethemallnonconvex*

\begin{proof}
    For the case where $\alpha$ is a concept inclusion, the result comes from \cref{conceptinclusionforth10}. For the case where $\alpha$ is a role inclusion, the result comes from \cref{roleinclusion1}. For the case where $\alpha$ is an IQ, the result comes from \cref{Lemma: Concept assertion for IQ faithfulness Mu 1} and from \cref{Lemma: Role assertion for IQ faithfulness Mu 1}.
\end{proof}

\begin{lemma}
\label{Theorem: Mu2 IQ Faithful}
    Let \Omc be an \ELH ontology and let $\barcanonmodel$ be the canonical model of \Omc (see \cref{def:canmodelinfinite}). The $m$-dimensional $f$-geometric interpretation of $\barcanonmodel$ (see \cref{def:geointerbar}) is a strongly TBox faithful model of \Omc. That is, $\Omc \models \tau $ iff $\bargeocanon \models \tau$, where $\tau$ is either an \ELHbot concept inclusion or an \ELH role inclusion.
\end{lemma}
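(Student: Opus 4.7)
The plan is to chain together two equivalences already established in the excerpt, namely \cref{aggregatedlemmanonconvex}, which relates satisfaction in a classical interpretation $\Imc$ to satisfaction in its geometric interpretation $\bargeointerp$, and \cref{thm:caninf}, which states that the canonical model $\barcanonmodel$ agrees with $\Omc$ on entailment of CIs and RIs over ${\sf sig}(\Omc)$. Strong TBox faithfulness for a model $\eta$ of $\Omc$ is exactly the biconditional $\Omc \models \tau$ iff $\eta \models \tau$ for every TBox axiom $\tau$, and this is precisely the shape produced by composing these two results.

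Concretely, I would fix an arbitrary TBox axiom $\tau$, which by definition of an \ELH TBox is either a concept inclusion $C \sqsubseteq D$ or a role inclusion $r \sqsubseteq s$. Instantiating \cref{aggregatedlemmanonconvex} with the interpretation $\barcanonmodel$ and the axiom $\tau$ gives $\barcanonmodel \models \tau$ iff $\bargeocanon \models \tau$. Then applying \cref{thm:caninf} (its CI clause in the first case, its RI clause in the second) over ${\sf sig}(\Omc)$ gives $\barcanonmodel \models \tau$ iff $\Omc \models \tau$. Composing these two biconditionals yields $\Omc \models \tau$ iff $\bargeocanon \models \tau$, which is the desired strong TBox faithfulness statement; in particular, the forward direction additionally witnesses that $\bargeocanon$ is a model of $\Omc$, since $\barcanonmodel \models \Omc$ by \cref{thm:caninf} and each axiom of $\Omc$ is transported to $\bargeocanon$ via \cref{aggregatedlemmanonconvex}.

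There is essentially no obstacle in this argument, since the heavy lifting has already been encapsulated in \cref{aggregatedlemmanonconvex,thm:caninf}. The one point that deserves a brief sanity check is that the TBox axioms $\tau$ in the statement are implicitly over ${\sf sig}(\Omc)$, so that the canonical-model clause of \cref{thm:caninf} applies verbatim; this is the natural signature restriction under which strong TBox faithfulness is meaningful, and it matches the scope used elsewhere in the section. If one wished to extend the lemma to axioms mentioning fresh symbols outside ${\sf sig}(\Omc)$, a short conservativity remark would be needed, but as formulated the proof is a direct two-step composition of the earlier equivalences.
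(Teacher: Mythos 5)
Your proposal is correct and follows essentially the same route as the paper: the paper's proof also composes the canonical-model equivalence ($\Omc \models \tau$ iff $\barcanonmodel \models \tau$, i.e.\ \cref{thm:caninf}) with the transfer between $\barcanonmodel$ and $\bargeocanon$, citing the component lemmas \cref{conceptinclusionforth10,roleinclusion1} directly where you cite their aggregate \cref{aggregatedlemmanonconvex}. Your added remark about the implicit restriction to ${\sf sig}(\Omc)$ is a reasonable sanity check that the paper leaves tacit.
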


\begin{proof}
    Since we know $\barcanonmodel$ is canonical, $\Omc \models \alpha$ iff $\barcanonmodel \models \alpha$. By \cref{conceptinclusionforth10} we know $\Imc \models C \sqsubseteq D$ iff $\bargeointerp \models C \sqsubseteq D$, and by \cref{roleinclusion1} we know $\Imc \models r \sqsubseteq s$ iff $\bargeointerp \models r \sqsubseteq s$. This means that $\barcanonmodel \models C \sqsubseteq D$ iff $\bargeocanon \models C \sqsubseteq D$ and $\barcanonmodel \models r \sqsubseteq s$ iff $\bargeocanon \models r \sqsubseteq s$, giving us $\Omc \models \tau$ iff $\bargeocanon \models \tau$.
\end{proof}

\canonicalmuonenonconvexTBoxfaithfulboth*

\begin{proof}
    The theorem follows by \cref{Theorem: Mu1 IQ Faithful} and by \cref{Theorem: Mu2 IQ Faithful}.
\end{proof}

\subsection{Omitted proofs for \cref{sec:convex}}

\gammadequalstogammae*

\begin{proof}
We provide an inductive argument for the claim.

\noindent \textbf{Base case:} Notice that if $\mu(d) = \mu(e)$, then $\mu(d)[i]= n$ iff $\mu(e)[i]=n$, for all $i$. That is, the value at the $ith$ index is $n$ for $\mu(d)$ and $\mu(e)$, otherwise they would not be the same vector. Now, assume $C = A \in N_C$, and $d \in C^\interp$. By the definition of $\mu$, $\mu(d)[C]=1$. Since $\mu(d) = \mu(e)$, we have that $\mu(d)[C]=1$ iff $\mu(e)[C]=1$. But, by the definition of $\mu$, $\mu(e)[C]=1$ iff $e \in C^\interp$, thus giving us our result.

\textbf{Inductive step:} Assume our hypothesis holds for $C_1$ and $C_2$.

Assume $\mu(d) = \mu(e)$. By the semantics of $\ELH$, $d \in (C_1 \sqcap C_2)^\interp$ iff $d \in C^\interp_1$ and $d \in C^\interp_2$. By the induction hypothesis, this happens iff $e \in C^\interp_1$ and $e \in C^\interp_2$. This means, of course, by the semantics of $\ELH$, that $e \in C^\interp_1$ and $e \in C^\interp_2$ iff $e \in (C_1 \sqcap C_2)^\interp$. Finally, we get $d \in (C_1 \sqcap C_2)^\interp$ iff $e \in (C_1 \sqcap C_2)^\interp$.

We prove the case $(\exists r.C_1)$ directly. Assume $\mu(d) = \mu(e)$, and $d \in (\exists r.C_1)^\interp$. Then, by the semantics of $\ELH$, $\exists d'$ such that $d' \in C^\interp_1$, and $r(d, d')^\interp$. By the definition of $\mu$, we know $\mu(d)[r,d'] = 1$. But from our initial observation, $\mu(d)[r,d'] = 1$ iff $\mu(e)[r, d'] = 1$. By definition of $\mu$, $\mu(e)[r, d'] = 1$ iff $(e, d') \in r^\interp$. By the semantics of $\ELH$, whenever $d' \in C_1^\interp$ and $(e, d') \in r^\interp$ we have that $e \in (\exists r.C_1)^\interp$.
\end{proof}

\begin{lemma}
\label{vinetaCimpliesvequalsmud2}
    Let $\Imc$ be an interpretation, and $\mu$ a mapping derived from \cref{mu2definition}. For all \emph{normalized} \EL concepts $C$, if $v \in \geointerp(C)$, then there is $d \in \Delta^\Imc$ such that $v = \mu(d)$ and $d \in C^\Imc$.
\end{lemma}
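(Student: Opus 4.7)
The plan is to prove the statement by structural induction on the normalized \EL concept $C$, following the same template as \cref{vinetaCimpliesvequalsmud1} but adapted to the new mapping $\mu$ and to the three shapes $A$, $A\sqcap B$, $\exists r.A$ allowed in normal form. Throughout, I will rely on two earlier facts: the preceding \cref{def:geointerp2}, which tells us exactly which binary vectors populate $\geointerp(A)$ and $\geointerp(r)$, and the proposition that $\mu(d)=\mu(e)$ implies $d\in C^\Imc$ iff $e\in C^\Imc$, which lets me ``transfer'' membership between domain elements that happen to collapse under $\mu$.

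For the base case $C=A\in N_C$, the definition of $\geointerp(A)$ directly gives that any $v\in \geointerp(A)$ has the form $\mu(d)$ with $\mu(d)[A]=1$, and by the definition of $\mu$ this is equivalent to $d\in A^\Imc$. For the conjunction case $C=A\sqcap B$, suppose $v\in \geointerp(A\sqcap B)=\geointerp(A)\cap \geointerp(B)$. The induction hypothesis applied separately to $A$ and $B$ yields $d_1,d_2\in \Delta^\Imc$ with $v=\mu(d_1)=\mu(d_2)$, $d_1\in A^\Imc$, and $d_2\in B^\Imc$. Since $\mu(d_1)=\mu(d_2)$, the proposition above transfers $B^\Imc$-membership from $d_2$ to $d_1$, giving $d_1\in (A\sqcap B)^\Imc$ and $v=\mu(d_1)$, as desired.

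The existential case $C=\exists r.A$ is where the real work happens, and it is the step I expect to be the main obstacle. Given $v\in \geointerp(\exists r.A)$, the semantics in \cref{Definition: Geometric Interpretation} together with $f=\oplus$ produce a vector $u\in \geointerp(A)$ with $v\oplus u\in \geointerp(r)$. The induction hypothesis for $A$ yields $e\in \Delta^\Imc$ such that $u=\mu(e)$ and $e\in A^\Imc$, so $v\oplus \mu(e)\in \geointerp(r)$. By the definition of $\geointerp(r)$, there exist $d',e'\in \Delta^\Imc$ with $v\oplus \mu(e)=\mu(d')\oplus \mu(e')$ and $\mu(d')[r,e']=1$. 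Since $\oplus$ is concatenation and both halves live in $\Rdim$, comparing the first and last $\dimsymb$ coordinates forces $v=\mu(d')$ and $\mu(e)=\mu(e')$. The hard point is that $e$ and $e'$ need not be equal as domain elements; here I invoke the proposition on collapsed vectors once more to conclude $e'\in A^\Imc$, and combining this with $(d',e')\in r^\Imc$ (obtained from $\mu(d')[r,e']=1$ via the definition of $\mu$) gives $d'\in (\exists r.A)^\Imc$ with $v=\mu(d')$.

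The induction covers all cases in normal form, which completes the argument. The subtle point throughout is the non-injectivity of $\mu$ outside of named individuals, and the whole proof is essentially an orchestration of the collapse proposition together with the two clean facts that (i) $\oplus$ splits uniquely and (ii) each coordinate of $\mu(d)$ faithfully records the corresponding classical fact about $d$.
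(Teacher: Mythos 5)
Your proof is correct and follows essentially the same structural induction as the paper's own argument, handling the shapes $A$, $A\sqcap B$, and $\exists r.A$ via the definitions of $\mu$ and $\geointerp$. You are in fact more careful than the paper: where its proof silently assumes that the two witnesses produced by the induction hypothesis in the conjunction case coincide, and that the element $e$ obtained for the filler equals the $e'$ named in the definition of $\geointerp(r)$ in the existential case, you explicitly bridge both gaps with the proposition that $\mu(d)=\mu(e)$ forces $d$ and $e$ to agree on concept membership, which is exactly the right tool for the non-injectivity of $\mu$.
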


\begin{proof}
    We provide an inductive argument for the claim. 

    \textbf{Base case:} Assume $C = A \in N_C$ and assume $v \in \geointerp(C)$. By the definition of $\geointerp$, it is the case that $v \in \geointerp(C)$ iff $v[C]=1$. This is the case iff $v = \mu(d)$, for some $d \in \Delta^\Imc$.

    \textbf{Inductive step:} Assume our hypothesis holds for $C_1$ and $C_2$. We prove two cases.

    \begin{itemize}
        \item \textbf{Case 1 ($C_1 \sqcap C_2$)}: Assume $v \in \geointerp(C_1 \sqcap C_2)$. Then, by definition of $\geointerp$, it is true that $v \in \geointerp(C_1)$ and $v \in \geointerp(C_2)$. By the inductive hypothesis, if this is the case, then $v = \mu(d) \in C_1$ and $v = \mu(d) \in C_2$, for $d \in \Delta^\Imc$. This gives us $v = \mu(d) \in \geointerp(C_1) \cap \geointerp(C_2)$, which means $v = \mu(d) \in \geointerp(C_1 \sqcap C_2)$, for $d \in \Delta^\Imc$.

        \item \textbf{Case 2 ($\exists r.C_1$)}: 
        Assume $v \in \geointerp(\exists r.C_1)$. Then, by the definition of $\geointerp$, $\exists u \in \geointerp(C_1)$ and $v \oplus u \in \geointerp(r)$. By the inductive hypothesis, if $u \in \geointerp(C_1)$, we get $u = \mu(e) \in \geointerp(C_1)$, for $e \in \Delta^\Imc$. Now, $v \oplus u \in \geointerp(r)$ iff $v \oplus u \in \{\mu(d) \oplus \mu(e) \mid \mu(d)[r,e]=1\}$, for $d,e \in \Delta^\Imc$. This gives us $v = \mu(d)$ such that $\mu(d)[r,e]=1$. By construction of $\geointerp$, if we have $u = \mu(e) \in \geointerp(C_1)$, and $v = \mu(d)$ such that $\mu(d)[r,e]=1$ with $v \oplus u \in \geointerp(r)$, this means $v = \mu(d) \in \geointerp(\exists r.C_1)$, for some $d \in \Delta^\Imc$.
    \end{itemize}
\end{proof}

\begin{lemma}
\label{vinetarimpliesvmud2}
    Let $\Imc$ be an interpretation and let $\mu$ be as in \cref{mu2definition}. For all $r \in N_R$, if $u \oplus w \in \geointerp(r)$, then there are $d,e \in \Delta^\Imc$ such that $u = \mu(d)$, $w = \mu(e)$, and $(d,e) \in r^\Imc$.
\end{lemma}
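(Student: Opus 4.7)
The claim is essentially a direct unpacking of \cref{def:geointerp2} combined with \cref{mu2definition}, so my plan is to proceed in three short steps rather than by induction.

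First, I would assume $u \oplus w \in \geointerp(r)$ and apply \cref{def:geointerp2} for role names, which states that $\geointerp(r) = \{\mu(d) \oplus \mu(e) \mid \mu(d)[r,e]=1,\; d,e \in \Delta^\Imc\}$. This yields the existence of some $d,e \in \Delta^\Imc$ such that $u \oplus w = \mu(d) \oplus \mu(e)$ and $\mu(d)[r,e] = 1$.

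Second, I would separate the two halves of the concatenation. Since $\oplus$ is the fixed concatenation map and all four vectors $u, w, \mu(d), \mu(e)$ lie in $\mathbb{R}^{\dimsymb}$, the equality of concatenations forces coordinate-wise equality on each block: reading the first $\dimsymb$ coordinates gives $u = \mu(d)$, and reading the remaining $\dimsymb$ coordinates gives $w = \mu(e)$. (This is exactly the content of concatenation being injective on its two arguments, which is a specialization of Points~1 and~2 of the isomorphism-preserving linear map conditions for the concatenation map $\oplus$.)

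Third, I would convert $\mu(d)[r,e] = 1$ back to a statement about $r^\Imc$. By the final bullet of \cref{mu2definition}, we have $\mu(d)[r,e] = 1$ if and only if $(d,e) \in r^\Imc$. Putting these together delivers $u = \mu(d)$, $w = \mu(e)$, and $(d,e) \in r^\Imc$, as required. There is no real obstacle here; the only mild subtlety is explicitly noting the separability of $\oplus$, and this is already built into the assumptions on the fixed linear map $f = \oplus$ in \cref{Definition: Geometric Interpretation}.
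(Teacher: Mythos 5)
Your proposal is correct and follows essentially the same route as the paper's own proof: unpack the definition of $\geointerp(r)$ to obtain $d,e$ with $u \oplus w = \mu(d)\oplus\mu(e)$ and $\mu(d)[r,e]=1$, then translate the latter back to $(d,e)\in r^\Imc$ via \cref{mu2definition}. Your second step, explicitly justifying $u=\mu(d)$ and $w=\mu(e)$ from the blockwise injectivity of $\oplus$, is a small detail the paper glosses over, but it is the same argument.
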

\begin{proof}
Assume $v=u \oplus w \in \geointerp(r)$. Then, by the definition of $\geointerp(r)$, 
it is the case that $v \in \{\mu(d) \oplus \mu(e) \mid \mu(d)[r,e] = 1\text{, for }d,e \in \Delta^\Imc\}$. This means there are $d,e\in \Delta^\Imc$ such that $v = \mu(d) \oplus \mu(e)$ and  $\mu(d)[r,e]=1$. By construction of $\mu$, it is true that $\mu(d)[r,e]=1$ iff $(d,e) \in r^\Imc$. This means there are $d,e \in \Delta^\Imc$ such that $u = \mu(d)$, $w = \mu(e)$ and $(d,e) \in r^\Imc$.
\end{proof}

\begin{lemma}
\label{unamedconceptassertmu2}
For all $d \in \Delta^{\mathcal{I}}$, for all $\ELH$ concepts $C$, $d \in C^{\mathcal{I}}$ iff $\mu(d) \in \eta_{\mathcal{I}}(C)$.
\end{lemma}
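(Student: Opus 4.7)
I would prove the biconditional by structural induction on the \ELH concept $C$, following the template of \cref{interpgeointerpequivalence} but now with the single-vector mapping $\mu$ of \cref{mu2definition} replacing $\barmu$, and with the geometric interpretation $\geointerp$ of \cref{def:geointerp2} replacing $\bargeointerp$. Since each $d$ is mapped to a single vector $\mu(d)$ rather than to a set of vectors, the statement and its proof become slightly simpler on the surface, but the possibility that two distinct elements of $\Delta^{\Imc}$ share the same $\mu$-image requires care in one direction.

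For the base case with $C = A \in N_C$, I would just chain the defining equivalences: $d \in A^\Imc$ iff $\mu(d)[A] = 1$ by \cref{mu2definition}, iff $\mu(d) \in \geointerp(A)$ by \cref{def:geointerp2}. The cases $C = \top$ and $C = \bot$ follow immediately from the definitions of $\geointerp(\top)$ and $\geointerp(\bot)$. For the conjunction case $C = C_1 \sqcap C_2$, I would combine the \ELH semantics, the inductive hypothesis, and the fact that $\geointerp(C_1 \sqcap C_2) = \geointerp(C_1) \cap \geointerp(C_2)$ to chain the iffs directly.

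The main obstacle is the existential case $C = \exists r.C_1$. The forward direction is straightforward: from $d \in (\exists r.C_1)^\Imc$ pick $e \in \Delta^\Imc$ with $(d,e) \in r^\Imc$ and $e \in C_1^\Imc$; then $\mu(d)[r, e] = 1$ by \cref{mu2definition}, so $\mu(d) \oplus \mu(e) \in \geointerp(r)$ by \cref{def:geointerp2}, and the inductive hypothesis gives $\mu(e) \in \geointerp(C_1)$, so $\mu(d) \in \geointerp(\exists r.C_1)$. The backward direction is delicate because different elements may have the same $\mu$-image. Assuming $\mu(d) \in \geointerp(\exists r.C_1)$, by the semantics of $\geointerp$ there is some $u \in \geointerp(C_1)$ with $\mu(d) \oplus u \in \geointerp(r)$; by \cref{vinetarimpliesvmud2} there exist $d', e \in \Delta^\Imc$ with $\mu(d) \oplus u = \mu(d') \oplus \mu(e)$ and $(d', e) \in r^\Imc$. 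Injectivity of the concatenation on each coordinate block yields $\mu(d) = \mu(d')$ and $u = \mu(e)$. The key step is then to transfer the role edge from $d'$ back to $d$: reading off the $[r, e]$ coordinate and using $\mu(d) = \mu(d')$, we obtain $\mu(d)[r, e] = \mu(d')[r, e] = 1$, so $(d, e) \in r^\Imc$ by \cref{mu2definition} applied to $d$ itself. Finally, the inductive hypothesis applied to $\mu(e) \in \geointerp(C_1)$ gives $e \in C_1^\Imc$, which together with $(d, e) \in r^\Imc$ yields $d \in (\exists r.C_1)^\Imc$, completing the induction.
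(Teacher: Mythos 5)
Your proposal is correct and follows essentially the same route as the paper's own proof: a structural induction on $C$ that chains the defining equivalences of $\mu$ and $\geointerp$ in the base and conjunction cases and uses the role-coordinate $\mu(d)[r,e]$ in the existential case. In fact, your treatment of the backward direction of the $\exists r.C_1$ case (invoking \cref{vinetarimpliesvmud2} and transferring the role edge from $d'$ to $d$ via $\mu(d)=\mu(d')$) is more explicit than the paper's, which states the corresponding equivalences as ``iff''s without spelling out this non-injectivity subtlety.
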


\begin{proof}
    
We provide an inductive argument for the claim.

For all $d \in \Delta^\interp$, for all $\ELH$ concepts $C$, $d \in C^\interp$ iff $\mu(d) \in \geointerp(C)$.

\textbf{Base case:} Assume $C = A \in N_C$ and $d \in C^\interp$. By the definition of $\mu$, $d \in C^\interp$ iff $\mu(d)[C]=1$. By the definition of geometric interpretation, $\mu(d)[C]=1$ iff $\mu(d) \in \geointerp(C)$.


\textbf{Inductive step:} assume our hypothesis holds for $C_1$ and $C_2$. We consider two cases:

\begin{itemize}
    
    \item \textbf{Case 1 $(C_1 \sqcap C_2)$}: Assume $d \in (C_1 \sqcap C_2)^\interp$. This is the case iff $d \in C_1^\interp$ and $d \in C_2^\interp$. By the inductive hypothesis, we have that $\mu(d) \in \geointerp(C_1)$ and $d \in \geointerp(C_2)$. But $\mu(d) \in \geointerp(C_1)$ and $d \in \geointerp(C_2)$ iff $\mu(d) \in \geointerp(C_1 \sqcap C_2)$. Finally, by the semantics of geometric interpretation, $\mu(d) \in \geointerp(C_1 \sqcap C_2)$ iff $d \in (C_1 \sqcap C_2)^\interp$.

    \item \textbf{Case 2 $(\exists r.C_1)$}: Assume $d \in (\exists r.C_1)^\interp$. Then, by the semantics of $\ELH$, $\exists e \in C_1^\interp$ such that $(d,e) \in r^\interp$. By the inductive hypothesis, we get $\mu(e) \in \geointerp(C_1)$. By the definition of $\geointerp$, $(d, e) \in r^\interp$ iff $\mu(d) \oplus \mu(e) \in \geointerp(r)$. But, by the semantics of our geometric interpretation, $\mu(d) \oplus \mu(e) \in \geointerp(r)$ and $\mu(e) \in \geointerp(C_1)$ iff $\mu(d) \in \geointerp(\exists r.C_1)$.
    
\end{itemize}

\end{proof}

\begin{lemma}
    
\label{Lemma: I models C(a) iff geointerp models C(a) nonconvex}
For all interpretations $\interp$, all $\ELH$ concepts $C$, all $a \in N_I$, $\mathcal{I}\models C(a)$ iff $\geointerp \models C(a)$.
\end{lemma}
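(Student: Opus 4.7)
The plan is to reduce this lemma directly to \cref{unamedconceptassertmu2}, which already establishes the semantic correspondence $d \in C^\Imc \iff \mu(d) \in \geointerp(C)$ for every domain element $d$ and every \ELH concept $C$. The only extra step needed is to specialise this correspondence to the witness $d = a^\Imc$ and then translate both sides into the assertion-level statement.

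First I would unfold both directions using the relevant semantic definitions. By the classical semantics of \ELH, $\Imc \models C(a)$ holds iff $a^\Imc \in C^\Imc$. By the semantics of geometric interpretations from \cref{Definition: Geometric Interpretation}, $\geointerp \models C(a)$ holds iff $\geointerp(a) \in \geointerp(C)$. Then I would invoke \cref{def:geointerp2}, which fixes $\geointerp(a) := \mu(a^\Imc)$, so the right-hand condition becomes $\mu(a^\Imc) \in \geointerp(C)$.

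At this point the lemma reduces to the equivalence $a^\Imc \in C^\Imc \iff \mu(a^\Imc) \in \geointerp(C)$, which is precisely the instance of \cref{unamedconceptassertmu2} obtained by choosing $d := a^\Imc$ (noting that $a^\Imc \in \Delta^\Imc$ since $\Imc$ is a classical interpretation). Chaining the equivalences yields $\Imc \models C(a) \iff \geointerp \models C(a)$, completing the proof.

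There is no real obstacle here: the lemma is essentially a corollary of \cref{unamedconceptassertmu2} after unfolding the definitions of satisfaction for concept assertions on both the classical and the geometric side. The only thing to be careful about is being explicit that $\geointerp(a)$ is defined as $\mu(a^\Imc)$ (rather than, say, as the set $\barmu(a^\Imc)$ used earlier in \cref{sec:nonconvex}), so that the instantiation of \cref{unamedconceptassertmu2} at $d = a^\Imc$ genuinely matches the membership condition $\geointerp(a) \in \geointerp(C)$ that witnesses $\geointerp \models C(a)$.
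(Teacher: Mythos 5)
Your proposal is correct and matches the paper's own proof essentially verbatim: both unfold the satisfaction conditions on each side and then apply \cref{unamedconceptassertmu2} at $d = a^\Imc$, using $\geointerp(a) = \mu(a^\Imc)$ from \cref{def:geointerp2} to chain the equivalences. No differences worth noting.
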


\begin{proof}
$\mathcal{I}\models C(a)$ iff 
$a^\interp\in C^\interp$.
By \cref{unamedconceptassertmu2}, $a^\interp\in C^\interp$
iff $\mu(a^\interp)\in \geointerp(C)$.
By the semantics of geometric interpretation, $\mu(a^\interp)\in \geointerp(C)$ iff
$\geointerp\models C(a)$.
\end{proof}

\begin{lemma}
\label{Lemma: I models rab iff Etai models rab nonconvex}
For all $r \in N_R$, all $a,b \in N_I$, $\Imc \models r(a,b)$ iff $\geointerp \models r(a,b)$.
\end{lemma}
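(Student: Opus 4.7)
The plan is to reduce both sides of the biconditional to the common condition $(a^\Imc, b^\Imc) \in r^\Imc$, essentially by unfolding the definitions of $\mu$ (\cref{mu2definition}) and $\geointerp$ (\cref{def:geointerp2}). This mirrors the strategy of \cref{Lemma: Role assertion for IQ faithfulness Mu 1}, but adapted to the binary-vector mapping $\mu$ instead of the interval mapping $\barmu$, and is a direct analogue of the role-inclusion step in the convex setting.

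For the forward direction, I would start from $\Imc \models r(a,b)$, i.e., $(a^\Imc, b^\Imc) \in r^\Imc$. By \cref{mu2definition}, this is exactly the statement $\mu(a^\Imc)[r, b^\Imc] = 1$, which by \cref{def:geointerp2} places $\mu(a^\Imc) \oplus \mu(b^\Imc)$ inside $\geointerp(r)$. Since $\geointerp(a) = \mu(a^\Imc)$ and $\geointerp(b) = \mu(b^\Imc)$, the semantics of geometric interpretations (with $f = \oplus$) gives $\geointerp \models r(a,b)$.

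For the converse, I would assume $\geointerp \models r(a,b)$, so $\mu(a^\Imc) \oplus \mu(b^\Imc) \in \geointerp(r)$, and invoke \cref{vinetarimpliesvmud2} to obtain $d, e \in \Delta^\Imc$ with $\mu(d) = \mu(a^\Imc)$, $\mu(e) = \mu(b^\Imc)$, and $(d,e) \in r^\Imc$. The one non-trivial step is recovering the specific elements $a^\Imc$ and $b^\Imc$ from these $d, e$: this is done using the individual-name coordinates of $\mu$, since by \cref{mu2definition} we have $\mu(x)[a] = 1$ iff $x = a^\Imc$, and therefore $\mu(d) = \mu(a^\Imc)$ forces $d = a^\Imc$ and likewise $e = b^\Imc$. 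Hence $(a^\Imc, b^\Imc) \in r^\Imc$, i.e., $\Imc \models r(a,b)$.

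There is no real obstacle here; the proof is just a careful bookkeeping of definitions. The only subtle point worth flagging is the injectivity-on-individuals argument in the backward direction, which only works because $\mu$ reserves a dedicated dimension for each individual name. Without those coordinates, we could only recover $d, e$ up to the equivalence ``same $\mu$-image'', which would not suffice to conclude a role assertion about the specific pair $(a, b)$.
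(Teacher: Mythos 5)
Your proof is correct and follows essentially the same route as the paper's: the forward direction unfolds the definitions of $\mu$ and $\geointerp$ to land $\mu(a^\Imc)\oplus\mu(b^\Imc)$ in $\geointerp(r)$, and the backward direction invokes \cref{vinetarimpliesvmud2} and then recovers $d=a^\Imc$, $e=b^\Imc$ from the individual-name coordinates of $\mu$. Your explicit remark on why the dedicated individual dimensions make the recovery work is a slightly more careful rendering of a step the paper states more tersely, but the argument is the same.
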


\begin{proof}
    Assume $\Imc \models r(a,b)$. By the semantics of \ELH, this means there are $d,e \in \Delta^\Imc$ such that $d = a^\Imc$, $e = b^\Imc$, and $(a^\Imc,b^\Imc) \in r^\Imc$. By the definition of $\mu$, this means $\mu(d)[a]=1$, that $\mu(e)[b]=1$, and that $\mu(d)[r,e]=1$. By the definition of geometric interpretation, this means $\mu(d) = \geointerp(a)$, that $\mu(e)=\geointerp(b)$, and that $\mu(d) \oplus \mu(e) \in \geointerp(r)$, which is the case iff $\geointerp \models r(a,b)$.

    Now assume $\geointerp \models r(a,b)$. This means that $\geointerp(a) \oplus \geointerp(b) \in \geointerp(r)$. By \cref{vinetarimpliesvmud2}, we have that $\exists d,e \in \Delta^\Imc$ such that $\geointerp(a) = \mu(d)$, $\geointerp(b) = \mu(e)$, and $(d,e) \in r^\Imc$. But, by the definition of geometric interpretation and construction of $\mu$, we have $\geointerp(a) = \mu(d)$ iff $d = a^\Imc$, and $\geointerp(b) = \mu(e)$ iff $e = b^\Imc$, and $(a^\Imc,b^\Imc) \in r^\Imc$. By the semantics of \ELH, this means $\Imc \models r(a,b)$.
\end{proof}

\begin{lemma}
\label{canongeoequivalence}
If $\mathcal{I_{\mathcal{O}}}$ is the canonical model of $\mathcal{O}$, then the geometrical interpretation $\geocanon$ of $\canonmodel$ is strongly IQ faithful with respect to $\mathcal{O}$. That is, $\mathcal{O} \models \alpha$ iff $\geocanon \models \alpha$, where $\alpha$ is an $\ELH$ IQ.
\end{lemma}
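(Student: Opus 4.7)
The plan is to combine two strands that have already been established in the excerpt: the semantic equivalence between the classical and the non-convex geometric interpretation (on the geometric side), and the canonicity property of $\canonmodel$ (on the classical side). Concretely, I would split the IQ $\alpha$ by shape, reduce the geometric satisfaction in $\geocanon$ to classical satisfaction in $\canonmodel$, and then use $\canonmodel \models \alpha \iff \Omc \models \alpha$ to conclude.

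First, suppose $\alpha$ is a concept assertion $C(a)$ with $C$ an \ELH concept and $a \in N_I(\Omc)$. By \cref{Lemma: I models C(a) iff geointerp models C(a) nonconvex}, instantiated with $\Imc := \canonmodel$ (so that $\geointerp$ becomes $\geocanon$), we obtain $\canonmodel \models C(a) \iff \geocanon \models C(a)$. By \cref{finitecanmodelprops}, for IQs over ${\sf sig}(\Omc)$ in normal form we also have $\Omc \models C(a) \iff \canonmodel \models C(a)$; chaining the two equivalences yields the result in this case. For arbitrary (not necessarily normalized) concepts $C$, I would argue either by structural induction on $C$, pushing the equivalence through $\sqcap$ and $\exists r.{-}$ using the definitions of $\cdot^{\Imc}$ and $\geointerp(\cdot)$, or by appealing to the fact that \cref{unamedconceptassertmu2} is already proved for arbitrary \ELH concepts $C$, so the geometric-to-classical step goes through in full generality, and only the canonical-model direction needs care.

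Second, suppose $\alpha$ is a role assertion $r(a,b)$ with $r \in N_R(\Omc)$ and $a,b \in N_I(\Omc)$. By \cref{Lemma: I models rab iff Etai models rab nonconvex}, again instantiated with $\Imc := \canonmodel$, we obtain $\canonmodel \models r(a,b) \iff \geocanon \models r(a,b)$. Composing with $\Omc \models r(a,b) \iff \canonmodel \models r(a,b)$ (which for role assertions between named individuals holds essentially by the construction of $\Imc_0$ in \cref{def:canmodelinfinite} as adapted in \cref{Definition: Canonical Model Assertion}, together with the RI-saturation closure in the canonical model construction) closes this case.

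The step I expect to be the main obstacle is the transfer from normal form IQs to arbitrary IQs $C(a)$: \cref{finitecanmodelprops} is stated for IQs \emph{in normal form}, while the present lemma mentions general \ELH IQs. I would handle this by noting that \cref{unamedconceptassertmu2,Lemma: I models C(a) iff geointerp models C(a) nonconvex} are proved for arbitrary $C$, and that for the classical side one can reduce $\Omc \models C(a)$ to entailments of normal form IQs by introducing, if needed, fresh concept names abbreviating the subconcepts of $C$, so that canonicity applies to each normalized piece and composes back to $C$. Everything else is a routine chaining of ``iff'' statements already proven in the preceding lemmas.
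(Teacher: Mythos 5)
Your proposal is correct and follows essentially the same route as the paper: the paper's proof likewise splits $\alpha$ into concept and role assertions, applies \cref{Lemma: I models C(a) iff geointerp models C(a) nonconvex} and \cref{Lemma: I models rab iff Etai models rab nonconvex} with $\Imc := \canonmodel$, and chains the result with the canonicity equivalence $\canonmodel \models \alpha$ iff $\Omc \models \alpha$. The normal-form issue you flag is real but is glossed over by the paper's own proof as well (it simply invokes canonicity for arbitrary IQs, whereas \cref{finitecanmodelprops} is stated only for IQs in normal form, and the downstream convex result \cref{theoremcaniqconvex} is indeed restricted to normal form), so your extra care there only makes the argument more honest.
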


\begin{proof}
$\mathcal{I}_{\mathcal{O}}$ is canonical, therefore $\canonmodel \models \alpha$ iff $\mathcal{O} \models \alpha$. By \cref{Lemma: I models C(a) iff geointerp models C(a) nonconvex} we have that $\Imc \models C(a)$ iff $\geointerp \models C(a)$, and by \cref{Lemma: I models rab iff Etai models rab nonconvex} we have that $\Imc \models r(a,b)$ iff $\geointerp \models r(a,b)$. This just means $\Imc \models \alpha$ iff $\geocanon \models \alpha$, giving us $\geocanon \models \alpha$ iff $\Omc \models \alpha$.
\end{proof}

\begin{lemma}
\label{tboxnonconvex}
For all $C,D$ it is the case that $\Imc \models C \sqsubseteq D$ iff $\geointerp \models C \sqsubseteq D$.
\end{lemma}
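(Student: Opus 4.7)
The plan is to prove the biconditional directly by chasing elements through the two correspondence lemmas already established, \cref{unamedconceptassertmu2} (the pointwise equivalence $d \in C^\Imc \iff \mu(d) \in \geointerp(C)$) and \cref{vinetaCimpliesvequalsmud2} (every vector in $\geointerp(C)$ is witnessed by some $d \in \Delta^\Imc$ with $v = \mu(d)$). These together give enough bridge material between the classical and geometric interpretations to translate subsumption in either direction.

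For the forward direction, suppose $\Imc \models C \sqsubseteq D$, i.e.\ $C^\Imc \subseteq D^\Imc$. Take an arbitrary $v \in \geointerp(C)$. By \cref{vinetaCimpliesvequalsmud2} there exists $d \in \Delta^\Imc$ such that $v = \mu(d)$ and $d \in C^\Imc$. The classical inclusion then yields $d \in D^\Imc$, and applying the ``only if'' direction of \cref{unamedconceptassertmu2} gives $v = \mu(d) \in \geointerp(D)$. Since $v$ was arbitrary, $\geointerp(C) \subseteq \geointerp(D)$, i.e.\ $\geointerp \models C \sqsubseteq D$.

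For the backward direction, suppose $\geointerp \models C \sqsubseteq D$, and let $d \in C^\Imc$. By \cref{unamedconceptassertmu2} we have $\mu(d) \in \geointerp(C)$, and the geometric inclusion hypothesis gives $\mu(d) \in \geointerp(D)$. A second application of \cref{unamedconceptassertmu2} (the other direction) yields $d \in D^\Imc$, and so $C^\Imc \subseteq D^\Imc$, i.e.\ $\Imc \models C \sqsubseteq D$.

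The main point to watch is the forward step: we must pass from an arbitrary vector $v \in \geointerp(C)$ to some domain element witnessing it. \cref{vinetaCimpliesvequalsmud2} supplies this but is stated for normalised \EL concepts, so the lemma above should be read for the same class of concepts; the proof is otherwise entirely bookkeeping on top of the two correspondence lemmas.
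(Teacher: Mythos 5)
Your proof is correct and follows essentially the same route as the paper's: both directions are handled by combining \cref{vinetaCimpliesvequalsmud2} (to witness an arbitrary vector of $\geointerp(C)$ by a domain element) with the pointwise equivalence of \cref{unamedconceptassertmu2}. Your closing remark about \cref{vinetaCimpliesvequalsmud2} being stated only for normalised concepts is an apt observation that the paper glosses over, but it does not change the argument.
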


\begin{proof}
    Let $C, D$ be \EL concepts. Assume $\Imc \models C \sqsubseteq D$. By the semantics of \EL, this means $C^\Imc \subseteq D^\Imc$. Let $v \in \geointerp(C)$. By \cref{vinetaCimpliesvequalsmud2} we have that $v = \mu(d) \in \geointerp(C)$.  We know, by \cref{unamedconceptassertmu2}, that $\mu(d) \in \geointerp(C)$ iff $d \in C^\Imc$. Since we have $d \in C^\Imc$, we also have, by assumption, $d \in D^\Imc$. Again by \cref{unamedconceptassertmu2}, this gives us $\mu(d) \in \geointerp(D)$. Since $d$ was chosen arbitrarily, this is the case iff $\geointerp \models C \sqsubseteq D$.
    
    Now assume $\geointerp \models C \sqsubseteq D$. By the semantics of \EL, $\geointerp(C) \subseteq \geointerp(D)$. Now assume $d \in C^\Imc$. We know, by \cref{unamedconceptassertmu2}, that this is the case iff $\mu(d) \in \geointerp(C)$. By assumption, we get $\mu(d) \in \geointerp(D)$. Since $v$ was arbitrary, and we showed that $d \in C^\Imc$ implies $d \in D^\Imc$, this means $\Imc \models C \sqsubseteq D$.
\end{proof}

\begin{lemma}
\label{roleinclusionnonconvexmu2}
    For all $r, s \in N_R$, it is the case that $\Imc \models r \sqsubseteq s$ iff $\geointerp \models r \sqsubseteq s$.
\end{lemma}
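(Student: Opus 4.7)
The plan is to prove the biconditional in the two usual directions, mirroring the structure of Lemma~\ref{roleinclusion1} (the analogous statement for $\bargeointerp$) and invoking Lemma~\ref{vinetarimpliesvmud2} together with Definitions~\ref{mu2definition} and~\ref{def:geointerp2} to translate between role tuples in $\Imc$ and concatenated vectors living inside $\geointerp(r)$ or $\geointerp(s)$.

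For the forward direction I would assume $\Imc \models r \sqsubseteq s$ and take an arbitrary $v \in \geointerp(r)$. Lemma~\ref{vinetarimpliesvmud2} yields $d, e \in \Delta^\Imc$ with $v = \mu(d) \oplus \mu(e)$ and $(d, e) \in r^\Imc$, whence $(d, e) \in s^\Imc$ by hypothesis; by Definition~\ref{mu2definition} we get $\mu(d)[s, e] = 1$, and by Definition~\ref{def:geointerp2} it follows that $v \in \geointerp(s)$. For the backward direction I would suppose $\geointerp(r) \subseteq \geointerp(s)$ and pick $(d, e) \in r^\Imc$. Then $\mu(d)[r, e] = 1$ gives $\mu(d) \oplus \mu(e) \in \geointerp(r) \subseteq \geointerp(s)$; unfolding $\geointerp(s)$ produces $d', e' \in \Delta^\Imc$ with $\mu(d) \oplus \mu(e) = \mu(d') \oplus \mu(e')$ and $\mu(d')[s, e'] = 1$. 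Since $\oplus$ concatenates vectors of equal and fixed length, $\mu(d) = \mu(d')$ and $\mu(e) = \mu(e')$, so $\mu(d)[s, e'] = 1$ and hence $(d, e') \in s^\Imc$.

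The hard part is the last step in the backward direction: bridging from $(d, e') \in s^\Imc$ to the desired $(d, e) \in s^\Imc$ when $\mu$ is not injective on $\Delta^\Imc$, since the index $[s, e]$ is keyed by the specific element $e$ rather than by the vector $\mu(e)$. Proposition~\ref{gammadequalstogammae} already establishes indistinguishability for concept membership and for the source coordinate of role tuples, but its current inductive proof does not cover the target coordinate. I plan to handle this by strengthening Proposition~\ref{gammadequalstogammae} with a symmetric target-coordinate clause, showing that $\mu(e) = \mu(e')$ implies $(d, e) \in s^\Imc$ iff $(d, e') \in s^\Imc$; this upgrade lets me transfer $(d, e') \in s^\Imc$ to $(d, e) \in s^\Imc$, and since $(d, e)$ was an arbitrary element of $r^\Imc$, the inclusion $r^\Imc \subseteq s^\Imc$ then closes the proof.
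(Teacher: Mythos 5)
Your forward direction matches the paper's and is fine. The backward direction is where the real difficulty lies, and you have correctly located it at exactly the step the paper's own proof glosses over: after unfolding $\mu(d)\oplus\mu(e)\in\geointerp(s)$ via \cref{vinetarimpliesvmud2}, one only obtains \emph{some} pair $(d',e')\in s^\Imc$ with $\mu(d')=\mu(d)$ and $\mu(e')=\mu(e)$ (the paper silently reuses the letters $d,e$ for these witnesses). Your treatment of the source coordinate is sound: $\mu(d)=\mu(d')$ forces $\mu(d)[s,e']=\mu(d')[s,e']=1$, hence $(d,e')\in s^\Imc$. But the repair you propose for the target coordinate does not work. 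The strengthened claim --- that $\mu(e)=\mu(e')$ implies $(d,e)\in s^\Imc$ iff $(d,e')\in s^\Imc$ --- is false: by \cref{mu2definition}, $\mu(e)$ records only which individual names $e$ denotes, which concepts contain $e$, and which role edges \emph{leave} $e$; it carries no information about edges \emph{arriving} at $e$. Take $\Delta^\Imc=\{d,e,e'\}$ with $e,e'$ anonymous, in no concept and with no outgoing edges, and $r^\Imc=\{(d,e)\}$, $s^\Imc=\{(d,e')\}$. Then $\mu(e)=\mu(e')$ is the all-zero vector, yet $(d,e')\in s^\Imc$ while $(d,e)\notin s^\Imc$, so the claim you plan to prove has a counterexample; worse, in this interpretation $\geointerp(r)=\{\mu(d)\oplus\mu(e)\}=\{\mu(d)\oplus\mu(e')\}=\geointerp(s)$, so the same $\Imc$ refutes the stated equivalence itself when $\Imc$ ranges over arbitrary interpretations.

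What makes the result usable downstream is that the lemma is only ever invoked for the canonical model $\canonmodel$ of \cref{Definition: Canonical Model Assertion} (and the non-convex/convex interpretations built from it), where one can check that distinct domain elements with identical $\mu$-images also receive the same incoming edges, so the witness pair can legitimately be exchanged for the original one. A correct write-up therefore has to either add such a hypothesis to the statement (e.g., injectivity of $\mu$ on $\Delta^\Imc$, or closure of each $s^\Imc$ under replacing a target by a $\mu$-equivalent element) and verify it for $\canonmodel$, or prove the backward direction directly for the canonical model. As it stands, your argument --- like the paper's --- stops one step short for general $\Imc$, and the specific bridge you propose to build cannot be built.
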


\begin{proof}
    Assume $\Imc \models r \sqsubseteq s$. B the semantics of \EL, $r^\Imc \subseteq s^\Imc$. Now let $v = u \oplus w \in \geointerp(r)$. This means $v \in \{\mu(d) \oplus \mu(e) \mid (d,e) \in r^\Imc\}$, and, by \cref{vinetarimpliesvmud2} there are $d,e \in \Delta^\Imc$ such that $u = \mu(d)$, $w = \mu(e)$ and $(d,e) \in r^\Imc$. By assumption, $(d,e) \in s^\Imc$. By construction of $\mu$, this means $\mu(d)[s,e]=1$. Since we know $v = \mu(d) \oplus \mu(e)$ and $\mu(d)[s,e]=1$, by the definition of $\geointerp$ we have that $v \in \geointerp(s)$, and, therefore $\geointerp \models r \sqsubseteq s$.

    Now assume $\geointerp \models r \sqsubseteq s$. By the semantics of \EL, this means $\geointerp(r) \subseteq \geointerp(s)$. Let $(d,e) \in r^\Imc$. By the construction of $\mu$, this means $\mu(d)[r,e]=1$. By the definition of $\geointerp$, there is $v = \mu(d) \oplus \mu(e) \in \geointerp(r)$. By assumption, $v \in \geointerp(s)$. But, by \cref{vinetarimpliesvmud2}, there are $d,e \in \Delta^\Imc$ such that $u = \mu(d)$, $w = \mu(e)$, and $(d,e) \in s^\Imc$. Since we have proven $(d,e) \in r^\Imc$ implies $(d,e) \in s^\Imc$, this means $\Imc \models r \sqsubseteq s$. 
\end{proof}

\onelemmatorulethemallnonconvexmutwo*

\begin{proof}
    When $\alpha$ is a concept inclusion, the result comes from \cref{tboxnonconvex}. When $\alpha$ is a role inclusion, the result comes from \cref{roleinclusionnonconvexmu2}. When $\alpha$ is an IQ, the result comes from \cref{Lemma: I models C(a) iff geointerp models C(a) nonconvex} and from \cref{Lemma: I models rab iff Etai models rab nonconvex}
\end{proof}

\canmodel*

\begin{proof}
   
We divide the proof into 
claims, first for assertions and then for concept and role inclusions.
In the following, let $\Omc=\Tmc\cup\Amc$ be
an \ELH ontology in normal form,
with \Tmc being the set of \ELH concept and role inclusions in \Omc and \Amc being the set of \ELH assertions in \Omc.
As mentioned before, 
$N_C(\Omc)$, $N_R(\Omc)$, and $N_I(\Amc)$ denote the set of 
concept, role, and individual names occurring in \Omc, respectively. 
In the following, let $A,A_1, A_2, B, B' $ be arbitrary concept names in $N_C(\Omc)$,  let $a,b$ be arbitrary individual names in $N_I(\Amc)$, and let $r,s,s'$ be arbitrary role names in $N_R(\Omc)$.
\begin{claim}
 $\canonmodel\models A(a)$   iff $\Omc\models A(a)$.
\end{claim}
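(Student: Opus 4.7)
The plan is to unfold \cref{Definition: Canonical Model Assertion} directly, since this first claim concerns only concept assertions on named individuals and the definition of $A^{\canonmodel}$ is given by an explicit union that already mentions ontology entailment. By the semantics of geometric interpretations (or, here, classical interpretations), $\canonmodel \models A(a)$ is equivalent to $a^{\canonmodel} \in A^{\canonmodel}$, and by the definition $a^{\canonmodel} = a$, so the whole claim reduces to deciding membership of $a$ in the set $A^{\canonmodel}$.

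The key step is to write out
\[
A^{\canonmodel} \;=\; \{a' \in N_I(\Amc) \mid \Omc \models A(a')\} \;\cup\; \{c_D \in \Delta^{\canonmodel}_{u+} \mid \Tmc \models D \sqsubseteq A\},
\]
and to observe that $N_I(\Amc)$ is disjoint from $\Delta^{\canonmodel}_{u+}$: by construction the latter consists exclusively of the fresh auxiliary symbols $c_A$, $c_{A \sqcap B}$, and $c_{\exists r.B}$, which are chosen distinct from any individual name occurring in the ABox. Therefore $a \in A^{\canonmodel}$ holds if and only if $a$ lies in the first component of the union, which is by definition precisely the condition $\Omc \models A(a)$. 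Both directions of the biconditional are obtained at once.

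I do not expect a genuine obstacle for this particular claim: it follows immediately from the definition once one records the disjointness of $N_I(\Amc)$ from the unnamed part of the domain. The substantive work of \cref{finitecanmodelprops} will come in the subsequent claims, most notably the one for existential assertions $(\exists r.B)(a)$, where one has to follow $r$-edges from $a$ into the anonymous part $\Delta^{\canonmodel}_{u+}$ and rely on the third and fourth clauses of the definition of $r^{\canonmodel}$ together with normalization-based TBox reasoning, and in the clauses for CIs and RIs, where one must check both that $\canonmodel$ satisfies all entailed inclusions and that it does not satisfy any others in ${\sf sig}(\Omc)$.
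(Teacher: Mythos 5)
Your proof is correct and follows essentially the same route as the paper: both reduce the claim to the membership $a = a^{\canonmodel} \in A^{\canonmodel}$ and then exploit the fact that the second component of the union defining $A^{\canonmodel}$ consists only of elements of $\Delta^{\canonmodel}_{u+}$, which is disjoint from $N_I(\Amc)$. The only cosmetic difference is that you handle both directions at once via the disjointness observation, whereas the paper spells out the two implications separately.
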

\begin{proof}
\textbf{    Assume $\Omc \models A(a)$. } Now, by the definition of $\canonmodel$ (\cref{Definition: Canonical Model Assertion}), it is the case that $A^{\canonmodel} \supseteq \{a \in N_I(\Amc) \mid \Omc \models A(a)\}$. By assumption, we have that $a \in A^\canonmodel$. But since $a \in N_I(\Amc)$, by the definition of $\canonmodel$, we have $a^\canonmodel = a$ and, therefore, $a^\canonmodel \in A^\canonmodel$, which means $\canonmodel \models A(a)$. 

\smallskip

\noindent
    \textbf{Now assume $\canonmodel \models A(a)$.} This means $a^\canonmodel \in A^\canonmodel$. We know, by the definition of $\canonmodel$, that $a^\canonmodel = a$. Also by the definition of $\canonmodel$, we know $A^{\mathcal{I}_\mathcal{O}}$ = $\{a \in N_I(\mathcal{A}) \, \vert \, \mathcal{O} \models A(a)\}$ $\cup$ $\{c_D \in \Delta^{\mathcal{I}_\mathcal{O}}_{u+} \, \vert \, \mathcal{O} \models D \sqsubseteq A\}$. 
    Since $a \in N_I(\Amc)$, we have that \(a \not\in \Delta^{\canonmodel}_{u+}\), and thus, $\Omc \models A(a)$.     
\end{proof}

\begin{claim}
 $\canonmodel\models r(a,b)$ iff $\Omc\models r(a,b)$.
\end{claim}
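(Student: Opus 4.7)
The plan is to prove the two directions separately, both reducing almost immediately to the definition of $r^{\canonmodel}$ in \cref{Definition: Canonical Model Assertion} once we track carefully which components of $\Delta^{\canonmodel}$ the elements $a$ and $b$ belong to.

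For the forward direction, assume $\Omc \models r(a,b)$ with $a,b \in N_I(\Amc)$. The first set in the union defining $r^{\canonmodel}$ is exactly $\{(a,b) \in N_I(\Amc) \times N_I(\Amc) \mid \Omc \models r(a,b)\}$, so $(a,b) \in r^{\canonmodel}$. Because $a^{\canonmodel} = a$ and $b^{\canonmodel} = b$ for individual names, this directly gives $(a^{\canonmodel}, b^{\canonmodel}) \in r^{\canonmodel}$, i.e., $\canonmodel \models r(a,b)$.

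For the backward direction, suppose $\canonmodel \models r(a,b)$, so $(a,b) \in r^{\canonmodel}$. I would case-split on which of the four sets in the union defining $r^{\canonmodel}$ contains the pair $(a,b)$. The second set requires its right component to lie in $\Delta^{\canonmodel}_u = \{c_A \mid A \in N_C(\Omc) \cup \{\top\}\}$, and the third and fourth sets require the left component to lie in $\Delta^{\canonmodel}_{u+}$. Since the symbols $c_X$ are introduced as fresh elements, $N_I(\Amc)$ and $\Delta^{\canonmodel}_{u+}$ are disjoint, so $a \notin \Delta^{\canonmodel}_{u+}$ and $b \notin \Delta^{\canonmodel}_u$. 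Hence $(a,b)$ can only belong to the first set of the union, giving $\Omc \models r(a,b)$.

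The only real subtlety — and the main place where one must be careful — is justifying the disjointness $N_I(\Amc) \cap \Delta^{\canonmodel}_{u+} = \emptyset$. This is implicit in the naming convention used in \cref{Definition: Canonical Model Assertion} (elements of the form $c_D$ are chosen outside $N_I$), and once this is noted, the case analysis is completely mechanical. There is no need to inspect how the fresh domain elements interact with role inclusions $s \sqsubseteq r$ here, since those interactions only produce pairs whose components involve $c_X$ symbols, never two genuine ABox individuals.
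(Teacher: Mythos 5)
Your proof is correct and follows essentially the same route as the paper's: unfold the definition of $r^{\canonmodel}$, use the first component of the union for the forward direction, and for the converse observe that a pair of ABox individuals can only arise from that first component since the other three involve elements of the form $c_D$. If anything, your explicit case split and disjointness remark make the backward direction more careful than the paper's one-line justification.
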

\begin{proof}
\textbf{    Assume $\Omc \models r(a,b)$. } By the definition of canonical model (\cref{Definition: Canonical Model Assertion}), $r^\canonmodel \supseteq \{(a,b) \in N_I(\Amc) \times N_I(\Amc) \mid \Omc \models r(a,b)\}$. Since we assumed that $\Omc \models r(a,b)$, we have that $(a,b) \in r^\canonmodel$. Now, again by the definition of \canonmodel, we have that $a^\canonmodel = a$, and $b^\canonmodel = b$. This means $(a^\canonmodel, b^\canonmodel) \in r^\canonmodel$, which is the case iff $\canonmodel \models r(a,b)$.

\smallskip

\noindent
    \textbf{Now assume $\canonmodel \models r(a,b)$.} Then, we know $(a^\canonmodel, b^\canonmodel) \in r^\canonmodel$. By   definition of $r^\canonmodel$, we have that $(a,b) \in r^\canonmodel$. Since $a, b \in N_I$, by definition of \canonmodel, we have   $\Omc \models r(a,b)$.    
\end{proof}
\begin{claim}
 $\canonmodel\models \exists r.A(a)$   iff $\Omc\models \exists r.A(a)$.
\end{claim}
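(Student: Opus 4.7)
The plan is to prove the two directions separately, with both falling directly out of the canonical model construction in \cref{Definition: Canonical Model Assertion}. The forward direction is almost immediate by choosing the ``canonical witness'' $c_A$; the backward direction requires a case split on the form of the witness in $\Delta^{\canonmodel}$, using the structure of $r^{\canonmodel}$ to enumerate which pairs $(a,d)$ with $a \in N_I(\Amc)$ can occur.

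For the forward direction, suppose $\Omc \models \exists r.A(a)$. I would take $d := c_A \in \Delta^{\canonmodel}_{u}$. By the second clause in the definition of $r^{\canonmodel}$, namely $\{(a, c_B) \in N_I(\Amc) \times \Delta^{\canonmodel}_u \mid \Omc \models \exists r.B(a)\}$, instantiated with $B := A$, we obtain $(a, c_A) \in r^{\canonmodel}$. Moreover $c_A \in A^{\canonmodel}$ holds trivially, since $\Tmc \models A \sqsubseteq A$ and $c_A$ falls under the clause $\{c_D \in \Delta^{\canonmodel}_{u+} \mid \Tmc \models D \sqsubseteq A\}$. Since $a^{\canonmodel} = a$, this gives $\canonmodel \models \exists r.A(a)$.

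For the backward direction, suppose $\canonmodel \models \exists r.A(a)$, so there is some $d \in \Delta^{\canonmodel}$ with $(a, d) \in r^{\canonmodel}$ and $d \in A^{\canonmodel}$. Inspecting the definition of $r^{\canonmodel}$, the only clauses whose first component lies in $N_I(\Amc)$ are the first two, which forces either $d \in N_I(\Amc)$ or $d = c_B \in \Delta^{\canonmodel}_u$ for some $B \in N_C(\Omc) \cup \{\top\}$. In the first case, $(a,d) \in r^{\canonmodel}$ yields $\Omc \models r(a,d)$ (by the previous claim for role assertions) and $d \in A^{\canonmodel}$ yields $\Omc \models A(d)$ (by the first claim); combining these by the semantics of $\exists r.A$ gives $\Omc \models \exists r.A(a)$. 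In the second case, $(a, c_B) \in r^{\canonmodel}$ forces $\Omc \models \exists r.B(a)$, and since $c_B \notin N_I(\Amc)$, the membership $c_B \in A^{\canonmodel}$ can only come from $\Tmc \models B \sqsubseteq A$. Monotonicity of $\exists r$ then gives $\Tmc \models \exists r.B \sqsubseteq \exists r.A$, so $\Omc \models \exists r.A(a)$.

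The step that requires the most care is checking exhaustiveness in the backward case split: one must be certain that the last two clauses defining $r^{\canonmodel}$ (those with $c_{\exists s.B}$ or $c_D$ as the first element) cannot contribute any pair whose first element is the individual $a \in N_I(\Amc)$. This is guaranteed because $N_I(\Amc)$ and $\Delta^{\canonmodel}_{u+}$ are disjoint by construction, so those clauses are simply unavailable. With that observation, the two-case analysis is exhaustive and the proof closes cleanly.
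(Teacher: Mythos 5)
Your proof is correct and follows essentially the same route as the paper: the forward direction uses the canonical witness $c_A$ via the clause $\{(a,c_B)\mid \Omc\models\exists r.B(a)\}$, and the backward direction performs the same two-case split on whether the witness is an ABox individual or an element $c_B$ of $\Delta^{\canonmodel}_u$. Your explicit check that the last two clauses of $r^{\canonmodel}$ cannot produce a pair with first component in $N_I(\Amc)$ is a nice addition the paper leaves implicit.
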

\begin{proof}
   \textbf{ Assume $\Omc \models \exists r.A(a)$. }
    By the definition of $\canonmodel$ (\cref{Definition: Canonical Model Assertion}), we have $r^\canonmodel \supseteq \{ (a, c_A) \in N_I(\Amc) \times \Delta^\canonmodel \mid \Omc \models \exists r.A(a)\}$. 
    This means $(a, c_A) \in r^\canonmodel$. Also, by the definition of the canonical model, $a^\canonmodel = a$ and $c_A \in A^\canonmodel$, and therefore $a^\canonmodel \in (\exists r.A)^\canonmodel$. This gives us $\canonmodel \models \exists r.A(a)$.

\smallskip

\noindent
\textbf{Now assume $\canonmodel \models \exists r.A(a)$. } Then, $a^\canonmodel \in (\exists r.A)^\canonmodel$. By the definition of the canonical model, either (1) 
there is  $b\in N_I(\Amc)$
such that
$(a,b) \in r^\canonmodel$ and $b \in A^\canonmodel$ or (2) 
there is $c_{A'}\in \Delta^{\mathcal{I}_\mathcal{O}}_{u}$ such 
$(a,c_{A'}) \in r^\canonmodel$
and $c_{A'}\in A^{\canonmodel}$.
In case (1),     
%
by the definition of $\canonmodel$, we have that 
    $(a,b) \in r^\canonmodel$ means that  $\Omc \models r(a,b)$.  
    We also have that it is the case that $b \in A^\canonmodel$. By the definition of the canonical model, this means that $b \in \{b \in N_I(\Amc) \mid \Omc \models A(b)\}$, so $\Omc \models A(b)$. 
    By the semantics of \EL, $\Omc \models r(a,b)$ and $\Omc \models A(b)$ implies   $\Omc \models \exists r.A(a)$.
In case (2), by the definition of $\canonmodel$, $(a,c_{A'}) \in r^\canonmodel$ means that 
$\Omc\models \exists r.A'(a)$. Again by the definition of $\canonmodel$,
$c_{A'}\in A^{\canonmodel}$ implies
$\Tmc\models {A'}\sqsubseteq A$.
This gives us $\Omc \models \exists r.A(a)$.
\end{proof}

\begin{claim}
 $\canonmodel\models A_1\sqcap A_2\sqsubseteq B$   iff $\Omc\models A_1\sqcap A_2\sqsubseteq B$.
\end{claim}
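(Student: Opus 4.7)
The plan is to prove both directions separately, using the design of the anonymous elements $c_D \in \Delta^{\canonmodel}_{u+}$ as the key tool. The element $c_{A_1 \sqcap A_2}$ is specifically tailored to act as a ``generic'' witness for the concept $A_1 \sqcap A_2$, which is what makes the right-to-left direction go through.

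For the ($\Leftarrow$) direction, I would assume $\Omc \models A_1 \sqcap A_2 \sqsubseteq B$ and pick an arbitrary $d \in (A_1 \sqcap A_2)^\canonmodel = A_1^\canonmodel \cap A_2^\canonmodel$. I then split on whether $d \in N_I(\Amc)$ or $d = c_D \in \Delta^{\canonmodel}_{u+}$. In the first case, the definition of $A_i^\canonmodel$ gives $\Omc \models A_1(d)$ and $\Omc \models A_2(d)$, so $\Omc \models (A_1 \sqcap A_2)(d)$, and the assumption yields $\Omc \models B(d)$, hence $d \in B^\canonmodel$. In the second case, the definition gives $\Tmc \models D \sqsubseteq A_1$ and $\Tmc \models D \sqsubseteq A_2$, so by the semantics of \ELH we have $\Tmc \models D \sqsubseteq A_1 \sqcap A_2$; combined with the assumption we conclude $\Tmc \models D \sqsubseteq B$, which gives $c_D \in B^\canonmodel$. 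Both subcases use the first bullet of previously-established claims in the proof (for individual names) and the definition of $B^\canonmodel$ on $\Delta^{\canonmodel}_{u+}$.

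For the ($\Rightarrow$) direction, the natural choice is to exhibit the element $c_{A_1 \sqcap A_2} \in \Delta^{\canonmodel}_{u+}$ as a witness. Since $\Tmc \models A_1 \sqcap A_2 \sqsubseteq A_1$ and $\Tmc \models A_1 \sqcap A_2 \sqsubseteq A_2$ hold trivially, the definition of the canonical model gives $c_{A_1 \sqcap A_2} \in A_1^\canonmodel \cap A_2^\canonmodel$, i.e.\ $c_{A_1 \sqcap A_2} \in (A_1 \sqcap A_2)^\canonmodel$. Applying the hypothesis $\canonmodel \models A_1 \sqcap A_2 \sqsubseteq B$ yields $c_{A_1 \sqcap A_2} \in B^\canonmodel$. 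Reading off the definition of $B^\canonmodel$ on $\Delta^{\canonmodel}_{u+}$, this forces $\Tmc \models A_1 \sqcap A_2 \sqsubseteq B$, and therefore $\Omc \models A_1 \sqcap A_2 \sqsubseteq B$.

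The main obstacle is conceptual rather than technical: one must be sure that the definition of $B^\canonmodel$ applied to a generic anonymous element $c_D$ really does transfer a subsumption on $\canonmodel$ back to a subsumption entailed by $\Tmc$. This is precisely where the presence of $c_{A_1 \sqcap A_2}$ as a distinguished element of $\Delta^{\canonmodel}_{u+}$ (and not merely $c_A$ for $A \in N_C(\Omc)$) is crucial, otherwise one could not synthesise a ``canonical'' witness for the conjunction $A_1 \sqcap A_2$ and the right-to-left direction would fail. Routine checking in the left-to-right case reduces to invoking the semantics of $\sqcap$ and the earlier claim handling assertions of the form $A(a)$.
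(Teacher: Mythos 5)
Your proposal is correct and follows essentially the same route as the paper: the left-to-right case splits on named individuals versus anonymous elements $c_D$ and reads off the definition of $A_i^{\canonmodel}$ and $B^{\canonmodel}$, and the converse uses $c_{A_1\sqcap A_2}$ as the distinguished witness via the trivial entailments $\Tmc\models A_1\sqcap A_2\sqsubseteq A_i$. The only cosmetic difference is that the paper phrases the second direction contrapositively (assuming $\Omc\not\models A_1\sqcap A_2\sqsubseteq B$ and exhibiting $c_{A_1\sqcap A_2}$ as a counterexample), whereas you argue it directly; the underlying steps are identical.
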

\begin{proof}
      \textbf{Assume $\Omc\models A_1\sqcap A_2\sqsubseteq B$.} We make a case distinction based 
on the elements in $\Delta^{\canonmodel}:=N_I(\Amc)\cup\Delta^{\mathcal{I}_\mathcal{O}}_{u+}$.
\begin{itemize}
\item $a\in N_I(\Amc)$: 
Assume $a \in (A_1 \sqcap A_2)^\canonmodel$. This is the case iff $a \in A_1^\canonmodel$ and $a \in A_2^\canonmodel$. By the definition of $\canonmodel$, this means $\Omc \models A_1(a)$ and $\Omc \models A_2(a)$. By assumption, this gives us $\Omc \models B(a)$, which, by the definition of $\canonmodel$, means that $a \in B^\canonmodel$. Therefore, $\canonmodel \models B(a)$.  Since $a$ was an arbitrary
element in $N_I(\Amc)$, this holds for all  elements of this kind.
\item $c_D\in \Delta^{\mathcal{I}_\mathcal{O}}_{u+}$:     Assume $c_D \in (A_1 \sqcap A_2)^\canonmodel$. This means $c_D \in A_1^\canonmodel$ and $c_D \in A_2^\canonmodel$. By the definition of $\canonmodel$, this gives us that $\Tmc \models D \sqsubseteq A_1$ and $\Tmc \models D \sqsubseteq A_2$. By assumption, this means $\Tmc \models D \sqsubseteq B$. But, by the definition of $\canonmodel$, this means $c_D \in B^\canonmodel$.
Since $c_D$ was an arbitrary
element in $\Delta^{\mathcal{I}_\mathcal{O}}_{u+}$, this argument can be applied for  all  elements of this kind.
\end{itemize}
     We have thus shown that,
    for all elements $d$ in $\Delta^{\canonmodel}$,
    if $d\in (A_1\sqcap A_2)^{\canonmodel}$ then $d\in B^{\canonmodel}$. So
$\canonmodel\models A_1\sqcap A_2\sqsubseteq  B$.


\smallskip

\noindent
\textbf{Now, assume $\Omc\not\models A_1\sqcap A_2\sqsubseteq B$.}
We show that $\canonmodel\not\models A_1\sqcap A_2\sqsubseteq B$ by
showing that 
$c_{A_1\sqcap A_2}\in 
(A_1\sqcap A_2)^{\canonmodel}$
but $c_{A_1\sqcap A_2}\not\in B^{\canonmodel}$.
By definition of $\canonmodel$, $c_{A_1\sqcap A_2}\in 
A^{\canonmodel}_i$  since $\Tmc\models A_1\sqcap A_2\sqsubseteq A_i$ (trivially), where $i\in\{1,2\}$. Then, by the semantics of \ELH, $c_{A_1\sqcap A_2}\in 
(A_1\sqcap A_2)^{\canonmodel}$.
We now argue that $c_{A_1\sqcap A_2}\not\in B^{\canonmodel}$.
This follows again by the definition of $\canonmodel$ and the assumption that $\Omc\not\models A_1\sqcap A_2\sqsubseteq B$, since the definition means that
$c_{D}\not\in B^{\canonmodel}$ iff 
$\Omc\models D\sqsubseteq B$ and we can take  $D=A_1\sqcap A_2$.
\end{proof}

\begin{claim}
 $\canonmodel\models \exists r.B\sqsubseteq A$   iff $\Omc\models \exists r.B\sqsubseteq A$.
\end{claim}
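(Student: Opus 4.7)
The plan is to mirror the preceding claim for $A_1 \sqcap A_2 \sqsubseteq B$: establish the forward direction by a case analysis on the elements of $\Delta^{\canonmodel}$, and the backward direction by exhibiting a single distinguishing witness, namely the anonymous element $c_{\exists r.B}$ introduced precisely for this purpose.

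For the forward direction, assume $\Omc \models \exists r.B \sqsubseteq A$, take an arbitrary $d \in (\exists r.B)^{\canonmodel}$, and fix $e \in B^{\canonmodel}$ with $(d,e) \in r^{\canonmodel}$. I would split on whether $d \in N_I(\Amc)$ or $d = c_D \in \Delta^{\canonmodel}_{u+}$, and then inspect which of the four clauses in the definition of $r^{\canonmodel}$ produced $(d,e)$. When $d \in N_I(\Amc)$, either $e = b \in N_I(\Amc)$ with $\Omc \models r(d,b)$ and $\Omc \models B(b)$, or $e = c_{B'} \in \Delta^{\canonmodel}_u$ with $\Omc \models \exists r.B'(d)$ and, since $c_{B'} \in B^{\canonmodel}$, $\Tmc \models B' \sqsubseteq B$; both subcases yield $\Omc \models \exists r.B(d)$ by monotonicity, and the assumption then gives $\Omc \models A(d)$, whence $d \in A^{\canonmodel}$. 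When $d = c_D$, the only applicable clauses for $(c_D, c_{B'})$ are the third and the fourth, so either $D = \exists s.B'$ with $\Tmc \models s \sqsubseteq r$, or $\Tmc \models D \sqsubseteq A'$ and $\Tmc \models A' \sqsubseteq \exists r.B'$ for some $A' \in N_C(\Omc)$; combining these with $\Tmc \models B' \sqsubseteq B$ and the assumed inclusion yields $\Tmc \models D \sqsubseteq A$, so $c_D \in A^{\canonmodel}$.

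For the backward direction I prove the contrapositive. Assume $\Omc \not\models \exists r.B \sqsubseteq A$; the witness that $\canonmodel \not\models \exists r.B \sqsubseteq A$ is the element $c_{\exists r.B} \in \Delta^{\canonmodel}_{u+}$. Applying the third clause in the definition of $r^{\canonmodel}$ with $s = r$ gives $(c_{\exists r.B}, c_B) \in r^{\canonmodel}$, while $c_B \in B^{\canonmodel}$ follows from the trivial $\Tmc \models B \sqsubseteq B$, so $c_{\exists r.B} \in (\exists r.B)^{\canonmodel}$. On the other hand, if $c_{\exists r.B}$ were in $A^{\canonmodel}$, the definition of $A^{\canonmodel}$ would force $\Tmc \models \exists r.B \sqsubseteq A$, and since in \ELH the entailment of a concept inclusion is insensitive to the ABox, this would give $\Omc \models \exists r.B \sqsubseteq A$, contradicting our assumption.

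The main obstacle will be the bookkeeping for the four clauses generating $r^{\canonmodel}$ in the forward direction, and in particular the clause producing $(c_{\exists s.B'}, c_{B'})$, where one must combine $\Tmc \models s \sqsubseteq r$ with $\Tmc \models B' \sqsubseteq B$ to first derive $\Tmc \models \exists s.B' \sqsubseteq \exists r.B$ before invoking the assumed inclusion; everything else is a straightforward monotonicity chase.
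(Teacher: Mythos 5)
Your proposal is correct and follows essentially the same route as the paper: the same case split on $N_I(\Amc)$ versus $\Delta^{\canonmodel}_{u+}$ with the same sub-cases over the clauses generating $r^{\canonmodel}$ for the forward direction, and the same witness $c_{\exists r.B}$ (with $(c_{\exists r.B},c_B)\in r^{\canonmodel}$ via $s=r$ and $c_B\in B^{\canonmodel}$) for the backward direction. The only cosmetic difference is that for $\Omc\not\models\exists r.B\sqsubseteq A$ you pass through monotonicity of entailment while the paper directly reads off $\Tmc\not\models\exists r.B\sqsubseteq A$; both are sound.
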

\begin{proof}
    \textbf{Assume $\Omc\models \exists r.B\sqsubseteq A$.} We make a case distinction based 
on the elements in $\Delta^{\canonmodel}:=N_I(\Amc)\cup\Delta^{\mathcal{I}_\mathcal{O}}_{u+}$. 
\begin{itemize}
\item $a\in N_I(\Amc)$: Assume $a\in (\exists r.B)^{\canonmodel}$. In this case, by definition of $\canonmodel$, either (1)
there is $b\in N_I(\Amc)$ such that
$(a,b)\in r^{\canonmodel}$ and 
$b\in B^{\canonmodel}$
or (2) there is $c_{B'}\in  \Delta^{\mathcal{I}_\mathcal{O}}_{u}$ such that
$(a,c_{B'})\in r^{\canonmodel}$ and 
$c_{B'}\in B^{\canonmodel}$.
In case (1), by definition of $\canonmodel$, $(a,b)\in r^{\canonmodel}$ implies that
$\Omc\models r(a,b)$. Also, $b\in B^{\canonmodel}$ implies that $\Omc\models B(b)$. Together with the assumption that $\Omc\models \exists r.B\sqsubseteq A$, this means that
$\Omc\models A(a)$. Again by definition of $\canonmodel$,
we have that $a\in A^{\canonmodel}$.
In case (2), by definition of $\canonmodel$, $(a,c_{B'})\in r^{\canonmodel}$ implies that
$\Omc\models \exists r.B'(a)$.
Also, by definition of $\canonmodel$,
$c_{B'}\in B^{\canonmodel}$ implies that $\Tmc\models B'\sqsubseteq B$.
Then, $\Omc\models \exists r.B(a)$.
By assumption  $\Omc\models \exists r.B\sqsubseteq A$, which means that
$\Omc\models A(a)$. Again by definition of $\canonmodel$,
we have that $a\in A^{\canonmodel}$. Since $a$ was an arbitrary
element in $N_I(\Amc)$, this argument can be applied  for all  elements of this kind.
\item $c_D\in \Delta^{\mathcal{I}_\mathcal{O}}_{u+}$: Assume $c_D\in (\exists r.B)^{\canonmodel}$. In this case, by definition of $\canonmodel$, either (1) there is $c_{B'}\in \Delta^{\mathcal{I}_\mathcal{O}}_{u}$

such that
$(c_D,c_{B'})\in r^{\canonmodel}$
and $c_{B'}\in B^{\canonmodel}$ or (2) $D$ is of the form $\exists s.B'$,
 $(c_D,c_{B'})\in r^{\canonmodel}$, $c_{B'}\in B^{\canonmodel}$, and $\Tmc\models s\sqsubseteq r$.
 In case (1), 
by definition of $\canonmodel$, 
$\Tmc\models D\sqsubseteq A$ and
$\Tmc\models A\sqsubseteq \exists r.B'$.
Again by definition of $\canonmodel$,
$c_{B'}\in B^{\canonmodel}$ implies 
$\Tmc\models B'\sqsubseteq B$. This means that $\Tmc\models D\sqsubseteq \exists r.B$. By assumption $\Omc\models \exists r.B\sqsubseteq A$, which means $\Tmc\models \exists r.B\sqsubseteq A$. 
Then, $\Tmc\models D\sqsubseteq A$.
By definition of $\canonmodel$,
we have that $c_D\in A^{\canonmodel}$.
 In case (2),
 we have that
 $\Tmc\models D\sqsubseteq \exists r.B'$ since $D$ is of the form $\exists s.B'$ and $\Tmc\models s\sqsubseteq r$. 
 Also, as $c_{B'}\in B^{\canonmodel}$, by definition of $\canonmodel$,
 $\Tmc\models B'\sqsubseteq B$. Then, $\Tmc\models D\sqsubseteq \exists r.B$. By assumption,
 $\Omc\models \exists r.B\sqsubseteq A$, which then means that $\Tmc\models D\sqsubseteq A$. 
 By definition of $\canonmodel$,
we have that $c_D\in A^{\canonmodel}$.
Since $c_D$ was an arbitrary
element in $\Delta^{\mathcal{I}_\mathcal{O}}_{u+}$, this argument can be applied  for all  elements of this kind.
\end{itemize}
  We have thus shown that,
    for all elements $d$ in $\Delta^{\canonmodel}$,
    if $d\in (\exists r.B)^{\canonmodel}$ then $d\in A^{\canonmodel}$. So
$\canonmodel\models \exists r.B\sqsubseteq A$.

\smallskip

\noindent
\textbf{Now, assume $\Omc\not\models \exists r.B\sqsubseteq A$.} We show that $\canonmodel\not\models \exists r.B\sqsubseteq A$ by
showing that 
$c_{\exists r.B}\in 
(\exists r.B)^{\canonmodel}$
but $c_{\exists r.B}\not\in A^{\canonmodel}$.
By the definition of $\canonmodel$, $(c_{\exists s.B},c_B)\in r^{\canonmodel}$ if $\Tmc \models s\sqsubseteq r$, which is trivially the case for $s=r$, and $c_B\in B^{\canonmodel}$ by definition of $\canonmodel$.
We now argue that $c_{\exists r.B}\not\in A^{\canonmodel}$.
By definition of $\canonmodel$, an element of the form
$c_D$ is in $A^{\canonmodel}$ iff
  $\mathcal{T} \models D\sqsubseteq A $. By assumption $\Omc\not\models \exists r.B\sqsubseteq A$ which means 
  $\Tmc\not\models \exists r.B\sqsubseteq A$. So  
  $c_{\exists r.B}$ is not in $ A^{\canonmodel}$.
\end{proof}
\begin{claim}
 $\canonmodel\models A\sqsubseteq \exists r.B$   iff $\Omc\models A\sqsubseteq \exists r.B$.
\end{claim}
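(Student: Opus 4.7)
The plan is to mirror the two-direction case-analysis pattern of the preceding claims in this proof. For the forward direction, I would instantiate the assumption $\canonmodel\models A\sqsubseteq \exists r.B$ at the canonical representative $c_A\in\Delta^{\mathcal{I}_\mathcal{O}}_u$. Since $\Tmc\models A\sqsubseteq A$ trivially, the definition of $A^{\canonmodel}$ puts $c_A$ into $A^{\canonmodel}$, so there must exist $d\in\Delta^{\canonmodel}$ with $(c_A,d)\in r^{\canonmodel}$ and $d\in B^{\canonmodel}$. The next step is to inspect the four clauses defining $r^{\canonmodel}$: clauses (1) and (2) are ruled out because $c_A\notin N_I(\Amc)$, and clause (3) cannot apply because $c_A$ is not syntactically of the form $c_{\exists s.B'}$. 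Hence only clause (4) survives, forcing $d=c_{B'}$ together with some $A''\in N_C(\Omc)$ such that $\Tmc\models A\sqsubseteq A''$ and $\Tmc\models A''\sqsubseteq \exists r.B'$. Combined with $\Tmc\models B'\sqsubseteq B$ (which comes from $c_{B'}\in B^{\canonmodel}$), transitivity together with the monotonicity of $\exists r.X$ in $X$ under subsumption then yield $\Tmc\models A\sqsubseteq \exists r.B$, so $\Omc\models A\sqsubseteq \exists r.B$.

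For the backward direction, assuming $\Omc\models A\sqsubseteq \exists r.B$, I would show by case distinction on $\Delta^{\canonmodel}=N_I(\Amc)\cup\Delta^{\mathcal{I}_\mathcal{O}}_{u+}$ that every element of $A^{\canonmodel}$ belongs to $(\exists r.B)^{\canonmodel}$. On a named element $a\in N_I(\Amc)$ with $a\in A^{\canonmodel}$, the definition of $A^{\canonmodel}$ gives $\Omc\models A(a)$, hence $\Omc\models \exists r.B(a)$; the previously proved claim for $\exists r.B(a)$ assertions then transfers this to $\canonmodel\models \exists r.B(a)$. On an unnamed element $c_D\in \Delta^{\mathcal{I}_\mathcal{O}}_{u+}$ with $c_D\in A^{\canonmodel}$, which by definition means $\Tmc\models D\sqsubseteq A$, I would exhibit the witness $c_B\in \Delta^{\mathcal{I}_\mathcal{O}}_u$: clause (4) of $r^{\canonmodel}$ applies with $A''=A$, because $\Tmc\models D\sqsubseteq A$ by case hypothesis and $\Tmc\models A\sqsubseteq \exists r.B$ by the global assumption, so $(c_D,c_B)\in r^{\canonmodel}$; and $c_B\in B^{\canonmodel}$ is immediate from $\Tmc\models B\sqsubseteq B$.

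The main obstacle I anticipate is making the case analysis in the forward direction completely airtight: one must verify that no clause in the definition of $r^{\canonmodel}$ besides clause (4) can produce an outgoing $r$-edge from $c_A$, and that the three-step TBox chain really composes to $\Tmc\models A\sqsubseteq \exists r.B$. The composition step relies on the fact that $\exists r.X$ is monotone in $X$ with respect to $\sqsubseteq$, a standard semantic property of \ELH that should be invoked explicitly. A secondary subtlety is that the named-element sub-case of the backward direction depends on the previously established claim for $\exists r.B(a)$ assertions, so the order of the claims within the enclosing proof must be preserved.
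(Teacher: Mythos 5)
Your proposal is correct and follows essentially the same route as the paper's proof: both directions hinge on the element $c_A$ (with $c_A\in A^{\canonmodel}$ via $\Tmc\models A\sqsubseteq A$), the same inspection of which clause of $r^{\canonmodel}$ can yield an outgoing $r$-edge from $c_A$, and the same composition $\Tmc\models A\sqsubseteq A'$, $\Tmc\models A'\sqsubseteq\exists r.B'$, $\Tmc\models B'\sqsubseteq B$ giving $\Tmc\models A\sqsubseteq\exists r.B$; the only cosmetic difference is that you argue the hard direction directly where the paper argues by contraposition, and you delegate the named-element subcase to the earlier $\exists r.B(a)$ claim where the paper unfolds the definition of $r^{\canonmodel}$ inline.
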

\begin{proof}
\textbf{Assume $\Omc\models A\sqsubseteq \exists r.B$.} We make a case distinction based 
on the elements in $\Delta^{\canonmodel}:=N_I(\Amc)\cup\Delta^{\mathcal{I}_\mathcal{O}}_{u+}$.
\begin{itemize}
\item $a\in N_I(\Amc)$: Assume $a\in A^{\canonmodel}$. By definition of $\canonmodel$, we have $\Omc\models A(a)$. 
By assumption $\Omc\models A\sqsubseteq \exists r.B$, so
$\Omc\models \exists r.B (a)$.
Then, by definition of $\canonmodel$, $(a,c_B)\in r^{\canonmodel}$. 
Again by definition of $\canonmodel$,
we have $c_B\in B^{\canonmodel}$.
So $a\in (\exists r.B)^{\canonmodel}$. Since $a$ was an arbitrary element in $N_I(\Amc)$, the argument golds for all similar elements. 
\item $c_D\in \Delta^{\mathcal{I}_\mathcal{O}}_{u+}$: Assume $c_D\in A^{\canonmodel}$.
By definition of $\canonmodel$,
we have that $\Tmc\models D\sqsubseteq A$. By assumption, $\Omc\models A\sqsubseteq \exists r.B$ which means $\Tmc\models A\sqsubseteq \exists r.B$.
Then, by definition of $\canonmodel$,
$(c_D,c_B)\in r^{\canonmodel}$. Again by definition of $\canonmodel$,
we have that $c_B\in B^{\canonmodel}$. So $c_D\in (\exists r.B)^{\canonmodel}$. Since $c_D$ was an arbitrary element in $\Delta^{\mathcal{I}_\mathcal{O}}_{u+}$, this argument holds for all similar elements.
\end{itemize}
    We have thus shown that,
    for all elements $d$ in $\Delta^{\canonmodel}$,
    if $d\in A^{\canonmodel}$ then $d\in (\exists r.B)^{\canonmodel}$. This means that
$\canonmodel\models A\sqsubseteq \exists r.B$.

\smallskip

\noindent
\textbf{Now, assume $\Omc\not\models A\sqsubseteq \exists r.B$.} We show that $\canonmodel\not\models A\sqsubseteq \exists r.B$ by
showing that 
$c_A\in A^{\canonmodel}$
but $c_A\not\in (\exists r.B)^{\canonmodel}$.
By definition of $\canonmodel$, we have that
$\{c_D \in \Delta^{\mathcal{I}_\mathcal{O}}_{u+} \, \vert \, \mathcal{T} \models D \sqsubseteq A\}\subseteq A^{\canonmodel}$. For $D=A$ we trivially have that $\mathcal{T} \models A \sqsubseteq A$, 
so $c_A\in A^{\canonmodel}$. We now show that $c_A\not\in (\exists r.B)^{\canonmodel}$. 
Suppose this is not the case
and there is some element
$d\in \Delta^{\canonmodel}$ such that
$(c_A,d)\in r^{\canonmodel}$ and $d\in B^{\canonmodel}$.
By definition of $\canonmodel$, 
this can happen iff $d$ is of the form
$c_{B'}$ in $\Delta^{\mathcal{I}_\mathcal{O}}_u$
and, moreover,
 $\mathcal{T} \models A\sqsubseteq A' \text{   and } \mathcal{T} \models A' \sqsubseteq \exists r.B'$ for some $A'\in N_C(\Omc)$.
 We now argue
 $d=c_{B'}\in B^{\canonmodel}$ implies
 $\Tmc\models B'\sqsubseteq B$.
By definition of $\canonmodel$, $c_{B'}\in B^{\canonmodel}$ iff $\Tmc\models B'\sqsubseteq B$. Since $\mathcal{T} \models A\sqsubseteq A' \text{   and } \mathcal{T} \models A' \sqsubseteq \exists r.B'$, we have $\Tmc\models A\sqsubseteq \exists r.B$, which means $\Omc\models A\sqsubseteq \exists r.B$. This contradicts our assumption that there is some element
$d\in \Delta^{\canonmodel}$ such that
$(c_A,d)\in r^{\canonmodel}$ and $d\in B^{\canonmodel}$. Thus, $c_A\not\in (\exists r.B)^{\canonmodel}$, as required.
\end{proof}

\begin{claim}
 $\canonmodel\models r\sqsubseteq s$   iff $\Omc\models r\sqsubseteq s$.
\end{claim}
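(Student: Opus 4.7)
The plan is a case analysis for the forward direction and a concrete witness argument for the backward direction, following the template of the preceding claims in this proof.

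For the forward direction, suppose $\Omc \models r \sqsubseteq s$ and take an arbitrary $(u,v) \in r^{\canonmodel}$. I case split on the four disjuncts defining $r^{\canonmodel}$. If $(u,v)=(a,b) \in N_I(\Amc)\times N_I(\Amc)$ with $\Omc \models r(a,b)$, the role inclusion gives $\Omc \models s(a,b)$, so $(a,b) \in s^{\canonmodel}$. If $(u,v)=(a,c_B)$ with $\Omc \models \exists r.B(a)$, then using $r\sqsubseteq s$ (which entails $\exists r.B \sqsubseteq \exists s.B$) we get $\Omc \models \exists s.B(a)$. If $(u,v)=(c_{\exists s'.B},c_B)$ with $\Tmc \models s'\sqsubseteq r$, transitivity with $\Tmc \models r \sqsubseteq s$ yields $\Tmc \models s' \sqsubseteq s$. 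Finally, if $(u,v)=(c_D,c_B)$ is witnessed by some $A \in N_C(\Omc)$ with $\Tmc \models D\sqsubseteq A$ and $\Tmc \models A\sqsubseteq\exists r.B$, then $\Tmc \models A \sqsubseteq \exists s.B$ by the role inclusion, and the same $A$ witnesses the fourth clause of $s^{\canonmodel}$. In every case $(u,v) \in s^{\canonmodel}$.

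For the backward direction I argue contrapositively: assume $\Omc \not\models r \sqsubseteq s$ and exhibit a pair in $r^{\canonmodel} \setminus s^{\canonmodel}$. The witness is $(c_{\exists r.\top}, c_\top)$, which lies in $r^{\canonmodel}$ by the third clause with $s'=r$ and the trivial entailment $\Tmc \models r \sqsubseteq r$. To rule out $(c_{\exists r.\top}, c_\top) \in s^{\canonmodel}$ I inspect the four clauses in the definition of $s^{\canonmodel}$: the first two are inapplicable because $c_{\exists r.\top} \notin N_I(\Amc)$; the third requires $\Tmc \models r \sqsubseteq s$, which contradicts the hypothesis; and the fourth would demand some $A \in N_C(\Omc)$ with both $\Tmc \models \exists r.\top \sqsubseteq A$ and $\Tmc \models A \sqsubseteq \exists s.\top$.

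The main obstacle is dispatching this fourth clause, which I reduce to the auxiliary observation that for a normalized \ELH TBox, $\Tmc \not\models \exists r.\top \sqsubseteq A$ for every $A \in N_C(\Omc)$. To prove this I build the interpretation $\Imc'$ over the two-element domain $\{d,e\}$ in which every concept name of $N_C(\Omc)$ is interpreted as $\emptyset$, $r^{\Imc'} = \{(d,e)\}$, and $(d,e) \in t^{\Imc'}$ exactly when $\Tmc \models r \sqsubseteq t$, with every other role extension empty. Each normalized CI of shape $A_1 \sqcap A_2 \sqsubseteq B$, $\exists t.A \sqsubseteq B$, or $A \sqsubseteq \exists t.B$ is vacuously satisfied because its left-hand side evaluates to $\emptyset$ in $\Imc'$, and the role inclusions in $\Tmc$ hold by construction. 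Hence $\Imc'$ is a model of $\Tmc$, yet $d \in (\exists r.\top)^{\Imc'}$ while $d \notin A^{\Imc'}$ for every $A \in N_C(\Omc)$, so no such $A$ can witness the fourth clause. Combining this with the preceding eliminations yields $(c_{\exists r.\top},c_\top) \notin s^{\canonmodel}$ and hence $\canonmodel \not\models r \sqsubseteq s$, completing the proof.
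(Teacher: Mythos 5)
Your proof is correct and follows essentially the same strategy as the paper's: the same case analysis over the clauses defining $r^{\canonmodel}$ for the forward direction, and the same witness pair $(c_{\exists r.\top}, c_{\top})$ for the contrapositive of the backward direction. If anything, your backward direction is slightly more thorough than the paper's, which asserts that a pair of the form $(c_{\exists s'.B}, c_B)$ belongs to $s^{\canonmodel}$ \emph{iff} $\Tmc\models s'\sqsubseteq s$ without explicitly ruling out membership via the fourth clause of the definition; your auxiliary countermodel showing $\Tmc\not\models \exists r.\top\sqsubseteq A$ for every $A\in N_C(\Omc)$ when $\Tmc$ is normalized closes that small gap cleanly.
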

\begin{proof}
\textbf{Assume $\Omc\models r\sqsubseteq s$.} We make a case distinction based 
on the elements in $\Delta^{\canonmodel}$ and how they
can be related in the extension of a role name in the definition of $\canonmodel$.
\begin{itemize}
    \item $(a,b)\in N_I(\Amc)\times N_I(\Amc)$:
    Assume 
    $(a,b)\in r^{\canonmodel}$.
    We first argue that in this case
    $\Omc\models r(a,b)$. By definition of $\canonmodel$,
$     (a, b) \in  r^{\canonmodel} \text{ iff } \mathcal{O} \models r(a,b)$.  
Since by assumption $\Omc\models r\sqsubseteq s$
we have that $\mathcal{O} \models s(a,b)$, so 
$(a,b)\in s^{\canonmodel}$. Since
$(a,b)$ was an arbitrary pair in $N_I(\Amc)\times N_I(\Amc)$, the argument can be applied for all such kinds of pairs.
    \item $(a,c_B)\in N_I(\Amc)\times \Delta^{\mathcal{I}_\mathcal{O}}_{u}$:
    Assume 
    $(a,c_B)\in r^{\canonmodel}$. We first argue that in this case
    $\Omc\models \exists r.B(a)$. By definition of $\canonmodel$,
    we have that $(a,c_B)\in r^{\canonmodel}$ iff
    $\Omc\models \exists r.B(a)$.
    By assumption $\Omc\models r\sqsubseteq s$.
    So $\Omc\models \exists s.B(a)$. Then, again by definition of $\canonmodel$,
    we have that $(a,c_B)\in s^{\canonmodel}$. Since
$(a,c_B)$ was an arbitrary pair in $N_I(\Amc)\times \Delta^{\mathcal{I}_\mathcal{O}}_{u}$, this argument can be applied 
for all such kinds of pairs.
    \item $(c_D,c_B)\in \Delta^{\mathcal{I}_\mathcal{O}}_{u+}\times \Delta^{\mathcal{I}_\mathcal{O}}_{u}$:
    Assume 
    $(c_D,c_B)\in r^{\canonmodel}$.
    In this case, by definition of $\canonmodel$,
    either (1) $\mathcal{T} \models D\sqsubseteq A \text{ and } \mathcal{T} \models A \sqsubseteq \exists r.B$, for some $A\in N_C(\Omc)$, or (2) $D$ is of the form $\exists s'.B$ and $\Tmc\models s'\sqsubseteq r$.
In case (1), since by assumption $\Omc\models r\sqsubseteq s$, 
we have that $\mathcal{T} \models D\sqsubseteq A \text{ and } \mathcal{T} \models A \sqsubseteq \exists s.B$, for some $A\in N_C(\Omc)$.
Then, by definition of $\canonmodel$, 
it follows that
$(c_D,c_B)\in s^{\canonmodel}$. In case (2), since $\Tmc\models s'\sqsubseteq r$ 
and by assumption
$\Omc\models r\sqsubseteq s$ (which means $\Tmc\models r\sqsubseteq s$), we have that
$\Tmc\models s'\sqsubseteq s$.
Then, again by definition of $\canonmodel$, as in this case $D$ is of the form $\exists s'.B$,
it follows that
$(c_D,c_B)\in s^{\canonmodel}$.
Since
$(c_D,c_B)$ was an arbitrary pair in $\Delta^{\mathcal{I}_\mathcal{O}}_{u+}\times \Delta^{\mathcal{I}_\mathcal{O}}_{u}$, this argument can be applied for
 all such kinds of pairs.
\end{itemize}
We have thus shown that
$\canonmodel\models r\sqsubseteq s$.

\smallskip

\noindent
\textbf{Now, assume $\Omc\not\models r\sqsubseteq s$.}
We show that $\canonmodel\not\models r\sqsubseteq s$.
By definition of $\canonmodel$, we have that
$\{(c_{\exists s.B},c_B)\in \Delta^{\mathcal{I}_\mathcal{O}}_{u+} \times \Delta^{\mathcal{I}_\mathcal{O}}_u\mid \Tmc\models s\sqsubseteq r\}\subseteq r^{\canonmodel}$.
By taking $B=\top$ and $s=r$ (and since trivially $\Tmc\models r\sqsubseteq r$), 
we have in particular that
$(c_{\exists r.\top},c_{\top})\in r^{\canonmodel}$.
We now argue that
$(c_{\exists r.\top},c_{\top})\notin s^{\canonmodel}$. By definition of $\canonmodel$, 
a pair of the form 
$(c_{\exists s'.B},c_B)$ is in 
$s^{\canonmodel}$
iff
 $\Tmc\models s'\sqsubseteq s$.
 By assumption $\Omc\not\models r\sqsubseteq s$, which means 
 $\Tmc\not\models r\sqsubseteq s$.
 So the pair $(c_{\exists r.\top},c_{\top})$ is not in $ s^{\canonmodel}$.
\end{proof}

This finishes our proof.
\end{proof}

\begin{lemma}
\label{Theorem: TBox Faithfulness mu2 nonconvex case}
Let \Omc be a normalized \ELH ontology and let \canonmodel be the canonical model of \Omc (\cref{Definition: Canonical Model Assertion}). 
The \dimsymb-dimensional $\oplus$-geometric interpretation of \canonmodel (\cref{def:geointerp2}) 
is a strongly TBox faithful model of \Omc.
\end{lemma}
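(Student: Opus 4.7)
The plan is to assemble the claim by chaining together two results already established in the excerpt: \cref{aggregatedlemmanonconvexmutwo}, which transfers satisfaction of \ELH axioms between the classical interpretation \Imc and its geometric interpretation \geointerp, and \cref{finitecanmodelprops}, which characterizes CI and RI entailment of a normalized \ELH ontology \Omc in terms of the canonical model \canonmodel. Strong TBox faithfulness requires, for every TBox axiom $\tau$, that $\geocanon \models \tau$ implies $\Omc \models \tau$; by the remark following \cref{Definition: Faithfulness}, once we know $\geocanon \models \Omc$ this becomes an ``if and only if''. Since \Omc is normalized, the TBox axioms under consideration are CIs in normal form and RIs, both over ${\sf sig}(\Omc)$, which is exactly the setting of \cref{finitecanmodelprops}.

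First I would verify that $\geocanon$ does satisfy \Omc. For ABox axioms this is immediate from the construction of \canonmodel (\cref{Definition: Canonical Model Assertion}) followed by \cref{aggregatedlemmanonconvexmutwo}; for CIs and RIs in \Tmc, the canonical model trivially satisfies them (being a model of \Omc, as is witnessed by \cref{finitecanmodelprops} applied to the axioms of \Tmc themselves), and then \cref{aggregatedlemmanonconvexmutwo} transports this satisfaction to $\geocanon$.

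The core of the argument is the chain of biconditionals. For an arbitrary normalized CI $\tau$ over ${\sf sig}(\Omc)$, by \cref{finitecanmodelprops} we have $\Omc \models \tau$ iff $\canonmodel \models \tau$, and by \cref{aggregatedlemmanonconvexmutwo} we have $\canonmodel \models \tau$ iff $\geocanon \models \tau$. Combining gives $\Omc \models \tau$ iff $\geocanon \models \tau$. The same two-step chain, invoking the RI clause of \cref{finitecanmodelprops} and the RI clause of \cref{aggregatedlemmanonconvexmutwo}, handles the case $\tau = r \sqsubseteq s$. In both cases the ``only if'' direction is the strong TBox-faithfulness statement we want.

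The only mild subtlety I anticipate is bookkeeping around the scope of \cref{finitecanmodelprops}, which is restricted to axioms in normal form over ${\sf sig}(\Omc)$. For normalized \ELH this is exactly the class of TBox axioms we care about, so no genuine gap arises; still, the statement of strong TBox faithfulness should be understood as quantifying over TBox axioms in this class, consistent with how the analogous convex result (\cref{canonicalgeoconvexIQTBoxStrong}) is framed. Once this is made explicit, the proof is a one-line composition of the two cited lemmas applied separately to CIs and RIs.
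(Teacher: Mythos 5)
Your proposal is correct and follows essentially the same route as the paper: both chain $\Omc \models \tau$ iff $\canonmodel \models \tau$ (\cref{finitecanmodelprops}) with $\canonmodel \models \tau$ iff $\geocanon \models \tau$, the only cosmetic difference being that you cite the aggregated transfer result (\cref{aggregatedlemmanonconvexmutwo}) where the paper cites its component lemmas (\cref{tboxnonconvex} and \cref{roleinclusionnonconvexmu2}) directly. Your remark about the scope being restricted to normal-form axioms over ${\sf sig}(\Omc)$ matches the paper's treatment, and the extra verification that $\geocanon \models \Omc$ is harmless though not needed for the one-directional faithfulness claim.
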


\begin{proof}
   From \cref{finitecanmodelprops}, if \(\tau\) is an \ELH CI in normal form or an \ELH role inclusion over \({\sf sig}(\Omc)\), then $\canonmodel \models \tau$ iff $\mathcal{O} \models \tau$. 
   Since, by \cref{tboxnonconvex} it is the case that $\Imc \models C \sqsubseteq D$ iff $\geointerp \models C \sqsubseteq D$ (where \(C\) and \(D\) are arbitrary \ELH concepts) and by \cref{roleinclusionnonconvexmu2} it is the case that $\Imc \models r \sqsubseteq s$ iff $\geointerp \models r \sqsubseteq s$ (with \(r, s \in \NR\)), we have that $\Imc \models \tau$ iff $\geocanon \models \tau$, where $\tau$ is a TBox axiom in normal form. This gives us $\geocanon \models \tau$ iff $\Omc \models \tau$ for any normalized TBox axiom.
\end{proof}

\aggregatednonconvexalphaTBOXmutwo*

\begin{proof}
    This result follows from \cref{canongeoequivalence,Theorem: TBox Faithfulness mu2 nonconvex case}.
\end{proof}

\begin{lemma}
\label{convexiffinterpforrelations}
For all $r \in N_R$, all $a,b \in N_I$, it is the case that $\geointerp \models r(a,b)$ iff $\geoconvex \models r(a,b)$.
\end{lemma}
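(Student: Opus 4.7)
The plan is to prove the two directions separately, with the forward direction being essentially immediate and the backward direction requiring the binary vector machinery developed earlier (\cref{nonbinarydifference} and \cref{binarycorollary}).

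For the forward direction, I would unfold the definitions: $\geointerp \models r(a,b)$ means $\geointerp(a) \oplus \geointerp(b) \in \geointerp(r)$. Since $\geoconvex(a) = \geointerp(a) = \mu(a^\Imc)$ and $\geoconvex(b) = \geointerp(b) = \mu(b^\Imc)$ by \cref{def:geointerp2,def:geoconvex}, and since $\geointerp(r) \subseteq \geointerp(r)^\ast = \geoconvex(r)$ by the definition of convex hull, it follows immediately that $\geoconvex(a) \oplus \geoconvex(b) \in \geoconvex(r)$, i.e., $\geoconvex \models r(a,b)$.

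The backward direction is the substantive step. Assume $\geoconvex \models r(a,b)$, i.e., $\mu(a^\Imc) \oplus \mu(b^\Imc) \in \geoconvex(r) = \{\mu(d) \oplus \mu(e) \mid (d,e) \in r^\Imc\}^\ast$. The key observation is that the vector $\mu(a^\Imc) \oplus \mu(b^\Imc)$ is \emph{binary}: by \cref{mu2definition}, $\mu$ maps every domain element to a binary vector in $\{0,1\}^{\dimsymb}$, and the concatenation of two binary vectors is a binary vector in $\{0,1\}^{2\dimsymb}$. Moreover, the underlying set $S := \{\mu(d) \oplus \mu(e) \mid (d,e) \in r^\Imc\}$ is a subset of $\{0,1\}^{2\dimsymb}$, so \cref{binarycorollary} applies: a binary vector in $S^\ast$ must already be in $S$. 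Hence $\mu(a^\Imc) \oplus \mu(b^\Imc) \in \geointerp(r)$, which by definition gives $\geointerp \models r(a,b)$.

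The main (minor) obstacle is simply checking that \cref{binarycorollary} is applicable here, i.e., verifying that $S \subseteq \{0,1\}^{2\dimsymb}$ and that the candidate vector $\mu(a^\Imc) \oplus \mu(b^\Imc)$ is binary; both follow directly from the construction of $\mu$ in \cref{mu2definition} and the fact that $\oplus$ is concatenation. No case analysis on the structure of $r$ is needed, since role names are treated uniformly by the definitions.
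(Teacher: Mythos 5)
Your proof is correct and follows essentially the same route as the paper's: the backward direction rests on the binariness of $\mu(a^\Imc)\oplus\mu(b^\Imc)$ together with \cref{binarycorollary}, and the forward direction on $\geointerp(r)\subseteq\geoconvex(r)$. If anything, your version is slightly more careful than the paper's, which states the argument as a chain of ``iff''s and leaves the forward inclusion implicit.
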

\begin{proof}
We know that $\geoconvex \models r(a,b)$ iff it is true that $\geoconvex(a) \oplus \geoconvex(b) \in \geoconvex(r)$. From the definition of $\geoconvex$ we know $\geoconvex(a) \oplus \geoconvex(b) = \geointerp(a) \oplus \geointerp(b)$. Since $\mu(d)$ is binary for any $d$, we have $\geointerp(a) \oplus \geointerp(b)$ is binary. From \cref{binarycorollary}, we have $\geointerp(a) \oplus \geointerp(b) \in \geointerp(r)$, which, by the definition of satisfaction is the case iff $\geointerp \models r(a,b)$.
\end{proof}

\begin{lemma}
    \label{Lemma: v in eta estrela A}
    For any vector $v$, such that $v$ is a result of the mapping in \cref{mu2definition},  if $v \in \geoconvex(A)$, then $v[A]=1$.
\end{lemma}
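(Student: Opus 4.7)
The plan is to leverage the binarity of vectors produced by $\mu$ together with \cref{binarycorollary}, which says that a binary vector in the convex hull of a set of binary vectors must actually lie in that set. Since every vector produced by $\mu$ belongs to $\{0,1\}^{\dimsymb}$ (by construction in \cref{mu2definition}), and $\geoconvex(A)$ is by definition the convex hull of the set $S_A := \{\mu(d) \mid d \in A^{\Imc}\} \subseteq \{0,1\}^{\dimsymb}$, the vector $v$ given in the hypothesis fits exactly the scope of \cref{binarycorollary}.

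First I would state that, by the assumption that $v$ is produced by the mapping $\mu$, we have $v \in \{0,1\}^{\dimsymb}$. Next, I would unfold the definition $\geoconvex(A) = S_A^{*}$ from \cref{def:geoconvex} and observe that $S_A \subseteq \{0,1\}^{\dimsymb}$, so \cref{binarycorollary} applies to $v$ and $S_A$ directly. This yields $v \in S_A$, meaning there exists $d \in A^{\Imc}$ with $v = \mu(d)$. Finally, I would invoke the defining clause of $\mu$ (\cref{mu2definition}) which sets $\mu(d)[A] = 1$ precisely when $d \in A^{\Imc}$, concluding $v[A] = 1$.

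The proof is essentially an immediate application of \cref{binarycorollary}, so there is no real obstacle; the only subtlety worth highlighting in the write-up is to justify carefully that $v$ is binary (this is where the hypothesis that $v$ is obtained through $\mu$ is used) and that $S_A$ consists only of binary vectors (so that the hypothesis of \cref{binarycorollary}, namely $S \subseteq \{0,1\}^{d}$, is genuinely satisfied). Once these two points are in place, the rest of the argument is a one-line unfolding of definitions.
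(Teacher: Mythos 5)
Your proof is correct for the lemma as stated, but it takes a genuinely different route from the paper. The paper never invokes the binarity of $v$ or \cref{binarycorollary}: it simply writes $v = \sum_{i=1}^{n}\lambda_i v_i$ with each $v_i \in \geointerp(A)$, observes that $v_i[A]=1$ for every $i$ by \cref{def:geointerp2}, and concludes $v[A] = \sum_{i=1}^{n}\lambda_i \cdot 1 = 1$ because the coefficients sum to $1$. That argument is more elementary and, crucially, more general: it establishes $v[A]=1$ for \emph{every} $v \in \geoconvex(A)$, with no hypothesis on $v$. This extra generality is not cosmetic --- the lemma is later applied, in the $\exists r.A(a)$ case of \cref{convexiffinterpforconcepts}, to a vector $v = \suf{z}$ that is explicitly assumed to be non-binary, a situation your argument cannot reach since it hinges on $v \in \{0,1\}^{\dimsymb}$ to trigger \cref{binarycorollary}. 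Conversely, your route buys a stronger intermediate conclusion when it does apply ($v \in \geointerp(A)$ itself, not merely $v[A]=1$), but under the binarity hypothesis that conclusion is already available elsewhere in the paper. If you keep your version, be aware that the downstream use of the lemma would need the paper's convex-combination argument anyway, so it is the better proof to record here.
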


\begin{proof}
    By the definition of $\geoconvex$ and that of convex hull, for all $v$, it holds that $v \in \geoconvex(A)$ means $\exists$ $\lambda_i 0 \leq \lambda_i \leq 1$ such that $v = \sum_{i=1}^n v_i \lambda_i$, with $v_i \in \geointerp(A)$. By the definition of $\geointerp$, it is true that $v_i \in \geointerp(A)$ is the case iff $v_i[A]=1$, for all $1\leq i \leq n$. By the definition of convex hull, this means $v[A]=1$.
\end{proof}


\begin{lemma}
\label{convexiffinterpforconcepts}
    For all \EL IQs in normal form $\alpha$, it is the case that $\geoconvex \models \alpha$ iff $\geointerp \models \alpha$.
\end{lemma}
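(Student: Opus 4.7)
The plan is a case analysis on the form of the IQ in normal form. The role assertion case $r(a,b)$ is already handled by \cref{convexiffinterpforrelations}. For every other shape the forward direction ($\geointerp \models \alpha \Rightarrow \geoconvex \models \alpha$) is routine since $\geointerp(E) \subseteq \geoconvex(E)$ for every concept and role name $E$, so the substantive work is in the backward direction.

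For $\alpha = A(a)$, note that $\geoconvex(a) = \geointerp(a) = \mu(a^\Imc)$ is binary. If $\geoconvex \models A(a)$, then $\mu(a^\Imc) \in \geoconvex(A) = \geointerp(A)^{\ast}$, and \cref{binarycorollary} applied to $S = \geointerp(A)$ yields $\mu(a^\Imc) \in \geointerp(A)$, i.e.\ $\geointerp \models A(a)$. The case $\alpha = (A \sqcap B)(a)$ follows by applying the argument to each conjunct separately and recombining via the definition of $\eta$ on intersections, since $\geoconvex(A \sqcap B) = \geoconvex(A) \cap \geoconvex(B)$.

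The main obstacle is $\alpha = (\exists r.A)(a)$. Assume $\geoconvex \models (\exists r.A)(a)$, so there is $u \in \geoconvex(A)$ with $\mu(a^\Imc) \oplus u \in \geoconvex(r)$. Unfold the latter membership as
\[
\mu(a^\Imc) \oplus u \;=\; \sum_{i \in I} \lambda_i\bigl(\mu(d_i) \oplus \mu(e_i')\bigr),
\]
with $(d_i,e_i') \in r^\Imc$, $\lambda_i > 0$ for all $i \in I$, and $\sum_{i \in I} \lambda_i = 1$. Projecting onto the first $\dimsymb$ coordinates gives $\mu(a^\Imc) = \sum_{i \in I} \lambda_i \mu(d_i)$. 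Since $\mu(a^\Imc)$ is binary, \cref{binarycorollary} applied to $S = \{\mu(d_i) \mid i \in I\}$ yields some $i^{\ast} \in I$ with $\mu(d_{i^{\ast}}) = \mu(a^\Imc)$. Reading coordinate $[r, e_{i^{\ast}}']$ of this equality together with the definition of $\mu$ gives $(a^\Imc, e_{i^{\ast}}') \in r^\Imc$.

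It remains to show $e_{i^{\ast}}' \in A^\Imc$. Since $u \in \geoconvex(A)$, we can also write $u = \sum_j \kappa_j \mu(e_j)$ with $e_j \in A^\Imc$, hence $u[A] = \sum_j \kappa_j = 1$. On the other hand, the expansion above gives $u[A] = \sum_{i \in I} \lambda_i \, \mu(e_i')[A]$ where each $\mu(e_i')[A] \in \{0,1\}$. A convex combination of values in $\{0,1\}$ with strictly positive weights summing to $1$ can only equal $1$ if every contributing term equals $1$, so $\mu(e_i')[A] = 1$ for all $i \in I$, in particular $e_{i^{\ast}}' \in A^\Imc$. Therefore $\mu(e_{i^{\ast}}') \in \geointerp(A)$ and $\mu(a^\Imc) \oplus \mu(e_{i^{\ast}}') \in \geointerp(r)$, witnessing $\geointerp \models (\exists r.A)(a)$. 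The crux is exploiting that $\mu(a^\Imc)$ already encodes all $r$-successors of $a^\Imc$, so equality $\mu(d_{i^{\ast}}) = \mu(a^\Imc)$ at the individual-agnostic level transfers to the actual role relation in $\Imc$, while the coordinate-wise binary analysis recovers membership in $A$.
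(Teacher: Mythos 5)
Your proof is correct, and while the easy cases (role assertions via \cref{convexiffinterpforrelations}, $A(a)$ and $(A\sqcap B)(a)$ via \cref{binarycorollary}, and the routine forward direction from $\geointerp(E)\subseteq\geoconvex(E)$) coincide with the paper's treatment, your handling of the existential case is genuinely different and, in my view, tighter. The paper first splits on whether some \emph{binary} witness $v\in\geoconvex(A)$ with $\geoconvex(a)\oplus v\in\geoconvex(r)$ exists; in the remaining branch it reasons about the coordinates $[a]$, $[r,e]$, $[A]$ of the possibly non-binary vector $z=\geoconvex(a)\oplus v$ and its convex decomposition, and concludes that $z\in\geointerp(r)$ --- a step that is delicate precisely because $z$ itself need not be a concatenation of binary $\mu$-vectors. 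You avoid the case split entirely: you decompose $\mu(a^\Imc)\oplus u$ as a strictly positive convex combination of generators $\mu(d_i)\oplus\mu(e_i')$ of $\geointerp(r)$, project onto prefix and suffix (which is legitimate since $\oplus$ is linear componentwise), apply \cref{binarycorollary} to the prefix to extract an index $i^\ast$ with $\mu(d_{i^\ast})=\mu(a^\Imc)$, read off the $[r,e_{i^\ast}']$ coordinate to get $(a^\Imc,e_{i^\ast}')\in r^\Imc$, and use the fact that a strictly positive convex combination of $\{0,1\}$-values equals $1$ only if all terms are $1$ to get $e_{i^\ast}'\in A^\Imc$ (the latter playing the role of the paper's \cref{Lemma: v in eta estrela A}). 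The payoff of your route is that it produces an explicit \emph{binary} witness $\mu(a^\Imc)\oplus\mu(e_{i^\ast}')\in\geointerp(r)$ with $\mu(e_{i^\ast}')\in\geointerp(A)$, so the final appeal to the semantics of $\geointerp$ is immediate, whereas the paper's argument still has to convert facts about $z$ back into membership claims about $\geointerp$. Both arguments rest on the same two ingredients (decomposition of elements of a convex hull over binary generators, plus coordinatewise reading of $\mu$), so the difference is one of organization and rigor rather than of underlying ideas.
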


\begin{proof}
If \(\alpha\) is a role assertion the \lcnamecref{convexiffinterpforconcepts} follows from \cref{convexiffinterpforrelations}.
Now, we will consider the remaining cases.
Let $A, B \in N_C$ be concept names, and $a \in N_I$ be an individual name. We make a case distinction and divide the proof into claims for readability.

    \begin{claim}
        Case 1: $\geoconvex \models A(a)$ iff $\geointerp \models A(a)$.
    \end{claim}

    \begin{proof}
    
    \textbf{Assume $\geoconvex \models A(a)$.} By the semantics of geometric interpretation, $\geoconvex(a) \in \geoconvex(A)$. By the definition of $\mu$, it is the case that $\geoconvex(a)$ is binary and, by the definition of $\geoconvex$, it is the case that $\geoconvex(a) = \geointerp(a)$. From \cref{binarycorollary} we get that $\geointerp(a) \in \geointerp(A)$, which is the case iff $\geointerp \models A(a)$.
    
    \textbf{Now assume $\geointerp \models A(a)$.} This means $\geointerp(a) \in \geointerp(A)$. By definition of $\geoconvex$, we know $\geointerp(a) = \geoconvex(a)$, and by \cref{convexsubsetobvious} we know $\geointerp(A) \subseteq \geoconvex(A)$. By assumption, $\geoconvex(a) \in \geoconvex(A)$. By the semantics of geometric interpretation, this means $\geoconvex \models A(a)$.
    \end{proof}

    \begin{claim}\label{reminderanchor}
        Case 2: $\geoconvex \models (\exists r.A(a))$ iff $\geointerp \models (\exists r.A(a))$.
    \end{claim}

    \begin{proof}    
    \textbf{Assume $\geoconvex \models \exists r.A(a)$.} By the semantics of $\geoconvex$, we have that $\geoconvex(a) \in \geoconvex(\exists r.A)$. By the definition of $\geoconvex$, we know $\geoconvex(a) = \geointerp(a)$. Also, by construction of $\mu$, it is the case that $\geointerp(a)$ is binary. If there is a binary $v\in \geoconvex(A)$ such that $\geoconvex(a)\oplus v\in \geoconvex(r)$ then we are done. In this case, by \cref{binarycorollary}, we have that $v\in \geointerp(A)$ and $\geointerp(a)\oplus v\in \geointerp(r)$. This means, by the semantics of $\geointerp$, that $\geointerp \models \exists r.A(a)$.
    
    \textbf{Otherwise}, for all $v\in \geoconvex(A)$ such that $\geoconvex(a)\oplus v\in \geoconvex(r)$ we have that $v$ is non-binary (and, moreover, such $v$ exists). We rename this vector to $z$, giving us $z = \geoconvex(a) \oplus v \in \geoconvex(r)$. This means that $z = \sum_{i=1}^{n'} v'_i \lambda'_i$, such that  $\exists \lambda'_i$ with $0 \leq \lambda'_i \leq 1$ and $\sum_{i=1}^{n'} \lambda'_i = 1$, and it also means that $v'_1, \ldots, v'_{n'} \in \geointerp(r)$. For clarity, we call the vector  on the left-hand side of the concatenation operation its \emph{prefix} \pref{x}, and the one on the right-hand side its \emph{suffix} \suf{x}. For example, regarding the vector $z \in \mathbb{R}^{2\cdot \dimsymb}$ renamed above, we have $\pref{z} = \geoconvex(a) \in \Rdim$ and $\suf{z} = v \in \Rdim$.

    We now need to demonstrate that $z \in \geointerp(\exists r.A(a))$. We show that
    (1) $\pref{z}[a]=1$,
    (2) $\pref{z}[r,e]=1$,
    and (3) $\suf{z}[A]=1$.
    \begin{enumerate}

    \item We now argue that, for any $v'_i \in \geointerp(r)$ such that $\sum_{i=1}^n v'_i \lambda'_i$ = $z$, it must be the case that $\pref{v'_i} = \geoconvex(a)$. 
    This is because $\geoconvex(a)$ cannot be written as a convex combination of vectors $w' \in (\geointerp(r)\setminus \{\geoconvex(a)\oplus v\mid v\in \Rdim\})$ such that $\pref{v'_i} = \sum_{i=1}^n w'_i \lambda_k$. 
    If this was the case, every $w'$ would have $\pref{w'}[a]=0$, which, multiplied by any $\lambda'_i$, would of course still result in $\pref{w'}[a]=0$, contradicting the fact that $z = \geoconvex(a) \oplus v$. Since we know $\pref{z}=\geoconvex(a)$, we have that $\pref{z}[a]=1$.
    
    \item  We now argue that $\pref{z}[r,e]=1$. By \cref{vinetarimpliesvmud2}, we know that, for $v'_i \in \geointerp(r)$, there are $d,e \in \Delta^\Imc$ such that $\pref{v'_i} = \mu(d)$, $\suf{v'_i} = \mu(e)$, and $(d,e) \in r^\Imc$, which, by the definition of $\mu$, gives us $\pref{v'_i}[r,e]=1$.

    \item   From the fact we have assumed $v \in \geoconvex(A)$ and $v = \suf{z}$, we know that $\suf{z} = \sum_{i=1}^n v_i \lambda_i$ with $v_i \in \geointerp(A)$. As \(v \in \geoconvex(A)\), we get from \cref{Lemma: v in eta estrela A} that $\suf{z}[A]=1$.
    
    \end{enumerate}
    From these facts, we have that for $z = \sum_{i=1}^n v'_i \lambda'_i$, it is true that $\pref{z}[a]=1$, that $\pref{z}[r,e]=1$, and that $\suf{z}[A]=1$. By definition of $\geointerp$, this means $\pref{z} = \geointerp(a)$, that $z \in \geointerp(r)$, and that $\suf{z} = v \in \geointerp(A)$. Finally, by the semantics of $\geointerp$, we have $\geointerp \models \exists r.A(a)$.
    
    \textbf{Now assume $\geointerp \models \exists r.A(a)$.} By the semantics of $\geointerp$, this means $\geointerp(a) \in \geointerp(\exists r.A)$. We know, by the definition of $\geoconvex$, that $\geointerp(a)$ = $\geoconvex(a)$, and therefore it is binary. Now, $\geoconvex(a) \in \geointerp(\exists r.A)$ means $\geoconvex(a) \oplus v \in \geointerp(r)$ and $v \in \geointerp(A)$. Since $\geoconvex(a) \oplus v \in \geointerp(r)$, this means it is a binary vector, and by \cref{convexsubsetobvious}, it gives us $\geoconvex(a) \oplus v \in \geoconvex(r)$. Since $v$ itself is binary and $v \in \geointerp(A)$, again by \cref{convexsubsetobvious}, we have $v \in \geoconvex(A)$. This means, by the semantics of $\geoconvex$, that $\geoconvex \models \exists r.A(a)$.

    \begin{claim}
        Case 3: $\geoconvex \models A \sqcap B(a)$ iff $\geointerp \models A \sqcap B(a)$
    \end{claim}

    \textbf{Assume $\geoconvex \models A \sqcap B(a)$.} By the semantics of geometric interpretation, this means $\geoconvex(a) \in \geoconvex(A)$ and $\geoconvex(a) \in \geoconvex(B)$. By the definition of $\geoconvex$, it is the case that $\geoconvex(a) = \geointerp(a)$, and it is therefore binary. But, by \cref{binarycorollary} this means $\geointerp(a) \in \geointerp(A)$ and $\geointerp(a) \in \geointerp(B)$. This means $\geointerp(a) \in \geointerp(A) \cap \geointerp(B)$, which gives us $\geointerp \models A \sqcap B(a)$. 

    \textbf{Now assume $\geointerp \models A \sqcap B(a)$.} This means $\geointerp(a) \in \geointerp(A)$ and $\geointerp(a) \in \geointerp(B)$. By definition of $\geoconvex$ we have $\geointerp(a) = \geoconvex(a)$, and by \cref{convexsubsetobvious} we have $\geoconvex(a) \in \geoconvex(A)$ and $\geoconvex(a) \in \geoconvex(B)$. This means $\geoconvex(a) \in \geoconvex(A) \sqcap \geoconvex(B)$, giving us $\geoconvex \models A \sqcap B(a)$.
\end{proof}
This finishes our proof.
\end{proof}

\begin{lemma}
\label{theoremcaniqconvex}
    Let \(\ontoo\) be a normalized \ELH ontology and $\canonmodel$ be the canonical model of $\mathcal{O}$.
    The geometrical interpretation $\convexcanon$ of $\canonmodel$ is strongly IQ faithful with respect to $\mathcal{O}$. 
    That is, $\mathcal{O} \models \alpha$ iff $\convexcanon \models \alpha$, where $\alpha$ is an \ELH IQ in normal form.
\end{lemma}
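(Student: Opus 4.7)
The plan is to chain together the two main results that have already been established for the canonical model and for convex hulls of non-convex geometric interpretations. The statement we need is that $\Omc \models \alpha$ iff $\convexcanon \models \alpha$ for every \ELH IQ $\alpha$ in normal form, and both directions will follow from a biconditional chain of length two.

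First, I would invoke \cref{canongeoequivalence}, which already establishes strong IQ faithfulness for the (possibly non-convex) geometric interpretation $\geocanon$ of the canonical model. That gives $\Omc \models \alpha$ iff $\geocanon \models \alpha$ for \ELH IQs $\alpha$. The key point is that this result only requires $\canonmodel$ to be a canonical model satisfying the biconditional with \Omc on IQs, which \cref{finitecanmodelprops} supplies for the finite canonical model of \cref{Definition: Canonical Model Assertion} whenever $\alpha$ is in normal form.

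Next, I would bridge from the non-convex to the convex geometric interpretation by applying \cref{convexiffinterpforconcepts}, instantiated with the interpretation $\canonmodel$. This lemma states that $\geointerp \models \alpha$ iff $\geoconvex \models \alpha$ for \ELH IQs in normal form, so specialising with $\I := \canonmodel$ we obtain $\geocanon \models \alpha$ iff $\convexcanon \models \alpha$. Composing the two equivalences yields the desired $\Omc \models \alpha$ iff $\convexcanon \models \alpha$.

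There is no real obstacle here, since all the conceptual work has been done earlier: the finite canonical model is built so that IQ and CI entailment match the ontology (\cref{finitecanmodelprops}), the non-convex $\mu$-based geometric interpretation faithfully reflects IQ entailment (\cref{canongeoequivalence}), and the convex hull construction preserves satisfaction of normalized IQs (\cref{convexiffinterpforconcepts}, whose delicate direction is \cref{reminderanchor}, where binary ``anchor'' vectors inside the convex hull are extracted via \cref{binarycorollary}). The only care needed in writing the proof is to note that the normal-form restriction on $\alpha$ is exactly what is required to invoke \cref{convexiffinterpforconcepts}, which is the step that could otherwise fail for arbitrary \ELH concepts.
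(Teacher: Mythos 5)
Your proposal is correct and follows exactly the same two-step chain as the paper's proof: \cref{canongeoequivalence} to relate $\Omc$ and $\geocanon$ on IQs, then \cref{convexiffinterpforconcepts} instantiated at $\canonmodel$ to pass from $\geocanon$ to $\convexcanon$. Your remark about the normal-form restriction being exactly what licenses the second step matches the role it plays in the paper.
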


\begin{proof}
Since $\canonmodel$ is canonical, $\canonmodel \models \alpha$ iff $\mathcal{O} \models \alpha$. By \cref{canongeoequivalence}, we know $\Omc \models \alpha$ iff $\geocanon \models \alpha$. 
By \cref{convexiffinterpforconcepts}, we have that if \(\alpha\) is an \ELH IQ in normal form then $\geointerp \models \alpha$ iff $\geoconvex \models \alpha$.
This means $\geocanon \models \alpha$ iff $\convexcanon \models \alpha$. 
Hence, $\convexcanon \models \alpha$ iff $\mathcal{O} \models \alpha$.
\end{proof}

\begin{lemma}
\label{conceptinclusionconvex}
    For all $C, D$, it is the case that $\interp \models C \sqsubseteq D$ iff $\geoconvex \models C \sqsubseteq D$, where $C \sqsubseteq D$ is a TBox axiom.
\end{lemma}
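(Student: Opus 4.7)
My plan is first to invoke \cref{tboxnonconvex}, which already gives $\interp \models C \sqsubseteq D$ iff $\geointerp \models C \sqsubseteq D$; what remains is to establish the equivalence $\geointerp \models C \sqsubseteq D$ iff $\geoconvex \models C \sqsubseteq D$. I would then split by the three normal forms of normalized \ELH CIs: $A_1 \sqcap A_2 \sqsubseteq B$, $\exists r.A \sqsubseteq B$, and $A \sqsubseteq \exists r.B$. The bridging device in each case is to unpack a vector of a convex hull as a convex combination $\sum_i \lambda_i \mu(d_i)$ of binary anchors (in the sense of \cref{mu2definition}) and transfer inclusion pointwise using the classical semantics of \ELH.

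\textbf{Forward direction.} Assuming $\geointerp \models C \sqsubseteq D$, I would pick $v \in \geoconvex(C)$, expand it via \cref{defconvexhull}, and exploit \cref{Lemma: v in eta estrela A} to read off classical membership from dimension values. For $A_1 \sqcap A_2 \sqsubseteq B$: from $v \in \geoconvex(A_1) \cap \geoconvex(A_2)$, \cref{Lemma: v in eta estrela A} gives $v[A_1] = v[A_2] = 1$; in any representation $v = \sum_i \lambda_i \mu(d_i)$ with $d_i \in A_1^\interp$, the identity $v[A_2]=1$ forces $\mu(d_i)[A_2]=1$ (hence $d_i \in A_2^\interp$) for every active index, so $d_i \in A_1^\interp \cap A_2^\interp \subseteq B^\interp$, placing $v$ in $\geoconvex(B)$. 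For $\exists r.A \sqsubseteq B$: from a $\geoconvex$-witness $u \in \geoconvex(A)$ with $v \oplus u \in \geoconvex(r)$, expand $v \oplus u = \sum_k \gamma_k (\mu(d_k) \oplus \mu(e_k))$ with $(d_k,e_k) \in r^\interp$; the equality $u[A]=1$ forces $e_k \in A^\interp$ for each active $k$, so $d_k \in (\exists r.A)^\interp \subseteq B^\interp$ and $v \in \geoconvex(B)$. For $A \sqsubseteq \exists r.B$: starting from $v = \sum_i \lambda_i \mu(d_i)$ with $d_i \in A^\interp$, I would \emph{construct} the witness by picking for each $d_i$ some $e_i \in B^\interp$ with $(d_i,e_i) \in r^\interp$ and setting $u := \sum_i \lambda_i \mu(e_i)$; linearity of $\oplus$ gives $v \oplus u = \sum_i \lambda_i (\mu(d_i)\oplus \mu(e_i)) \in \geoconvex(r)$ and $u \in \geoconvex(B)$.

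\textbf{Backward direction and main obstacle.} For the converse, my plan is to take $v \in \geointerp(C)$, lift via \cref{convexsubsetobvious} to $v \in \geoconvex(C) \subseteq \geoconvex(D)$, and then descend. When $D = B$ is atomic, descent is immediate by binarity of $v$ and \cref{binarycorollary}. The delicate case is $D = \exists r.B$: the ambient $\geoconvex$-witness $u$ need not be binary, so a classical $\geointerp$-witness for $v$ is not visible at first sight. The plan is to take a representation $v \oplus u = \sum_k \gamma_k (\mu(d_k) \oplus \mu(e_k))$ with $(d_k,e_k) \in r^\interp$ and apply \cref{nonbinarydifference} to the prefix $v = \sum_k \gamma_k \mu(d_k)$; since $v$ is binary this pins down $v = \mu(d_{k_0})$ for some active index $k_0$. \cref{Lemma: v in eta estrela A} applied to $u[B]=1$ then forces $e_k \in B^\interp$ for all active $k$, in particular $e_{k_0} \in B^\interp$, furnishing the required binary witness $\mu(e_{k_0}) \in \geointerp(B)$ with $v \oplus \mu(e_{k_0}) \in \geointerp(r)$. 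This last manoeuvre — collapsing a non-binary convex-hull witness back to a single binary anchor that the classical semantics of \ELH can process — is the main obstacle of the proof.
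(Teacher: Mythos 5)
Your proposal is correct, and at the top level it follows the same skeleton as the paper's proof: reduce to the non-convex interpretation via \cref{tboxnonconvex}, then relate $\geointerp$ and $\geoconvex$ through convex combinations of the binary anchors $\mu(d)$, using \cref{convexsubsetobvious} to lift and \cref{binarycorollary} to descend. The difference is in the level of detail, and it is in your favour. The printed proof obtains $\geoconvex(C)\subseteq\geoconvex(D)$ from $\geointerp(C)\subseteq\geointerp(D)$ by a one-line appeal to \cref{convexsubsetobvious}, and descends from $\mu(d)\in\geoconvex(D)$ to $\mu(d)\in\geointerp(D)$ by \cref{binarycorollary}; strictly speaking both steps presuppose that $\geoconvex(C)$ equals the convex hull of $\geointerp(C)$ for the compound concepts $A_1\sqcap A_2$ and $\exists r.A$, which is not immediate from \cref{def:geoconvex} (hulls are taken only of concept and role names, and compound concepts are evaluated by the geometric semantics, so e.g.\ $\geoconvex(A_1)\cap\geoconvex(A_2)$ could a priori exceed $(\geointerp(A_1\sqcap A_2))^*$, and a $\geoconvex$-witness for $\exists r.B$ need not be binary). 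Your case analysis by normal form supplies exactly the missing content: the observation that a convex combination of binary vectors with value $1$ in a coordinate forces every actively weighted anchor to have value $1$ there handles conjunction and $\exists r.A\sqsubseteq B$; the constructed witness $u=\sum_i\lambda_i\mu(e_i)$ handles $A\sqsubseteq\exists r.B$; and pinning down $v=\mu(d_{k_0})$ handles the backward case $D=\exists r.B$ that the paper's proof glosses over. One small point to make explicit: before invoking \cref{binarycorollary} on the prefix of $v\oplus u=\sum_k\gamma_k(\mu(d_k)\oplus\mu(e_k))$, discard the terms with $\gamma_k=0$, so that the index $k_0$ it produces is guaranteed to carry positive weight; otherwise the step deducing $e_{k_0}\in B^\interp$ from $u[B]=1$ is not licensed. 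With that caveat your argument is sound and, if anything, more complete than the one in the paper.
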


\begin{proof}
    
Let $C, D$ be \EL concepts. We prove the statement in two directions.

Assume $\Imc \models C \sqsubseteq D$. By \cref{tboxnonconvex}, we know $\Imc \models C \sqsubseteq D$ iff $\geointerp \models C \sqsubseteq D$, which means $\geointerp(C) \subseteq \geointerp(D)$. By \cref{convexsubsetobvious}, this implies $\geoconvex(C) \subseteq \geoconvex(D)$.  Finally, by the definition of satisfaction, this is the case iff $\geoconvex \models C \sqsubseteq D$. Now assume $\geoconvex \models C \sqsubseteq D$. Then, by the semantics of geometric interpretation, $\geoconvex(C) \subseteq \geoconvex(D)$. This means if $v \in \geoconvex(C)$, then $v \in \geoconvex(D)$, with $v = \sum_{i=1}^n \lambda_i v_i$ and $v_1, \ldots, v_n \in \geointerp(C)$. So, assume $C^\Imc$ is non-empty. Then, there is $d \in C^\Imc$, which, by \cref{unamedconceptassertmu2} is the case iff $\mu(d) \in \geointerp(C)$. By the definition of convex hull, $\mu(d) \in \geoconvex(C)$. By assumption, $\mu(d) \in \geoconvex(D)$, and since $\mu(d)$ is binary, \cref{binarycorollary} gives us that $\mu(d) \in \geointerp(D)$. But again by \cref{unamedconceptassertmu2}, this is the case iff $d \in D^\Imc$. Since $d$ was arbitrary, we have $\Imc \models C \sqsubseteq D$.
\end{proof}

\begin{lemma}
    For all $r,s \in N_R$, it is the case that $\Imc \models r \sqsubseteq s$ iff $\geoconvex \models r \sqsubseteq s$.
    \label{roleinclusionconvexfinal}
\end{lemma}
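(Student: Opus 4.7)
The plan is to mirror the structure of the concept-inclusion analogue (Lemma~\ref{conceptinclusionconvex}), invoking the non-convex role-inclusion result (Lemma~\ref{roleinclusionnonconvexmu2}) as a bridge and then using the convex-hull machinery (Proposition~\ref{convexsubsetobvious} and Corollary~\ref{binarycorollary}) to pass between $\geointerp$ and $\geoconvex$. The key observation driving everything is that for any $d,e \in \Delta^\Imc$, the vector $\mu(d)\oplus\mu(e)$ is binary (by construction of $\mu$, all its coordinates lie in $\{0,1\}$), so that Corollary~\ref{binarycorollary} applies whenever we have such a concatenated vector in the convex hull.

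For the forward direction, I would assume $\Imc \models r \sqsubseteq s$. By Lemma~\ref{roleinclusionnonconvexmu2} we obtain $\geointerp \models r \sqsubseteq s$, i.e., $\geointerp(r) \subseteq \geointerp(s)$. Applying Proposition~\ref{convexsubsetobvious} gives $\geointerp(r)^\ast \subseteq \geointerp(s)^\ast$, and since by Definition~\ref{def:geoconvex} we have $\geoconvex(r) = \geointerp(r)^\ast$ and $\geoconvex(s) = \geointerp(s)^\ast$, this yields $\geoconvex(r) \subseteq \geoconvex(s)$, i.e., $\geoconvex \models r \sqsubseteq s$.

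For the backward direction, assume $\geoconvex \models r \sqsubseteq s$, so $\geoconvex(r) \subseteq \geoconvex(s)$. Let $(d,e) \in r^\Imc$ be arbitrary. Then by Definition~\ref{def:geointerp2} the binary vector $v := \mu(d) \oplus \mu(e)$ lies in $\geointerp(r)$, hence in $\geoconvex(r) = \geointerp(r)^\ast$, and by assumption in $\geoconvex(s) = \geointerp(s)^\ast$. Because $v$ is binary, Corollary~\ref{binarycorollary} gives $v \in \geointerp(s)$. By Lemma~\ref{vinetarimpliesvmud2} (or directly from Definition~\ref{def:geointerp2}, since $\mu(d)[s,e]$ must equal $1$), this forces $(d,e) \in s^\Imc$. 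Since $(d,e)$ was arbitrary, $r^\Imc \subseteq s^\Imc$, so $\Imc \models r \sqsubseteq s$.

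The potential obstacle is the backward direction, where one must guarantee that membership of a binary vector in $\geoconvex(s)$ collapses to membership in $\geointerp(s)$ rather than only in some non-trivial convex combination; this is precisely the content of Corollary~\ref{binarycorollary}, and the fact that $\mu(d)\oplus\mu(e)$ is binary is what makes it applicable. No further subtlety is expected, so the proof should be short and parallel in form to Lemma~\ref{conceptinclusionconvex}.
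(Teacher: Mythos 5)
Your proposal matches the paper's proof essentially step for step: the forward direction goes through \cref{roleinclusionnonconvexmu2} and \cref{convexsubsetobvious}, and the backward direction picks an arbitrary $(d,e)\in r^\Imc$, notes that $\mu(d)\oplus\mu(e)$ is binary, and applies \cref{binarycorollary} to collapse membership in $\geoconvex(s)$ to membership in $\geointerp(s)$ before reading off $(d,e)\in s^\Imc$ from the definition of $\mu$. This is the same approach as the paper, correctly identifying the binary-vector collapse as the crux.
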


\begin{proof}

    First, assume $\Imc \models r \sqsubseteq s$. By \cref{roleinclusionnonconvexmu2}, we know $\Imc \models r \sqsubseteq s$ iff $\geointerp \models r \sqsubseteq s$, which means $\geointerp(r) \subseteq \geointerp(s)$. By \cref{convexsubsetobvious}, this implies $\geoconvex(r) \subseteq \geoconvex(s)$, which, by the definition of satisfaction is the case iff $\geoconvex \models r \sqsubseteq s$.
    
    Assume $\geoconvex \models r \sqsubseteq s$. Then, by the semantics of geometric interpretation, $\geoconvex(r) \subseteq \geoconvex(s)$, which means if $v \in \geoconvex(r)$, then $v \in \geoconvex(s)$, where $v = \sum_{i=1}^n \lambda_i v_i$ for $v_1, \ldots, v_n \in \geointerp(r)$. Assume $r^\Imc$ is non-empty. Then, there must be $(d,e) \in r^\Imc$. We must now show $(d,e) \in s^\Imc$ is true. Since $(d,e) \in r^\Imc$, by the definition of $\geointerp$, we have $\mu(d) \oplus \mu(e) \in \geointerp(r)$ with both $\mu(d)$ and $\mu(e)$ being binary vectors. By the definition of convex hull, $\mu(d) \oplus \mu(e) \in \geoconvex(r)$. Now, by assumption, $\mu(d) \oplus \mu(e) \in \geoconvex(s)$, but since $\mu(d) \oplus \mu(e)$ is binary, by \cref{binarycorollary} we have that $\mu(d) \oplus \mu(e) \in \geointerp(s)$. By definition of $\geointerp$, we have that
    $\mu(d){[s, e]=1}$. By definition of 
    $\mu$, for all $d'$ such that $\mu(d')=\mu(d)$ we have that $(d',e)\in s^\Imc$. In particular, 
    this holds for $d'=d$. So $(d,e) \in s^\Imc$. We have shown that if $(d,e) \in r^\Imc$, then $(d,e) \in s^\Imc$, which is the case iff $\Imc \models r \sqsubseteq s$.
\end{proof}


\onelemmatorulethemall*

\begin{proof}
    
    The result for IQs in normal form follows from \cref{convexiffinterpforconcepts}; the one for concept inclusions follows from \cref{tboxnonconvex,conceptinclusionconvex}; and the one for role inclusion follows from \cref{roleinclusionnonconvexmu2} and from \cref{roleinclusionconvexfinal}.
\end{proof}

\begin{lemma}

\label{theoremcantboxconvexity}
Let \Omc be a normalized \ELH ontology and let \canonmodel be the canonical model of \Omc (\cref{Definition: Canonical Model Assertion}). 
The \dimsymb-dimensional convex $\oplus$-geometric interpretation of \canonmodel (\cref{def:geoconvex}) 
is a strongly TBox faithful model of \Omc.
That is, $\Omc \models \tau$ iff $\convexcanon \models \tau$, where $\tau$ is either a concept inclusion in normal form or a role inclusion.
\end{lemma}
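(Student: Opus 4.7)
The plan is to establish the biconditional by composing two previously proved equivalences: one between classical entailment in the canonical model and ontology entailment, and one between satisfaction in the classical interpretation and satisfaction in the associated convex geometric interpretation. Since $\tau$ is either an \ELH CI in normal form or an \ELH RI, both ingredients are available in the exact shape I need.

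First, I would invoke \cref{finitecanmodelprops}, which states that for all CIs in normal form and for all RIs over ${\sf sig}(\Omc)$, we have $\canonmodel \models \tau$ iff $\Omc \models \tau$. This reduces the problem to showing $\convexcanon \models \tau$ iff $\canonmodel \models \tau$, i.e., that the convex hull construction preserves and reflects satisfaction of TBox axioms when applied to the canonical interpretation.

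Next, I would apply \cref{aggregatedlemma} to the (possibly non-convex) geometric interpretation $\geocanon$ of $\canonmodel$, obtaining $\geocanon \models \tau$ iff $\convexcanon \models \tau$ for any CI in normal form or RI. Combining this with \cref{aggregatedlemmanonconvexmutwo}, which gives $\canonmodel \models \tau$ iff $\geocanon \models \tau$, yields $\canonmodel \models \tau$ iff $\convexcanon \models \tau$. Chaining with the first step produces the desired equivalence $\Omc \models \tau$ iff $\convexcanon \models \tau$.

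There is no real obstacle here: all the genuine work was done in \cref{finitecanmodelprops} (the nontrivial canonical-model theorem for normalized \ELH, where the RI case and the careful treatment of anonymous elements $c_D$ are delicate) and in \cref{aggregatedlemma} (the passage from $\geointerp$ to $\geoconvex$, which relies on \cref{binarycorollary} and the fact that $\mu$ maps domain elements to binary vectors). The present lemma is the straightforward synthesis of \cref{theoremcaniqconvex} for IQs into the TBox setting, so the proof reduces to citing \cref{finitecanmodelprops} together with \cref{aggregatedlemma} (or, equivalently, with \cref{conceptinclusionconvex} and \cref{roleinclusionconvexfinal}) and chaining the resulting biconditionals. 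Strong TBox faithfulness then follows by Definition~\ref{Definition: Faithfulness}.
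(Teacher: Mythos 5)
Your proposal is correct and follows essentially the same route as the paper: the paper's proof likewise reduces via \cref{finitecanmodelprops} and then establishes $\canonmodel \models \tau$ iff $\convexcanon \models \tau$, citing \cref{conceptinclusionconvex} and \cref{roleinclusionconvexfinal} directly rather than the aggregated \cref{aggregatedlemma,aggregatedlemmanonconvexmutwo}, which package the same content. No gaps.
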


\begin{proof}
   \Cref{finitecanmodelprops} implies that if \(\tau\) is an \ELH CI in normal form or an \ELH RI then $\Omc \models \tau$ iff $\canonmodel \models \tau$. 
   From \cref{conceptinclusionconvex}, we know $\geoconvex \models C \sqsubseteq D$ iff $\Imc \models C \sqsubseteq D$, and by \cref{roleinclusionconvexfinal} we get that $\geoconvex \models r \sqsubseteq s$ iff $\Imc \models r \sqsubseteq s$.
   This means that if \(\tau\) is an \ELH CI in normal form or an \ELH RI then $\canonmodel \models \tau$ iff $\convexcanon \models \tau$.
\end{proof}

\canonicalmutwoconvexTBoxfaithfulboth*

\begin{proof}
The theorem follows from \cref{theoremcaniqconvex,theoremcantboxconvexity}.
\end{proof}

\subsection{Omitted proofs for \cref{sec:modelcheck}}

\algoplexityconceptinclusion*

\begin{proof}
    \Cref{alg:conceptsubalgo} has four main parts that are never executed in the same run, each corresponding to one of the normal forms that the input concept inclusion \(\alpha\) can take.

    \begin{description}
        \item[\(\alpha = A \sqsubseteq B\):] In this case, the algorithm will execute lines \crefrange{CInf1start}{CInf1end}.
        From assumption~\ref{assumptionBasic}, \cref{CInf1check} spends time \(O(1)\) and by assumption~\ref{assumptionIter} this line is run \(O(\sizedelta)\) times. 
        Hence, in this case, the algorithm consumes time \(O(\sizedelta)\).

        \item[\(\alpha = A_1 \sqcap A_2 \sqsubseteq B\):] From assumption~\ref{assumptionIter}, the loop from \crefrange{CInf2loopstart}{CInf2end} is executed \(O(\sizedelta)\) times.
        Each iteration consumes time \(O(1)\) by assumption~\ref{assumptionBasic}. 
        Thus, \cref{alg:conceptsubalgo} runs in time \(O(\sizedelta)\) in this case.

        \item[\(\alpha = A \sqsubseteq \exists r.B\):] 
        According to assumption~\ref{assumptionIter}, the nested loop from \crefrange{CInf3loopstart}{CInf3loopend} uses time \(O(\sizedeltas)\).
        The membership check in \cref{CInf3lookup} takes time \(O(\dimsymb \cdot \sizedeltas)\), by assumption~\ref{assumptionMember}.
        Therefore, we get that \cref{alg:conceptsubalgo} requires time \(O(\dimsymb \cdot \sizedeltaabbrev^4)\), where \(\sizedeltaabbrev = \sizedelta\).

        \item[\(\alpha = \exists r.A \sqsubseteq B\):] \Cref{alg:conceptsubalgo} will execute from \crefrange{CInf4start}{CInf4end} for CIs in this normal form.
        Each iteration of the for loop starting in \cref{CInf4loopstart} consumes constant time according to assumption~\ref{assumptionBasic}.
        Furthermore, the loop has \(O(\sizedeltas)\) iterations due to assumption~\ref{assumptionIter}.
        Hence, \cref{alg:conceptsubalgo} uses time \(O(\sizedeltas)\) for CIs in this normal form.
    \end{description}
    Therefore, \cref{alg:conceptsubalgo} consumes time \(O(\dimsymb \cdot \sizedeltaabbrev^4)\).
\end{proof}

\algoplexitynormIQ*

\begin{proof}
    We consider each the four forms that an \ELH IQ in normal form \(\alpha\) can assume separately.
    In each of them \(a \in \NI\), \(A, B \in \NC\), and \(r \in \NR\).

    \begin{description}
        \item[\(\alpha = A(a)\):] Due to assumptions~\ref{assumptionBasic} and~\ref{assumptionAccess}, \cref{IQnf1check} uses time \(O(1)\).
        \item[\(\alpha = (A \sqcap B)(a)\):] As in the previous case, the assumption~\ref{assumptionBasic} and~\ref{assumptionAccess} imply that \cref{IQnf2check} executes in time \(O(1)\).
        \item[\(\alpha = (\exists r.A)(a)\):] By assumption~\ref{assumptionIter}, \cref{IQnf3lookup} is run \(O(\sizedelta)\) times, each iteration consuming time in \(O(\dimsymb \cdot \sizedeltas)\) (from assumptions~\ref{assumptionAccess} and~\ref{assumptionMember}).
        Therefore, \cref{alg:IQs} spends time \(O(\dimsymb \cdot \sizedeltaabbrev^3)\) in such instance queries, where \(\sizedeltaabbrev = \sizedelta\).
        \item[\(\alpha = r(a, b)\):] \Cref{IQnf4check} runs in time \(O(\dimsymb \cdot \sizedeltas)\) due to assumptions~\ref{assumptionAccess} and~\ref{assumptionMember}.
    \end{description}
   
    Therefore, \cref{alg:IQs} consumes time \(O(\dimsymb \cdot \sizedeltaabbrev^3)\).
\end{proof}

\algoplexitynormRI*

\begin{proof}
    There are \(O(\sizedeltas)\) iterations of the for loop starting in \cref{RIloop} in a single run of \cref{alg:RIs} as a consequence of the assumption~\ref{assumptionIter}.
    Additionally, each iteration consumes time \(O(\dimsymb \cdot \sizedeltas)\) by assumption~\ref{assumptionMember}.
    Therefore, \cref{alg:RIs} runs in time \(O(\dimsymb \cdot \sizedeltaabbrev^4)\), where \(\sizedeltaabbrev = \sizedelta\).
\end{proof}




\bibliography{tgdk-v2021-sample-article}

\begin{thebibliography}{10}

\bibitem{BoxE}
Ralph Abboud, Ismail Ceylan, Thomas Lukasiewicz, and Tommaso Salvatori.
\newblock Box{E}: A box embedding model for knowledge base completion.
\newblock In H.~Larochelle, M.~Ranzato, R.~Hadsell, M.~F. Balcan, and H.~Lin, editors, {\em Advances in Neural Information Processing Systems}, volume~33, page 9649–9661. Curran Associates, Inc., 2020.
\newblock URL: \url{https://proceedings.neurips.cc/paper_files/paper/2020/file/6dbbe6abe5f14af882ff977fc3f35501-Paper.pdf}, \href {https://doi.org/10.5555/3495724.3496533} {\path{doi:10.5555/3495724.3496533}}.

\bibitem{DlIntro}
Franz Baader, Ian Horrocks, Carsten Lutz, and Uli Sattler.
\newblock {\em An Introduction to Description Logic}.
\newblock Cambridge University Press, USA, 1st edition, 2017.
\newblock \href {https://doi.org/10.1017/9781139025355} {\path{doi:10.1017/9781139025355}}.

\bibitem{TransE}
Antoine Bordes, Nicolas Usunier, Alberto Garcia-Duran, Jason Weston, and Oksana Yakhnenko.
\newblock Translating embeddings for modeling multi-relational data.
\newblock In C.~J. Burges, L.~Bottou, M.~Welling, Z.~Ghahramani, and K.~Q. Weinberger, editors, {\em Advances in Neural Information Processing Systems}, volume~26. Curran Associates, Inc., 2013.
\newblock URL: \url{https://proceedings.neurips.cc/paper_files/paper/2013/file/1cecc7a77928ca8133fa24680a88d2f9-Paper.pdf}, \href {https://doi.org/10.5555/2999792.2999923} {\path{doi:10.5555/2999792.2999923}}.

\bibitem{Institut_für_Theoretische_Informatik_2015}
Stefan Borgwardt and Veronika Thost.
\newblock {\em LTL over EL Axioms}.
\newblock 2015.
\newblock URL: \url{https://nbn-resolving.org/urn:nbn:de:bsz:14-qucosa2-795604}, \href {https://doi.org/10.25368/2022.213} {\path{doi:10.25368/2022.213}}.

\bibitem{Bourgaux_Guimarães_Koudijs_Lacerda_Ozaki_2024}
Camille Bourgaux, Ricardo Guimarães, Raoul Koudijs, Victor Lacerda, and Ana Ozaki.
\newblock Knowledge base embeddings: Semantics and theoretical properties.
\newblock (arXiv:2408.04913), August 2024.
\newblock arXiv:2408.04913, [cs].
\newblock URL: \url{http://arxiv.org/abs/2408.04913}.

\bibitem{DBLP:conf/dlog/BourgauxOP21}
Camille Bourgaux, Ana Ozaki, and Jeff~Z. Pan.
\newblock Geometric models for (temporally) attributed description logics.
\newblock In Martin Homola, Vladislav Ryzhikov, and Renate~A. Schmidt, editors, {\em {DL}}, volume 2954 of {\em {CEUR} Workshop Proceedings}. CEUR-WS.org, 2021.
\newblock URL: \url{https://ceur-ws.org/Vol-2954/paper-7.pdf}.

\bibitem{OWL2Vec}
Jiaoyan Chen, Pan Hu, Ernesto Jimenez-Ruiz, Ole~Magnus Holter, Denvar Antonyrajah, and Ian Horrocks.
\newblock Owl2vec*: embedding of owl ontologies.
\newblock {\em Machine Learning}, 110(7):1813–1845, Jul 2021.
\newblock \href {https://doi.org/10.1007/s10994-021-05997-6} {\path{doi:10.1007/s10994-021-05997-6}}.

\bibitem{Dai2020}
Yuanfei Dai, Shiping Wang, Neal~N. Xiong, and Wenzhong Guo.
\newblock {A Survey on Knowledge Graph Embedding: Approaches, Applications and Benchmarks}.
\newblock {\em Electronics}, 9(5):750, may 2020.
\newblock \href {https://doi.org/10.3390/electronics9050750} {\path{doi:10.3390/electronics9050750}}.

\bibitem{TransOWL}
Claudia d’Amato, Nicola~Flavio Quatraro, and Nicola Fanizzi.
\newblock Injecting background knowledge into embedding models for predictive tasks on knowledge graphs.
\newblock In Ruben Verborgh, Katja Hose, Heiko Paulheim, Pierre-Antoine Champin, Maria Maleshkova, Oscar Corcho, Petar Ristoski, and Mehwish Alam, editors, {\em The Semantic Web}, page 441–457. Springer International Publishing, 2021.

\bibitem{DBLP:conf/kr/Gutierrez-Basulto18}
V{\'{\i}}ctor Guti{\'{e}}rrez{-}Basulto and Steven Schockaert.
\newblock From knowledge graph embedding to ontology embedding? an analysis of the compatibility between vector space representations and rules.
\newblock In Michael Thielscher, Francesca Toni, and Frank Wolter, editors, {\em {KR}}, pages 379--388. {AAAI} Press, 2018.
\newblock URL: \url{https://aaai.org/ocs/index.php/KR/KR18/paper/view/18013}, \href {https://doi.org/10.4230/OASIcs.AIB.2022.3} {\path{doi:10.4230/OASIcs.AIB.2022.3}}.

\bibitem{Gardenfors_2000}
Peter Gärdenfors.
\newblock {\em Conceptual Spaces: The Geometry of Thought}.
\newblock The MIT Press, Mar 2000.
\newblock \href {https://doi.org/10.7551/mitpress/2076.001.0001} {\path{doi:10.7551/mitpress/2076.001.0001}}.

\bibitem{SemWeb}
Pascal Hitzler, Markus Kr{\"o}tzsch, and Sebastian Rudolph.
\newblock {\em Foundations of Semantic Web Technologies}.
\newblock Chapman \&{} Hall/CRC, 2009.

\bibitem{Imenes}
Anders Imenes, Ricardo Guimar\~{a}es, and Ana Ozaki.
\newblock Marrying query rewriting and knowledge graph embeddings.
\newblock In {\em Rule{ML}+{RR}}, page 126–140. Springer-Verlag, 2023.
\newblock \href {https://doi.org/10.1007/978-3-031-45072-3_9} {\path{doi:10.1007/978-3-031-45072-3_9}}.

\bibitem{Box2EL}
Mathias Jackermeier, Jiaoyan Chen, and Ian Horrocks.
\newblock Dual box embeddings for the description logic el\({}^{\mbox{++}}\).
\newblock In Tat{-}Seng Chua, Chong{-}Wah Ngo, Ravi Kumar, Hady~W. Lauw, and Roy~Ka{-}Wei Lee, editors, {\em Proceedings of the {ACM} on Web Conference, {WWW}}, pages 2250--2258. {ACM}, 2024.
\newblock \href {https://doi.org/10.1145/3589334.3645648} {\path{doi:10.1145/3589334.3645648}}.

\bibitem{Kulmanov2019}
Maxat Kulmanov, Wang Liu{-}Wei, Yuan Yan, and Robert Hoehndorf.
\newblock {EL} embeddings: Geometric construction of models for the description logic {EL++}.
\newblock pages 6103--6109, 2019.
\newblock \href {https://doi.org/10.24963/ijcai.2019/845} {\path{doi:10.24963/ijcai.2019/845}}.

\bibitem{FaithEL}
Victor Lacerda, Ana Ozaki, and Ricardo Guimarães.
\newblock Faithel: Strongly tbox faithful knowledge base embeddings for $\mathcal{EL}$.
\newblock In Sabrina Kirrane, Mantas Šimkus, Ahmet Soylu, and Dumitru Roman, editors, {\em Rules and Reasoning}, page 191–199, Cham, 2024. Springer Nature Switzerland.

\bibitem{TransR}
Yankai Lin, Zhiyuan Liu, Maosong Sun, Yang Liu, and Xuan Zhu.
\newblock Learning entity and relation embeddings for knowledge graph completion.
\newblock {\em Proceedings of the AAAI Conference on Artificial Intelligence}, 29(1), Feb 2015.
\newblock URL: \url{https://ojs.aaai.org/index.php/AAAI/article/view/9491}, \href {https://doi.org/10.1609/aaai.v29i1.9491} {\path{doi:10.1609/aaai.v29i1.9491}}.

\bibitem{Lutz_Wolter_2010}
Carsten Lutz and Frank Wolter.
\newblock Deciding inseparability and conservative extensions in the description logic el.
\newblock {\em Journal of Symbolic Computation}, 45(2):194–228, Feb 2010.
\newblock \href {https://doi.org/10.1016/j.jsc.2008.10.007} {\path{doi:10.1016/j.jsc.2008.10.007}}.

\bibitem{EmEL}
Sutapa Mondal, Sumit Bhatia, and Raghava Mutharaju.
\newblock Emel++: Embeddings for {EL++} description logic.
\newblock In Andreas Martin, Knut Hinkelmann, Hans{-}Georg Fill, Aurona Gerber, Doug Lenat, Reinhard Stolle, and Frank van Harmelen, editors, {\em {AAAI-MAKE}}, volume 2846 of {\em {CEUR} Workshop Proceedings}. CEUR-WS.org, 2021.
\newblock URL: \url{https://ceur-ws.org/Vol-2846/paper19.pdf}.

\bibitem{Oezcep2020}
{\"{O}}zg{\"{u}}r~L{\"{u}}tf{\"{u}} {\"{O}}z{\c{c}}ep, Mena Leemhuis, and Diedrich Wolter.
\newblock Cone semantics for logics with negation.
\newblock In Christian Bessiere, editor, {\em {IJCAI}}, pages 1820--1826. ijcai.org, 2020.
\newblock \href {https://doi.org/10.24963/ijcai.2020/252} {\path{doi:10.24963/ijcai.2020/252}}.

\bibitem{ExpressivE}
Aleksandar Pavlovic and Emanuel Sallinger.
\newblock Expressiv{E}: {A} spatio-functional embedding for knowledge graph completion.
\newblock In {\em The Eleventh International Conference on Learning Representations, {ICLR} 2023, Kigali, Rwanda, May 1-5, 2023}. OpenReview.net, 2023.
\newblock URL: \url{https://openreview.net/pdf?id=xkev3\_np08z}.

\bibitem{ElEM}
Xi~Peng, Zhenwei Tang, Maxat Kulmanov, Kexin Niu, and Robert Hoehndorf.
\newblock Description logic {EL++} embeddings with intersectional closure.
\newblock {\em CoRR}, abs/2202.14018, 2022.
\newblock URL: \url{https://arxiv.org/abs/2202.14018}, \href {https://doi.org/10.48550/arXiv.2202.14018} {\path{doi:10.48550/arXiv.2202.14018}}.

\bibitem{ELBE}
Xi~Peng, Zhenwei Tang, Maxat Kulmanov, Kexin Niu, and Robert Hoehndorf.
\newblock Description logic {EL++} embeddings with intersectional closure.
\newblock {\em CoRR}, abs/2202.14018, 2022.
\newblock URL: \url{https://arxiv.org/abs/2202.14018}, \href {https://arxiv.org/abs/2202.14018} {\path{arXiv:2202.14018}}.

\bibitem{ComplEx}
Théo Trouillon, Johannes Welbl, Sebastian Riedel, \'{E}ric Gaussier, and Guillaume Bouchard.
\newblock Complex embeddings for simple link prediction.
\newblock (arXiv:1606.06357), Jun 2016.
\newblock arXiv:1606.06357 [cs, stat].
\newblock URL: \url{http://arxiv.org/abs/1606.06357}, \href {https://doi.org/10.48550/arXiv.1606.06357} {\path{doi:10.48550/arXiv.1606.06357}}.

\bibitem{Wikidata}
Denny Vrande\v{c}i\'{c} and Markus Kr\"{o}tzsch.
\newblock Wikidata: A free collaborative knowledgebase.
\newblock {\em Commun. ACM}, 57(10):78–85, sep 2014.
\newblock \href {https://doi.org/10.1145/2629489} {\path{doi:10.1145/2629489}}.

\bibitem{BoxEL}
Bo~Xiong, Nico Potyka, Trung-Kien Tran, Mojtaba Nayyeri, and Steffen Staab.
\newblock Faithful embeddings for {EL++} knowledge bases.
\newblock In {\em The Semantic Web {\textendash} {ISWC} 2022}, pages 22--38. Springer International Publishing, 2022.
\newblock \href {https://doi.org/10.1007/978-3-031-19433-7_2} {\path{doi:10.1007/978-3-031-19433-7_2}}.

\bibitem{DistMult}
Bishan Yang, Wen-tau Yih, Xiaodong He, Jianfeng Gao, and Li~Deng.
\newblock Embedding entities and relations for learning and inference in knowledge bases.
\newblock (arXiv:1412.6575), Aug 2015.
\newblock arXiv:1412.6575 [cs].
\newblock URL: \url{http://arxiv.org/abs/1412.6575}.

\bibitem{Zenker_Gardenfors_2015}
Frank Zenker and Peter Gärdenfors.
\newblock {\em Applications of Conceptual Spaces: The Case for Geometric Knowledge Representation}, volume 359 of {\em Synthese Library}.
\newblock Springer International Publishing, 2015.
\newblock URL: \url{https://link.springer.com/10.1007/978-3-319-15021-5}, \href {https://doi.org/10.1007/978-3-319-15021-5} {\path{doi:10.1007/978-3-319-15021-5}}.

\end{thebibliography}
\end{document}